\DeclareSIUnit{\belisotropic}{Bi}
\DeclareSIUnit{\dBi}{\deci\belisotropic}
\DeclareMathOperator*{\argmax}{arg\,max}
\DeclareMathOperator*{\argmin}{arg\,min}
\def\BibTeX{{\rm B\kern-.05em{\sc i\kern-.025em b}\kern-.08em
    T\kern-.1667em\lower.7ex\hbox{E}\kern-.125emX}}
\pgfplotsset{compat=1.18}
\begin{document}

\begin{acronym}
    \acro{faa}[FAA]{Federal Aviation Administration}
    \acro{euro}[EUROCONTROL]{European Organization for the Safety of Air Navigation}
    \acro{icao}[ICAO]{International Civil Aviation Organisation}
    \acro{fci}[FCI]{future communications infrastructure}
    \acro{sesar}[SESAR]{Single European Sky ATM Research}
    
    \acro{atm}[ATM]{air traffic management}
    \acro{dme}[DME]{distance measuring equipment}
    \acro{ldacs}[LDACS]{L-band Digital Aeronautical Communications System}
    \acro{vhf}[VHF]{very high-frequency}

    \acro{as}[AS]{air station}
    \acro{gs}[GS]{ground station}
    \acro{ag}[AG]{air-ground}
    \acro{rl}[RL]{reverse link}
    \acro{fl}[FL]{forward link}
    
    \acro{msl}[MSL]{mean sea level}
    \acro{nm}[NM]{nautical miles}
    \acro{rf}[RF]{radio frequency}
    
    \acro{tl}[TL]{takeoff \& landing}
    \acro{cd}[CD]{climb \& descent}
    \acro{ec}[EC]{enroute cruise}
    
    \acro{fspl}[FSPL]{free-space path loss}
    \acro{mpc}[MPC]{multipath component}
    \acro{los}[LOS]{line-of sight}
    \acro{gmp}[GMP]{ground multipath component}
    \acro{sr}[SR]{specular reflection}
    \acro{lmp}[LMPs]{lateral multipath components}
    
    \acro{upa}[UPA]{uniform planar rectangular array}
    \acro{uca}[UCA]{uniform circular array}
    \acro{sa}[SA]{single-omnidirectional antenna}
    \acro{ma}[MA]{multiple antenna}
    \acro{hpbw}[HPBW]{half-power beamwidth}
    \acro{siso}[SISO]{single input single output}
    \acro{simo}[SIMO]{single input multiple output}
    \acro{miso}[MISO]{multiple input single output}
    \acro{mimo}[MIMO]{multiple input multiple output}
    \acro{su-simo}[SU-SIMO]{single-user single input multiple output}
    \acro{mu-mimo}[MU-MIMO]{multiuser multiple input multiple output}
    \acro{mu-simo mac}[MU-SIMO MAC]{multiuser single input multiple output multiple-access channel}
    \acro{mac}[MAC]{multiple-access channel}
    \acro{noma}[NOMA]{non-orthogonal multiple access}
    \acro{oma}[OMA]{orthogonal multiple access}
    \acro{cgtr}[CGTR]{channel gain and transmission rate}
    
    \acro{zf}[ZF]{zero forcing}
    \acro{ssa}[SSA]{single successive algorithm}
    \acro{gsa}[GSA]{group successive algorithm}
    \acro{lgsa}[LGSA]{limited group successive algorithm}
    \acro{vblast}[V-BLAST]{vertical-bell laboratories layered space-time}
    \acro{sic}[SIC]{successive interference cancellation}
    \acro{isu}[ISU]{independent single-user}
    \acro{mmse}[MMSE]{minimum mean squared error}
    \acro{mmse-sic}[MMSE-SIC]{minimum mean squared error successive interference canceller}
    \acro{zf-sic}[ZF-SIC]{zero forcing successive interference canceller}
    \acro{bps}[bps/Hz]{bits per second per hertz}
    
    \acro{snr}[SNR]{signal-to-noise ratio}
    \acro{sir}[SIR]{signal-to-interference-ratio}
    \acro{sinr}[SINR]{signal-to-noise-interference-ratio}
    \acro{ber}[BER]{bit error rate}
    \acro{awgn}[AWGN]{additive white Gaussian noise}
    \acro{qpsk}[QPSK]{quadrature phase shift keying}
    \acro{cr}[CR]{coding rate}
    \acro{rs}[RS]{receiver sensitivity}

    \acro{pdf}[PDF]{probability density function}
    \acro{cdf}[CDF]{cumulative density function}
    
    \acro{doa}[DOA]{direction of arrival}
    \acro{dod}[DOD]{direction of departure}
    
    \acro{cir}[UCA]{uniform circular array}
    \acro{rec}[UPRA]{uniform planar rectangular array}
    \acro{sun}[UFSA]{uniform Fermat spiral array}
  
    \acro{tdma}[TDMA]{time division multiple access}
    \acro{ofdma}[OFDMA]{orthogonal frequency-division multiple access}
    
    \acro{rhs}[RHS]{right hand side}
\end{acronym}

%Ayten Gürbüz
\newcommand{\ag}[1]{\textcolor{red}{#1}}
\newcommand{\gc}[1]{\textcolor{orange}{#1}}
\newcommand{\addrf}[1]{\textcolor{red}{#1}}

\newcommand{\rewone}[1]{\textcolor{black}{#1}}
\newcommand{\rewtwo}[1]{\textcolor{black}{#1}}
\newcommand{\rewthree}[1]{\textcolor{black}{#1}}
\newcommand{\rewfour}[1]{\textcolor{black}{#1}}

% Captions
%\renewcommand{\figurename}{Fig.}
%\setlength{\belowcaptionskip}{-5pt}
%\begin{comment}
\setlength{\abovecaptionskip}{0pt}
\setlength{\belowcaptionskip}{3pt}
\setlength{\intextsep}{1mm}
\setlength{\textfloatsep}{1mm}
\setlength{\dblfloatsep}{2mm}
\setlength{\dbltextfloatsep}{2mm}
%\end{comment}

%\captionsetup[subfigure]{font=small, labelformat=empty}

%\captionsetup{belowskip=0pt}

%% Algorithm

\newlength\myindent
\setlength\myindent{2em}
\newcommand\bindent{%
  \begingroup
  \setlength{\itemindent}{\myindent}
  \addtolength{\algorithmicindent}{\myindent}
}
\newcommand\eindent{\endgroup}

\newtheorem{theorem}{Theorem}[section]
\newtheorem{corollary}{Corollary}[theorem]
\newtheorem{lemma}[theorem]{Lemma}

\algnewcommand{\algorithmicforeach}{\textbf{for each}}
\algdef{SE}[FOR]{ForEach}{EndForEach}[1]
  {\algorithmicforeach\ #1\ \algorithmicdo}% \ForEach{#1}
  {\algorithmicend\ \algorithmicforeach}% \EndForEach
\algnewcommand{\LineComment}[1]{\State \(\triangleright\) #1}

\algrenewcommand\algorithmicrequire{\textbf{Input:}}
\algrenewcommand\algorithmicensure{\textbf{Output:}}

\newcommand\Algphase[1]{%
\vspace*{-.4\baselineskip}\Statex\hspace*{\dimexpr-\algorithmicindent-2pt\relax}\rule{0.49\textwidth}{0.1pt}%
\Statex\vspace{-.3\baselineskip}\hspace*{-\algorithmicindent}{#1}%
\vspace*{-.7\baselineskip}\Statex\hspace*{\dimexpr-\algorithmicindent-2pt\relax}\rule{0.49\textwidth}{0.1pt}%
}

\newcommand\AlgphaseV[1]{%
%\vspace*{-.5\baselineskip}\Statex\hspace*{\dimexpr-\algorithmicindent-2pt\relax}\rule{0.49\textwidth}{0.4pt}%
\Statex\hspace*{-\algorithmicindent}{#1}%
\vspace*{-.7\baselineskip}\Statex\hspace*{\dimexpr-\algorithmicindent-2pt\relax}\rule{0.49\textwidth}{0.4pt}%
}

%\algtext*{EndWhile}% Remove "end while" text

%\algtext*{EndIf}% Remove "end if" text

%\algtext*{EndForEach}% Remove "end while" text

\algnewcommand\algorithmicIfT{\textbf{if}}
\algdef{SE}[IFT]{IfT}{EndIfT}[1]
  {\algorithmicIfT\ #1}% \ForEach{#1}
  {\algorithmicend\ \algorithmicIfT}% \EndForEach

\algtext*{EndIf}
\algtext*{EndFor}
\algtext*{EndIfT}
\algtext*{EndWhile}
\algtext*{EndForEach}
\algtext*{EndFunction}

\title{On the Outage Probability of Multiuser  Multiple Antenna Systems with Non-Orthogonal Multiple Access  for Air-Ground Communications}

\author{Ayten Gürbüz and Giuseppe Caire \IEEEmembership{(Fellow, IEEE)}

        % <-this % stops a space
%\thanks{Manuscript received April ?, 2025}
\thanks{Ayten Gürbüz is with the Institute of Communications and Navigation, German Aerospace Center (DLR), Oberpfaffenhofen, 82234 Wessling, Germany (e-mail: Ayten.Guerbuez@dlr.de).}% <-this % stops a space
\thanks{Giuseppe Caire is with  the Faculty of Electrical Engineering and Computer Science, Technical University of Berlin, 10587 Berlin, Germany (e-mail: caire@tu-berlin.de).}% <-this % stops a space

}
%\affil{German Aerospace Center (DLR)}
%\affil{Technical University of Berlin}

% The paper headers
%\markboth{IEEE TRANSACTIONS ON VEHICULAR TECHNOLOGY }%\LaTeX\ Class Files,~Vol.~14, No.~8, August~2021}%
%{Shell \MakeLowercase{\textit{et al.}}: A Sample Article Using IEEEtran.cls for IEEE Journals}
\IEEEoverridecommandlockouts
%\IEEEpubid{ \makebox[\textwidth]{This work has been submitted to the IEEE for possible publication. Copyright may be transferred without notice, after which this version may no longer be accessible. \hfill}}
\IEEEpubid{\begin{minipage}{\textwidth}\ \\ \\ \\ \\ [12pt]
  \large This work has been submitted to the IEEE for possible publication. Copyright may be transferred without notice, after which this version may no longer be accessible.
\end{minipage}}

%\IEEEpubid{0000--0000/00\$00.00~\copyright~2021 IEEE}
% Remember, if you use this you must call \IEEEpubidadjcol in the second
% column for its text to clear the IEEEpubid mark.

\maketitle

\begin{abstract}
%In the recent years, the spectral efficiency of the communications systems employed for \ac{atm} has become increasingly important. 
This paper explores multiuser multiple antenna systems as a means to enhance the spectral efficiency of aeronautical communications systems.  To this end, the outage regime for a multiuser multiple antenna system is studied within a realistic geometry-based stochastic \ac{ag} channel model.
In this application, users (aircraft) transmit air traffic management data to the ground station at a predefined target rate.
Due to the nature of the \ac{ag} propagation, we argue that the relevant performance metric in this context is the information outage probability.
We consider the outage probability \rewthree{of individual aircraft} under three decoding approaches. The first is based on  \ac{sic}. The second extends the first approach by considering joint group decoding. The third is a version of the second that limits the size of the jointly decoded user groups in order to lower the decoding complexity.
%In this application, users (aircraft) transmit air traffic management data at a given required target rate, and the channel state changes slowly over time, such that the relevant performance metric is the information outage probability.
%We demonstrate that the determination of the outage probability is a complex problem. To address this challenge, we propose three algorithms. The first algorithm is designed to calculate  the minimum outage probability for \ac{sic} decoders. This approach is compared to existing solutions in the literature. The second algorithm builds upon  the first and it identifies the minimum attainable outage probability through the use of \ac{sic} decoders and joint group decoders. The third algorithm is a slightly modified version of the second algorithm that limits the number of aircraft that can be decoded jointly in a group. It finds a suboptimal but more feasible solution for real-world scenarios. 
The results show that joint group decoding, even in groups of only two, can significantly increase the spectral efficiency in the AG channel by allowing a large number of aircraft to transmit over a non-orthogonal channel with very low outage probabilities.
\end{abstract}

\begin{IEEEkeywords}
Aeronautical communications, multipath channels, multiuser MIMO, NOMA,  SIC, decoding order, joint decoding, outage probability
\end{IEEEkeywords}
% Aeronautical communications, multiuser communications, multiuser detection,  Air to ground communication,
\section{Introduction}\label{sec:intro}
\acresetall 
\IEEEPARstart{A}{ll} 
aircraft operating within a controlled airspace are managed by a globally standardized \ac{atm} system to ensure the efficient traffic flow and flight safety. Air traffic controllers on the ground monitor the air traffic and give commands to the pilots on board the aircraft. In addition, they accept/reject requests from pilots to change their flight path. However, the demand for air transportation is continuously growing and the current systems are  expected to reach their capacity in some world regions within the coming years \cite{ challenge2}. %\cite{challenge1}, 
Current aeronautical communications systems operate in the \ac{vhf} band. Digital data links operating in the  \ac{vhf}  band have been developed in the 1990s and only provide a limited data rate, well below the data rate required for a full \ac{atm} modernization in the future \cite{cocr}.

A key limitation in aeronautical communication technology is the spectrum to be used. Communications supporting \ac{atm} exchanges are classified as ``safety of life" communications, which grants them special protection to prevent interference. However, this protection applies only to certain frequency bands in the VHF, L-band, and C-band \cite{Fistas11}. %In Europe, the VHF band is highly congested, with no capacity for additional systems. As a result, the VHF band was excluded from consideration for new systems, leaving only the L and C bands as viable options.  
Considering the congestion and propagation characteristics of these bands, the \ac{icao} recommended the use of the frequency band between 960-1164~MHz in the L-band for the future aeronautical communications on a secondary basis \cite{icao05,wrc-07}.
Operating on a secondary basis means that systems deployed in this band must not interfere with already existing legacy systems in the L-band. This imposes limits on the transmit power and spectrum usage. Therefore, the system must optimize the use of the limited spectrum while meeting the associated power constraints.
To modernize the \ac{atm} system, two major research frameworks have been launched: the \ac{sesar} \cite{sesar} in Europe and the Next Generation National Airspace System \cite{nextgen} in the United States. Both researches  are being conducted under the framework of \ac{icao}.
% To align with the power constraint in this study, we set the maximum transmission power of an aircraft based on the specifications of the \ac{sesar} project \cite{sesar,LDACSSpec19}. 

The growing demands of aeronautical communications can be addressed by multiple antenna systems, which are known to improve spectral efficiency.  In \cite{ayten}, we presented achievable upper boundaries for a single-user multiple antenna system and demonstrated that the performance is limited by the propagation characteristics of the \ac{ag} communications. An alternative to  point-to-point multiple antenna systems is a multiuser system, \cite{muMimo1,muMimo2,muMimo3,emil}, where an antenna array receives signals from multiple users simultaneously. In this case, the users are not separated in time or frequency; this is known as \ac{noma}. Multiuser systems are more resilient to the propagation environment, allowing them to exploit the full potential of multiple antenna technology. %However, propagation characteristics still play a crucial role in overall performance.

Multiuser multiple antenna systems have been extensively studied in cellular communications 
\cite{noma_survey, vblast, order2, vehicle4}.
In addition, numerous studies investigate the use of unmanned aerial vehicles and/or satellites to support ground networks in cellular communications, e.g., \cite{vehicle1, vehicle3, vehicle5, vehicle2, outage2}. 
%Notably, these vehicles exhibit substantial differences from typical aircraft in speed, altitude, and communication range, resulting in different propagation characteristics.
Despite these advances, the research on multiuser multiple antenna systems with \ac{noma} for \ac{atm} systems remains limited. 
%Moreover, results obtained for other types of aerial or space platforms cannot be directly applied to typical aircraft, since these platforms exhibit substantial differences in speed, altitude, and communication range, leading to distinct propagation characteristics.
To the best of our knowledge, the few available works include \cite{air1} and our recent study \cite{ayten_icc25}.
\rewfour{However, the study in \cite{air1} does not thoroughly consider the realistic propagation characteristics of the \ac{ag} channel and focuses solely on the achievable ergodic sum rate \cite{muMimo4}. This is in line with the common trend in the \ac{noma} literature, where the primary performance objective is the maximization of the system sum rate or weighted sum rate \cite{ noma_survey, vehicle1, vehicle3, vblast, order2, vehicle4, vehicle5}.
%On the other hand, both our recent work \cite{ayten_icc25} and this paper incorporate realistic \ac{ag} propagation characteristics.
%In most of the \ac{noma} literature, the primary performance objective is the maximization of the system sum rate or weighted sum rate \cite{ noma_survey, vehicle1, vehicle3, vblast, order2, vehicle4, vehicle5}.
Such metrics are well-suited for applications such as video streaming and multimedia delivery that require transmission of large data packages. However, these performance metrics do not guarantee reliable communication, which is essential in ``safety of life" applications such as the one considered in this work. In particular, while temporary service degradations are generally tolerable in many communication scenarios discussed in the literature, they are strictly unacceptable in \ac{atm} communications. In this context, short, flight-critical messages must be exchanged reliably and with low latency.
Motivated by this requirement, we assume that each aircraft must be able to transmit at a target data rate, referred to as the guaranteed rate, ensuring that flight-critical information can be exchanged within acceptable delay bounds. Based on this assumption, we focus on minimizing the outage probability for individual aircraft at the guaranteed data rate %, thereby enabling reliable transmission of flight-critical data 
under varying channel conditions.
% To ensure flight-critical information to be exchanged within acceptable delay bounds, each aircraft must be able to transmit at a target data rate, referred to as the guaranteed rate.
%To ensure the reliable transmission of flight-critical messages, each aircraft must be able to transmit at a target data rate, referred to as the guaranteed rate, which allows flight-critical information to be exchanged within acceptable delay bounds.
}

\rewfour{%Although outage probability has received comparatively less attention in the literature, 
Outage probability has also been studied as a performance metric in numerous works, albeit less extensively than sum rate maximization. However, most existing studies characterize outage performance as a function of the \ac{snr} \cite{noma_survey, vehicle2, outage2}. 
In contrast, we consider a fixed transmission power that complies with the strict regulations of the protected portion of the L-band and evaluate the outage probability at the system's guaranteed rate, while the \ac{snr} varies across aircraft depending on their distance to the \ac{gs} and the effects of multipath fading.}
%While outage probability has been considered as a performance metric in the literature \cite{noma_survey, vehicle2, outage2}, it has received comparatively less attention. 
%Moreover, the transmission power of aircraft is constrained by the strict regulations of the protected portion of the L-band. 
%As a result, our problem formulation and system analysis are fundamentally different from existing studies.
\rewfour{In addition, prior work on satellite communications \cite{vehicle2, outage2} highlights that outage probability is strongly influenced by the propagation environment. Since drones and satellites differ substantially from typical aircraft in terms of speed, altitude, and communication range, they exhibit distinct propagation characteristics, and the corresponding results cannot be directly applied to typical aircraft. Motivated by this observation, both our recent work \cite{ayten_icc25} and the present paper incorporate realistic \ac{ag} propagation characteristics. %that captures key propagation characteristics relevant to the considered scenario.
}

\rewthree{
%Building on the need for more realistic channel considerations, both our recent work \cite{ayten_icc25} and the present study incorporate detailed \ac{ag} propagation characteristics. 
The \ac{ag} channel has been extensively characterized through measurement campaigns \cite{nik_mea, matolak, sun,matolak3}, providing a solid foundation for accurate channel modeling. }
%In this study, 
We adapt the geometry-based stochastic \ac{ag} channel model from \cite{nik_tvt1}, which was developed based on the measurement campaign in \cite{nik_mea}. Originally designed for a single antenna system, we extend it to a multiple antenna system with \ac{noma} setup using the insights from \cite{ayten2} and we derive the corresponding channel matrix.
\rewthree{While our recent study \cite{ayten_icc25} employed this realistic channel model to analyze channel estimation under high Doppler shifts across various flight scenarios, in this study we assume perfect channel knowledge at the receiver in order to investigate the theoretical limits in the outage regime.}

Due to the strict nature of the \ac{atm} systems, the constraints considered in this paper are as follows: the aircraft's transmission power is fixed according to the \ac{sesar} project specifications \cite{sesar,LDACSSpec19}, aligning with the power limits of the protected L-band. This fixed transmission power is necessary to maintain simplicity in the onboard system. For the same reason, we also assume a single antenna element on the aircraft. On the other hand, we consider an antenna array at the \ac{gs}. All aircraft in the system must meet the guaranteed rate to ensure the reliable transmission of flight critical messages. We consider two cases: a) an equal-rate system in which all aircraft transmit at the system's guaranteed rate, and 
b) a variable-rate system, where aircraft transmit at least at the guaranteed rate.
%For simplicity, we assume an equal-rate system, where all aircraft transmit at the system's guaranteed rate.

\begin{comment}
1) pose the problem with channel model and motivate the outage regime
2) look at the equal-rate point and outage probability determination, which in general is a hard problem
3) consider the heuristic of MMSE-SIC (in this case, understand VBLAST ordering in terms of minimizing users in outage)
4) propose an enhancement that consider group decoding in small groups, starting from the VBLAST ordering
5) compare linear detectors (ZF/MMSE) with MMSE-SIC + VBLAST ordering, and the GSA enhancement
6) compare different antenna configurations

We consider the system performance under three decoding approaches. The first is based on  \ac{sic}. The second improves upon  \ac{sic} by considering joint group decoding. The third is a modified version of the second that limits the size of the jointly decoded user groups in order to lower the decoding complexity.
\end{comment}

Using the extended channel model and considering the constraints and requirements of aeronautical communications systems, this paper focuses on the fundamental research of multiuser multiple antenna technology  for \ac{ag} communications. 
The main contributions are listed as follows:
\begin{itemize}
    %\item We study the multiuser multiple antenna system technology with \ac{noma} within a realistic geometry-based stochastic \ac{ag} channel model, \rewthree{which is inherently quasi-static.} %and  motivate the outage regime. %\ag{or identify the specific conditions that lead to outage.}
    \item We study the  multiuser multiple antenna system technology with \ac{noma} within a realistic geometry-based stochastic \ac{ag} channel model, which is inherently quasi-static.  \rewthree{Using this quasi-static \ac{noma} channel model, we examine the information outage probability under partial decoding, where some aircraft can be successfully decoded while others experience outage.}
   % \item  \rewthree{We evaluate the information outage probability in this quasi-static \ac{noma} channel model under partial decoding, where some aircraft can be successfully decoded while others experience outage.}
    \item  We evaluate outage probability \rewthree{of individual aircraft} under three decoding algorithms. These algorithms minimize the outage probability for either only \ac{sic} decoders or  \ac{sic} decoders in conjunction with joint group decoding. The proposed algorithms also derive practical strategies for implementing these decoding strategies.
    %We pose the difficult problem of determining outage probability. We evaluate this problem for different decoding techniques, including \ag{independent single-user decoders,} \ac{sic} decoders and joint group decoders. Accordingly, we propose three algorithms that efficiently addresses the issue of computational complexity in determining outage probability. These algorithms minimize the outage probability under the given constraints of the decoding technique. %, such as one algorithm solving the problem for \ac{sic} decoders, and two other algorithms solving the problem using both \ac{sic} and joint group decoding.
    \item  The first algorithm is the \ac{ssa}, which minimizes the outage probability for \ac{sic} decoders. We compare \ac{ssa} with the decoding order strategies proposed in \cite{vblast} and \cite{order2}. We show that \ac{ssa} outperforms the decoding ordering strategy in \cite{order2}. Moreover, we prove that the \ac{vblast} algorithm \cite{vblast} finds an optimal solution that minimizes the outage probability in equal-rate systems but it cannot obtain an optimal decoding order in variable-rate systems. Hence, \ac{ssa} outperforms V-BLAST in variable-rate systems.  
    %We point out that the approach in \cite{order2} applies only to single antenna receivers and show that \ac{ssa} outperforms the decoding ordering strategy in \cite{order2}. % when multiple antennas are considered at the \ac{gs}.
    \item The second algorithm is the \ac{gsa}. It determines the minimum achievable outage probability for a given capacity region. To achieve this minimum outage probability, it is necessary to use the \ac{sic} decoders together with joint group decoders. \ac{gsa} significantly reduces the outage probability compared to \ac{ssa}. The difference between \ac{gsa} and \ac{ssa}  becomes more apparent when the number of transmitting aircraft is large.
    \item The third algorithm is the \ac{lgsa}. It is a version of \ac{gsa} that forces the joint group decoding to be done in small groups. It does not obtain the optimal solution like \ac{gsa}, but it provides a more feasible solution for real-world applications while still proposing a significant improvement over \ac{ssa}.
\end{itemize}

Subsequently, the paper is organized as follows: in Section~\ref{sec:system_model}, we introduce the fixed parameters of the system. In Section~\ref{sec:channel}, we first provide background on the \ac{ag} channel propagation,  and follow this with an introduction to  the channel model used in the simulations. Moving to Section~\ref{sec:capacity}, we explore the interplay between the capacity region and outage probability 
%\rewthree{of individual aircraft. We also  introduce the problem addressed in this paper and provide motivation for the proposed algorithms.} 
and demonstrate that the calculation of the outage probability is a challenging problem. 
In Section~\ref{sec:algs}, we present algorithms designed to solve the challenges outlined in Section~\ref{sec:capacity}. Subsequently, in Section~\ref{sec:receiver}, we present two existing decoding ordering algorithms in the literature. Proceeding to Section~\ref{sec:results},   we evaluate the performance and computational complexity of the proposed algorithms and compare their performance with the existing decoding order algorithms in the literature. Finally, in Section~\ref{sec:discussion} we summarize the obtained results and give an outlook for the future.

\section{System Model}
\label{sec:system_model}

\begin{figure*}[!t]
     \centering
     \includegraphics[width=0.80\textwidth]{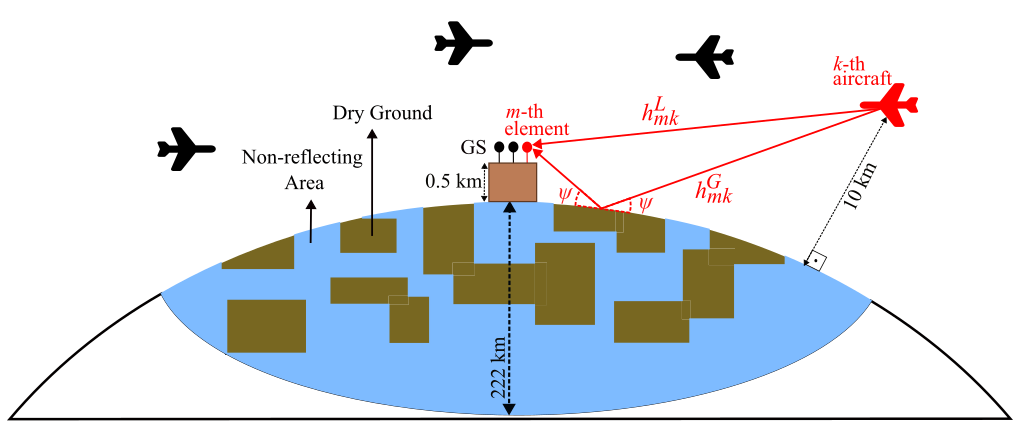}
     \caption{Simulation setup illustrating the system and channel model.}
    \label{fig:system}
\end{figure*}

In this work, we study a communications system between a \ac{gs} and multiple aircraft.  The \ac{gs} has an antenna array consisting of $M$ antenna elements,  while each of the $K$ aircraft is equipped with a single antenna. 
In order to meet the constraints of the protected L-band spectrum for \ac{atm} communications, the system model parameters are chosen based on the specifications in \cite{LDACSSpec19}. 
We assume that all $K$ aircraft transmit simultaneously to the \ac{gs}, at the carrier frequency of \qty{987}{\MHz} \cite{LDACSSpec19}. Thus, we consider a  multiuser multiple antenna system with \ac{noma} in the reverse link. 
To comply with the transmit power constraints, the transmit power of each aircraft is set to $P = \qty{41}{dBm}$, and the noise power at each receiver antenna is assumed to be $N_0 = \qty{-107}{dBm}$ \cite{LDACSSpec19}.
%The system guaranteed rate, denoted by $r_G$, is the minimum data rate that each aircraft must meet or exceed to ensure reliable transmission of critical flight messages. 
%\gc{if needed these paragraphs can be made significantly more concise … but let’s see the overall length at the end}

Furthermore, in this work a curved Earth model with a radius of \qty{6371}{km} is used \cite{Geodetic}. The \ac{gs} is assumed to be at the center of a circular cell with a radius of \qty{222}{km} (120 nautical miles) \cite{LDACSSpec19} and positioned \qty{500}{m} above the \ac{msl}, as tested in \cite{LDACS_journal,nik_mea}.
In this paper, we focus only on the \ac{ec} scenario. 
To this end, the altitude of the $K$ aircraft is set at \qty{10}{km} above \ac{msl}, a typical altitude for commercial flights, which corresponds to above flight level 300 \cite{separation2}.
Based on Instrument Flight Rules \cite{separation2}, we maintain a minimum separation of \qty{10}{km} (approximately 5 nautical miles) between any two aircraft to ensure realistic spacing and avoid overlap. We repeatedly simulate the positions of the aircraft within the given constraints to calculate the outage probability.

The evaluations are performed for a \ac{rec} where the antenna elements are arranged in a $\sqrt{M} \times \sqrt{M}$ grid, with each element evenly spaced at half-wavelength intervals.
We assume that the antenna elements have an isotropic radiation pattern and that the mutual coupling between the array elements is neglected, which is a common assumption in the literature on multiuser multiple antenna systems \cite{muMimo1,muMimo2,muMimo3,emil}.

\section{Air-Ground Propagation and Channel Model}\label{sec:channel}
An understanding of the propagation effects is crucial for accurately modeling a realistic channel. In this section, we first provide a summary of the air-ground propagation insights based on the measurement campaigns conducted in the L-band \ac{ag} channel \cite{nik_mea,matolak,sun,matolak3}. Following this, we describe how we utilize the \ac{ag} channel model presented in \cite{nik_tvt1}.%, which is described in detail in 

\subsection{Air-Ground Propagation} \label{subsec:propagation}

According to the measurement campaigns \cite{nik_mea,matolak,sun,matolak3} the most dominant \ac{mpc} in the L-band \ac{ag} channel is the \ac{los} path, which is the direct path between the aircraft and the \ac{gs}. 
%We assume that \ac{los} is the shortest possible path between the aircraft and the \ac{gs}. 
The \ac{los} path has significantly higher power than the other \ac{mpc}s and that it is almost always present.

The second strongest \ac{mpc} is the ground reflection, also called the \ac{gmp}. This is a \ac{sr} where the reflecting point is located on the direct line between the projection of \ac{gs} and the projection of the aircraft on the ground. The \ac{gmp} reaches the receiver shortly after the \ac{los} signal and with a very similar Doppler frequency compared to the \ac{los} signal. Therefore, it interferes constructively or destructively with the \ac{los}. In \cite{nik_mea}, \ac{los} attenuations of up to \qty{20}{dB} and amplifications of up to \qty{6}{dB} have been reported as a result of the \ac{gmp}.
%Due to \ac{gmp} the amplification of \ac{los} has been observed up to \qty{6}{dB}, while it is also observed that it has been attenuated by \qty{20}{dB} \cite{nik_mea}.
%The \ac{los} signal may be amplified by \qty{6}{dB} or significantly attenuated by up to \qty{20}{dB} due to the \ac{gmp} \cite{nik_mea}. 
The presence and strength of the \ac{gmp} depend on terrain characteristics, such as whether the aircraft is over water, urban areas, or hilly–mountainous regions \cite{matolak,sun,matolak3}.
%For instance, when the aircraft flies over water, a strong \ac{gmp} is typically observed \cite{matolak}, while over hilly or mountainous terrain, the \ac{gmp} is present only during certain parts of the flight. This is due to the roughness of the ground, which not only reflects but also scatters the signal, reducing the energy reaching the receiver \cite{sun}.

%The path gain of the \ac{gpc}s is affected by the grazing angle, transmission frequency, polarization, and electrical constant of the reflection surface \cite{Parsons}. 

The \ac{mpc}s that reflect off buildings, large structures, or vegetation are known as \ac{lmp}. These reflectors are usually located near the \ac{gs}. The \ac{lmp} typically have much lower power than both \ac{los} and \ac{gmp}.

%\gc{it is worthwhile to say here that fading channels (as originated by the constructive/destructive superposition of paths) have fluctuations in time of typical length (also known as ``coherence time’’) roughly inversely proportional to the Doppler spread, which is given by the difference between maximum and minimum Doppler frequency shifts. In the EC phase, since the LOS and the GMP have roughly the same Doppler shift, the spread is very small which means that the typical period of deep fades is long (large coherence time). For the other situations, the Doppler spread increases, which implies that the coherence time decreases, i.e., the typical period of the fading fluctuations becomes shorter. }

Fading channels, which result from the constructive and destructive superposition of the various propagation paths, are known to have fluctuations. 
%Each fluctuation lasts a certain length of time.
The duration of each fluctuation,  also known as the ``coherence time", is approximately inversely proportional to the Doppler spread.
%This duration, also known as the "coherence time", is approximately inversely proportional to the Doppler spread, which is defined as the difference between the maximum and minimum Doppler frequency shifts.
%The measurement results in \cite{nik_mea} indicate that the Doppler spread of the \ac{ag} channel is significantly affected by the altitude, speed, and distance of the aircraft from the \ac{gs}. The three main phases of a typical flight scenario are \ac{tl}, \ac{cd}, and \ac{ec}. These phases have different propagation characteristics.
The measurement results in \cite{nik_mea} indicate that the Doppler spread of  the \ac{ag} channel differs in three main phases of a typical flight scenario, which are \ac{tl}, \ac{cd}, and \ac{ec}.
In the \ac{ec} phase, all propagation paths experience the same Doppler shift as the \ac{los} path due to large distance between the transmitter and receiver and the resulting geometry. This leads to a very small Doppler spread.  %Consequently, the coherence time is large, resulting in a long period of deep fades. 
In the \ac{cd}, the Doppler spread is slightly larger than in the \ac{ec} phase. According to \cite{nik_phd}, which is also based on the measurement campaign reported in \cite{nik_mea}, the Doppler spread in this phase is \qty{60}{Hz}. 
In the \ac{tl} phase the Doppler spread is much higher, resulting in shorter coherence time.

In this paper, we focus only on the \ac{ec} scenario. Although \cite{nik_mea} do not report the exact Doppler spread of \ac{ec} scenario, it is known to be lower than in the \ac{cd} phase (\qty{60}{Hz} \cite{nik_mea}), which results in a coherence time of at least $1/\qty{60}{Hz}=\qty{16}{ms}$. Taking the example of the \ac{sesar} system specifications \cite{LDACSSpec19}, the duration of each codeword for aircraft to \ac{gs} transmission varies between \qty{0.72}{ms} to \qty{7.2}{ms}. Since these codeword durations are well below the \ac{ec} coherence time, it is reasonable to consider the channel as random but constant over a whole codeword, in accordance with the ``capacity versus outage approach" introduced in \cite{ozarow} for the so-called ``block-fading channel".

\subsection{Channel Model}
In this work, we utilize a simplified version of the proposed geometry-based stochastic channel model for the \ac{ag} channel in \cite{nik_tvt1}. The channel model parameterization is based on an L-band airborne measurement campaign  that took place at a regional airport in 2013, which is described in detail in \cite{nik_mea}.

\rewone{First, we generate a geometric scattering environment based on the statistical distributions in \cite{nik_tvt1}. While the scattering environment remains fixed throughout the simulations, the positions of the aircraft are randomly and repeatedly changed within this environment.}
%Building on this geometry-based stochastic channel model, 
We consider only the two strongest propagation effects in \cite{nik_tvt1}: the \ac{los} and the \ac{gmp}. % This is because other multipath components scattered from the vicinity of the \ac{gs} are significantly weaker.
We assume that the \ac{los} is always present and that it is modelled as the shortest possible path between the aircraft and the \ac{gs}. The presence of the \ac{gmp}, however, depends on the position of the aircraft \rewone{within the scattering environment}.

\rewone{The scattering environment is generated by } %The \ac{gmp} propagation effect is modeled by 
randomly characterizing the reflecting and non-reflecting areas on the ground. Figure~\ref{fig:system} shows an example ground reflecting area realization. In the simulations, the ratio between the total area of all reflecting surfaces and the overall area around the \ac{gs} is 50\%, as it is approximated in \cite{nik_tvt1} for a regional airport environment. A \ac{gmp} is present when the ground reflection point falls within a reflecting area. The location of the ground reflector is determined by calculating the \ac{sr} point on the ground surface, which depends on the coordinates of both \rewone{the transmitter and receiver antennas~\cite{Parsons}.}
%\ac{gs} and the aircraft 
\rewone{The \ac{gmp} component is computed separately for each transmitter–receiver antenna pair. }
The reflective areas are modeled as rectangles, with their size always much larger than the wavelength of the carrier signal. 
%We model these reflective areas as dry ground. 
In \cite{nik_tvt1}, the reflective surfaces are modeled as concrete, dry ground, and medium ground, based on the measurement campaign in \cite{nik_mea}. In our study, for simplicity, we model the reflective surfaces as dry ground only. %Electromagnetic properties of the dry ground studied extensively in \cite{ITU}.
%, assuming that the signal power does not scatter in all directions but rather only in the direction defined by the law of reflection \cite{Parsons}.

%Since the Doppler spread of the channel is minimal in the \ac{ec} scenario,
As justified in \mbox{Section \ref{subsec:propagation}}, 
%it is reasonable to argue that the channel remains approximately constant during the transmission of a codeword as it is justified in  Section  \ref{subsec:propagation}.Although the channel may vary over the entire frame transmission,
the channel can be treated as a random variable rather than a process  for each codeword transmission~\cite{ozarow}.
To this end, we compute the instantaneous realization of the channel based on the positions of the $K$ aircraft. 
The \ac{los}, ${h}^{L}_{mk}$, and the \ac{gmp}, ${h}^{G}_{mk}$, channels between the $m$-th antenna element and the $k$-th aircraft are each computed as
\begin{equation}
    {h}^{L}_{mk}=\alpha^{L}_{mk}\cdot \exp{ \left(-j2\pi\frac{d^{L}_{mk}}{\lambda} \right)} \text{ ,}
\end{equation}
\begin{equation}
    {h}^{G}_{mk}=\rho_{v}\cdot\alpha^{G}_{mk}\cdot\exp{\left(-j2\pi\frac{d^{G}_{mk}}{\lambda}\right)}\text{ .}
\end{equation}
The path strengths for these channels are denoted by $\alpha_{mk}^{L}$ and $\alpha_{mk}^{G}$, respectively. These values are calculated using the \ac{fspl} formula. The signal wavelength is denoted by $\lambda$.
The propagation path lengths for these channels are given by $d_{mk}^{L}$ and $d_{mk}^{G}$. \rewone{Since each antenna element in an array occupies a distinct physical position, $d_{mk}^{L}$ and $d_{mk}^{G}$ differ across the array, leading to distinct phase rotations at each receiving antenna element. This naturally embeds the angle-of-arrival  information in the model. }
%the phase of the signals calculated by using the path lengths, and the signal wavelength, $\lambda$. 
%Please note that the phase of the \ac{gmp} channel has a 180° phase change due to reflection \cite{Parsons}. 
The vertical reflection coefficient for the \ac{gmp} is denoted by $\rho_{v}$, which equals 0 when the ground reflection point lies within a non-reflective area. %, indicating the absence of the \ac{gmp} and the presence of only line-of-sight (LoS) signal.
If the ground reflection point is in a reflective area, $\rho_{v}$ is calculated using the equation from \cite{Parsons}, which considers the grazing angle and the electromagnetic properties of dry ground as specified in~\cite{ITU}.

In the end, the channel between the $m$-th antenna element and  the $k$-th aircraft, ${h}_{mk}$, is calculated by
\begin{equation}
    {h}_{mk}={h}^{L}_{mk}+{h}^{G}_{mk}\text{.}
\end{equation}
%Please note that ${h}_{mk}$ represents the instantaneous realization of the channel. 
The vector representing the channel between the antenna array on \ac{gs} and the $k$-th aircraft is then given by \mbox{$\textbf{h}_{:k}=[{h}_{1k}, {h}_{2k},\dots,{h}_{Mk}]^T\in \mathbb{C}^{M}$}, and the channel matrix between the $K$ aircraft and the \ac{gs} is \mbox{$\textbf{H}=[\textbf{h}_{:1}, \textbf{h}_{:2}, \dots, \textbf{h}_{:K}] \in \mathbb{C}^{M\times K}$}.
%In order to calculate the reflection coefficient, we use the the electromagnetic properties of dry ground given in \cite{ITU}, and use the vertical reflection coefficient formula in \cite{Parsons}.

%Backgorund: according to measurement campaigns done in following publications: presence of the GMP depends on the terrain. Oversea GMP is more likely to have over mountain is less likely signal scatters more. Nikolas's channel model for regional airport is used in this work which is based on the measurements done in OP. We assume that the LOS signal is always present and LOS channel is calculated as follows:

%reflection coeff, earth curve divergency, roughness, phase change

\begin{comment}
\begin{figure*}
     \centering
     \begin{subfigure}[b]{0.3\textwidth}
         \centering
        \includegraphics[width=\textwidth]{Figures/Circular64.png}
        \caption{Uniform Circular}
         \label{fig:uca}
     \end{subfigure}
     \hfill
     \begin{subfigure}[b]{0.27\textwidth}
         \centering
        \includegraphics[width=\textwidth]{Figures/Rec64.png}
     \caption{Uniform Planar Rectangular}
     \label{fig:upa}
     \end{subfigure}
     \hfill
     \begin{subfigure}[b]{0.3\textwidth}
         \centering
         \includegraphics[width=\textwidth]{Figures/Sun64_nolines.png}
     \caption{Spiral}
     \label{fig:Spiral}
     \end{subfigure}
        \caption{arrays}
        \label{fig:arrays}
\end{figure*}
\end{comment}
\section{Capacity Region and Outage Probability}\label{sec:capacity}

The signal received  at \ac{gs} for \textit{one channel use} of the discrete-time baseband complex channel model is denoted as  $\textbf{y}_j\in \mathbb{C}^{M}$ and it is
 related to the channel matrix $\textbf{H}\in \mathbb{C}^{M\times K}$ by
\begin{equation}
   \textbf{y}_j=\textbf{H}\textbf{x}_j+\textbf{z}_j\text{ ,}\quad  j=1,\dots,n
\end{equation}
where
$\textbf{x}_j\in \mathbb{C}^{K}$ is the vector of channel input symbols transmitted by the $K$ aircraft,
 $\textbf{z}_j\in \mathbb{C}^{M}$ is a complex Gaussian noise $\mathcal{CN}(0, N_0\textbf{I}_M)$, and $n$ is the codeword length. 
As explained before, the channel matrix $\textbf{H}$ is constant over the transmission of one codeword, i.e., during $n$ channel uses. In this work, we assume that $\textbf{H}$ can be perfectly estimated at the receiver.

\subsection{Capacity Region}
From an information-theoretic point of view, the channel with $K$ aircraft and one receiver, i.e., \ac{gs}, with $M$ receiving antennas and a fixed channel matrix \textbf{H} is a vector Gaussian \ac{mac}. An achievable rate $K$-tuple is a vector of rates $(R_1, R_2, \ldots, R_K)$ that can be simultaneously transmitted by the users with arbitrarily small probability of error \cite{Cover2006}. The capacity region $\mathcal{C}_{\text{MAC}}$ is the convex closure of the set of all achievable rates. From \cite[p. 425]{mumimoBook}, under the assumption that all $K$ aircraft transmit at a fixed power $P$, and that each receiver antenna has a noise power $N_0$ the capacity region is given by 
\begin{align}\label{eq:capacity}
\begin{split}
      \mathcal{C}_{\text{MAC}}& =\Bigg\{(R_1, R_2, \ldots, R_K):\sum_{k\in S} R_k \leq \text{\ldots   }\\   &\log_{2}\det\left(\textbf{I}_M+\frac{P}{N_0}\textbf{H}_{:S}\textbf{H}_{:S}^{\text{H}}\right), \forall S \subseteq \{1,2,\ldots,K\}\Bigg\} \text{ ,}
\end{split}
\end{align}
where $\textbf{H}_{:S}$ is the submatrix obtained from $\textbf{H}$ by taking the columns $\{\textbf{h}_{:k}: k \in S\}$ and the superscript H refers to the conjugate transpose. 
From the converse result on the capacity region and the arguments in \cite{ozarow}, for any coding scheme  transmitting at rates $\textbf{R}=(R_1, R_2, \ldots, R_K)$, if \textbf{H} is such that $\textbf{R}\notin \mathcal{C}_{\text{MAC}}$, then the probability that some aircraft messages are decoded in error is close to 1. 

%There is an overwhelming probability that any rate vector outside of the capacity region is not achievable. % transmission at these rates will result in errors. 

\begin{comment}
You can say here that, from the converse result on the capacity region and the arguments in [29] (capacity vs. outage), for any coding scheme transmitting at rates R = (R_1 …. R_K), if H is such that R \notin C_MAC, then the probability that some user messages are decoded in error is close to 1. 

The problem is that we haven’t defined what ``non-achievable’’ means … the above sentence is more explicit. 
\begin{figure}
     \centering
     \includegraphics[width=0.24\textwidth]{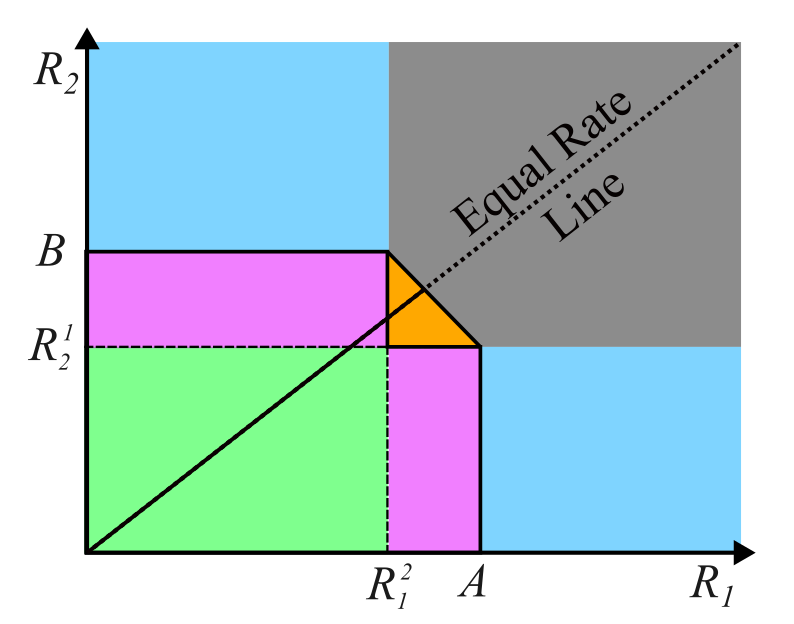}
     \caption{Example rate region for $K=2$}
    \label{fig:capacity}
\end{figure}
\end{comment}

\begin{comment}
\begin{figure}
     \centering
     \begin{subfigure}[b]{0.23\textwidth}
        \centering
        \includegraphics[width=\textwidth]{Figures/CapacityRegion2_V2.png}
         \caption{Example 1}
         \label{fig:ex1}
     \end{subfigure}
     \hfill
     \begin{subfigure}[b]{0.23\textwidth}
         \centering
         \includegraphics[width=\textwidth]{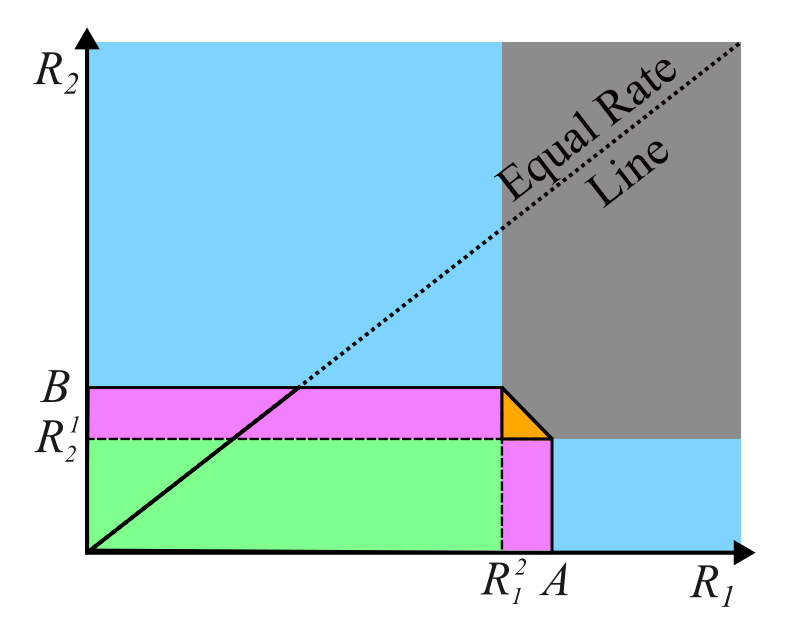}
         \caption{Example 2}
         \label{fig:ex2}
     \end{subfigure}
    \caption{Example rate regions for $K=2$}
        \label{fig:capacity}
\end{figure}
\end{comment}

For example, consider two arbitrary channel matrices with $K=2$, the corresponding rate regions for each channel matrix realization are shown in Fig.~\ref{fig:capacity}. In both rate regions, the rate $A$ is the achievable rate for aircraft $1$ in the absence of interference from aircraft $2$, and $B$ is the achievable rate for aircraft $2$ in the absence of interference from aircraft $1$. In contrast, $R_1^{2}$ and $R_2^{1}$ are the achievable rates for aircraft $1$ and $2$, respectively, when 
%they are interfering with each other,  with 
the decoder at the \ac{gs} is treating the interference as noise.

The pink, orange, and green areas are within the capacity region. 
In Fig.~\ref{fig:ex1}, the capacity region is highly symmetric, indicating that the achievable rates for both aircraft are quite similar. In contrast, Fig.~\ref{fig:ex2} shows a strongly asymmetric capacity region.  Such asymmetry typically results from a strong power imbalance in the signals received from different aircraft. In \ac{ag} communications, this situation is likely to occur due to large distance differences between the aircraft and the \ac{gs}. Therefore, cases with highly asymmetric capacity regions cannot be neglected.

Of particular interest are the so-called ``corner points" of the capacity region, corresponding to the points $(A,R_2^{1})$ and $(R_1^{2},B)$. These points are on the ``dominant face" of the capacity region and are therefore sum rate optimal. 
The following equality holds for these points
\begin{equation}\label{eq:K2}
    R_1^{2}+B=A+R_2^{1}=\log_{2}\det\left(\textbf{I}_M+\frac{P}{N_0}\textbf{H}\textbf{H}^{\text{H}}\right)\text{ .}
\end{equation}
It is well known that these corner rate points can be achieved by a particularly simple decoding scheme known as \ac{sic} decoding. For example, the point $(A,R_2^{1})$  can be achieved by first decoding user $2$ while treating the interference from user $1$ as noise, and then subtracting the codeword of user $2$ from the received signal. Since the probability of error of user $2$ at rate $R_2^{1}$ is vanishing, \ac{sic} can (almost) perfectly remove the interference, such that the decoder for user $1$ operates on a clean channel. Therefore, the rate $A$ can be achieved for user $1$, since at this point the decoder sees an interference-free channel. \rewtwo{On the other hand, to achieve any points on the line between the points $(A,R_2^{1})$ and $(R_1^{2},B)$, a joint group decoder must be used. Unlike \ac{sic}, which decodes users successively, joint group decoders process multiple users (aircraft) simultaneously. There are various approaches for this type of decoding. %For example message passing soft interference cancelling schemes are the best known low-complexity approximations for joint decoding.
Representative examples include the widely cited works \cite{joint1,joint3,joint4,joint5}, among many others. }

\begin{figure}[!t]


\centering
\subfloat[]{\includegraphics[width=0.22\textwidth]{Figures/CapacityRegion2_V2.png}%
\label{fig:ex1}}
\hfil
\subfloat[]{\includegraphics[width=0.22\textwidth]{Figures/CapacityRegion2_V3.png}%
\label{fig:ex2}}
\caption{Example rate regions for $K=2$}
\label{fig:capacity}
\end{figure}

\subsection{Outage Event Identification}
Aircraft $k$ transmits at rate $r_k$ and the system's guaranteed rate is $r_G$. We consider two cases for $\forall k \in \{1,2,...,K\}$: the first is the \textit{equal-rate system}, where $r_k = r_G$, and the second is the \textit{variable-rate system}, where $r_k \geq r_G$. %which is more complex.
%We assume that each aircraft transmits at the same guaranteed rate, $R_G$. 
To determine which aircraft can reliably transmit, i.e., which aircraft are not in outage, we need to examine the capacity region. 
For example, when $K=2$, the rate point is denoted as $(r_1, r_2)$. In an equal-rate system, the rate point $(r_1, r_2)=(r_G, r_G)$ is on the equal-rate line  shown in Fig.~\ref{fig:capacity}.
On this line, the solid portion represents the regions within the capacity region, while the dashed portions indicate areas outside the capacity region. By examining the position of any given rate point on the rate region, we can determine which aircraft's transmitted signals can be successfully decoded. Additionally, the color of the region indicates the decoding technique required, such as \ac{isu} decoders, \ac{sic} decoders, or joint group decoders.
%linear filters, \ac{sic}, or joint group decoding techniques. 
In particular:
\begin{itemize}
    \item In the green area, both signals can be decoded separately using \ac{isu} decoders by treating the interference {from the other user (aircraft)} as noise.
    %In the green area, both signals can be independently decoded using separate single-user decoders by treating interference as noise.
    \item Inside the pink areas, both signals can be decoded using a \ac{sic} scheme \cite{Cover2006}. 
    %The \ac{sic} process begins by decoding one signal under the interference of the signal transmitted by the other aircraft.Ideally, once the first signal is successfully decoded, its interference is removed from the received signal. This allows the next signal to be decoded without the influence of the previously decoded signal's interference. It is important to note that the order of decoding is critical in this process. The earlier the signal is decoded, the more it will be affected by interference from other aircraft. 
    It is important to select the correct decoding order to ensure that both transmitted signals are successfully decoded. For example, if $R_2^{1}<r_2\leq B$ and $r_1\leq R_1^{2}$, then we must first decode aircraft $1$ and then aircraft $2$.
    \item Within the orange area, joint group decoding becomes necessary, where the transmitted signals are decoded together as a group. This process is more complex to implement in practice.
    \item In the blue areas, the rate vector is not in the capacity region. However, one of the transmitted signals can still be decoded by treating the other signal as noise. The specific blue area where the rate point is located determines which aircraft will be in outage. For example, if $B<r_2$ and $r_1\leq R_1^{2}$ then aircraft $2$ will be in outage but aircraft $1$ will not.
    \item Inside the gray area, both aircraft will be in outage.
\end{itemize}
\rewthree{As a result, if the rate point is outside the capacity region, this means that at least one aircraft is definitely in outage. However, this does not imply that all transmitting aircraft are in outage; some aircraft may still be successfully decoded.}
%As a result, if the rate point is outside the capacity region, it does not necessarily mean that all transmitting aircraft are in outage.
%In addition, the equal-rate point can only be less than or equal to the maximum sum rate of the capacity region.

\begin{comment}
therefore, for a given rate point r, the outage probability is minimized by determining, for each realization of the channel matrix, the set S^* of maximal size, i.e., for given r and given H, we wish to find max |S^*|. 

you can say: 

In particular, it is immediate to see that S^* is the maximal set satisfying the condition: 
\forall S \subseteq S^*, 

inequality (9)
\end{comment}
\subsection{Outage Probability}\label{subsec:outage_pro}
To calculate the probability of outage, % for the given capacity region of a channel matrix,
we need to find the largest set of aircraft \mbox{$S^*\subseteq\{1,2,\ldots,K\}$} such that $\forall k\in S^*$ can be decoded at rate $r_k$.  Consequently, the complementary set, $\hat{S}\subseteq \{1,2,\ldots,K\}$, where $S^* \cap \hat{S}=\emptyset$, represents the set of aircraft in outage. Accordingly, the outage probability, $P_{\text{out}}$, is
\begin{equation}
    %P_e \approx
    P_{\text{out}}=1-\frac{\mathbb{E}[|S^*|]}{K}=\frac{\mathbb{E}[|\hat{S}|]}{K}\text{ ,}
\end{equation}
where $\mathbb{E}[\cdot]$ denotes the expectation with respect to the realizations of the random channels \textbf{H}, and is evaluated by Monte Carlo simulation for the model defined in Section~\ref{sec:system_model}.

Our goal is to minimize the outage probability. This is accomplished by finding the set $S^*$ of maximum size for each realization of the channel matrix $\textbf{H}$. Hence, for a given rate point $(r_1,\ldots,r_K)$ and a given $\textbf{H}$, we wish to find
\begin{equation}\label{eq:probGSA}
    \max |S^*|\text{ .}
\end{equation}
In particular, it follows from \eqref{eq:capacity} that $S^*$ is the maximal set satisfying the condition: 
\begin{align}\label{eq:con_GSA}
\begin{split}
  \forall S \subseteq S^*,\quad\quad&\\\sum_{k\in S}r_k \leq &\log_{2}\det\Biggl(\textbf{I}_M+\ldots\\ &\quad\frac{P}{N_0}\textbf{H}_{:S}\textbf{H}_{:S}^{\text{H}} \left(\textbf{I}_M+\frac{P}{N_0}\textbf{H}_{:\hat{S}}\textbf{H}_{:\hat{S}}^{\text{H}}\right)^{-1} \Biggr)\text{ ,}
\end{split}
\end{align}
where $\textbf{H}_{:\hat{S}}$  is the matrix representing the aircraft that are in outage, with columns $\{\textbf{h}_{:k}:k\in \hat{S}\}$.  %$R_S$ is the sum of the achievable rates of the aircraft in $S$ under the interference of the aircraft in the outage. 
\begin{comment}
\begin{align}\label{eq:probGSA}
\begin{split}
    \max \quad& |S^*| \\
    \text{Subject to} \quad & \sum_{k\in S}r_k \leq \log_{2}\det\Biggl(\textbf{I}_M+\ldots\\ &\frac{P}{N_0}\textbf{H}_{:S}\textbf{H}_{:S}^{\text{H}} \left(\textbf{I}_M+\frac{P}{N_0}\textbf{H}_{:\hat{S}}\textbf{H}_{:\hat{S}}^{\text{H}}\right)^{-1} \Biggr), \forall S\subseteq S^*\text{ ,} \hspace{-20pt}
\end{split}
\end{align}
\end{comment}

%Please note that aircraft in outage, $\hat{S}$, causes interference on the aircraft in $S^*$. 

%where $\mathbb{E}[S^*]$ is computed by Monte Carlo simulation setup explained in Section~\ref{sec:system_model}.
Finding the maximal set  $S^*$  becomes more challenging as the number of aircraft in the system, $K$, increases.
A brute-force approach to derive $S^*$ requires considering every possible combination of $[K] = \{1, 2, \ldots, K\}$. 
Additionally, for each combination, we must evaluate all non-empty subsets, which amounts to $(2^v-1)$ subsets for a combination of $v$ aircraft. This results in  a total of
\begin{align}\label{eq:comp_out}
\begin{split}
    &\sum_{v=1}^K\sum_{S^* \in \binom{[K]}{v}}(2^v-1)\\
    &=\sum_{v=1}^K\frac{K!}{v!(K-v)!}(2^v-1)\text{ ,}
\end{split}
\end{align}
where we define  \mbox{$\binom{[K]}{v}$} as the set of all distinct subsets of cardinality $v$ of the set $[K]$. This total in \eqref{eq:comp_out} is the number of times the condition in \eqref{eq:con_GSA} must be evaluated in order to derive $S^*$.

\begin{comment}
Based on \cite{matrix_book}, computing \eqref{eq:con_GSA} involves the following number of complex multiplications. 
First, to compute $\textbf{H}_{:S}\textbf{H}_{:S}^{\text{H}}$, $M^2|S|$
complex multiplications are required.
Next, to calculate $\left(\textbf{I}_M+\frac{P}{N_0}\textbf{H}_{:\hat{S}}\textbf{H}_{:\hat{S}}^{\text{H}}\right)^{-1}$, $M^2|\hat{S}|+M^3$ complex multiplications are needed. 
Finally, to multiply $\textbf{H}_{:S}\textbf{H}_{:S}^{\text{H}}$ with $\left(\textbf{I}_M+\frac{P}{N_0}\textbf{H}_{:\hat{S}}\textbf{H}_{:\hat{S}}^{\text{H}}\right)^{-1}$, $M^3$ complex multiplications are necessary. Hence, if $\hat{S}=\emptyset$ only $M^2|S|$ multiplications are needed, whereas if $\hat{S}$ is not empty, $M^2|S|+M^2|\hat{S}|+2M^3$ complex multiplications are required.
\end{comment}

%%%%%%%%%%%%%%%%%%%%%%%%%%%%%%%%%%%%%%%%%%%%%%%%%%%%%%%%%%%%%%%%%%%%%%%%%%%%%%%%%%%%%%%%%%%%%%%
% Outage Probability with Exclusion of Joint Group Decoding
\subsection{Outage Probability for SIC decoders}
The outage probability for \ac{sic} decoders means that the solution space is limited by the exclusion of joint group decoding. 
%, which is generally considered to be much more complex in practice. 
The largest set of non-outage aircraft under the \ac{sic} scheme is defined as $S^*_{\text{SIC}}$, where $S^*_{\text{SIC}}$ is a subset of $S^*$.
Accordingly, the outage probability for \ac{sic} decoders is
$P_{\text{out}}^{\text{SIC}}=1-\frac{\mathbb{E}[ |S^*_{\text{SIC}}|]}{K}$, where $P_{\text{out}}^{\text{SIC}}\geq P_{\text{out}}$.

Let  \mbox{$i_1\rightarrow i_2\rightarrow ...\rightarrow i_K$} be a permutation of the integers $1,2,...,K$ representing the decoding order. We define  $R_{i_u}^{T_u}$ as
\begin{align}\label{eq:sic}
\begin{split}
     R_{i_u}^{T_u}=\log_{2}&\det\Biggl(\textbf{I}_M+\ldots\\
     &\frac{P}{N_0}\textbf{h}_{:i_u}\textbf{h}_{:i_u}^{\text{H}} \left(\textbf{I}_M+\frac{P}{N_0}\textbf{H}_{:T_u}\textbf{H}_{:T_u}^{\text{H}}\right)^{-1} \Biggr)\text{ ,}
\end{split}
\end{align}
where $\textbf{H}_{:T_u}$ denotes the matrix formed by the columns \mbox{$\{\textbf{h}_{:i_k}:k>u\}$} and $u\in \{1,2,\ldots,K\}$. 
In this context, we define the set $S^*_{\text{SIC}}$ as  
\begin{align}\label{eq:condition}
    \begin{split}
         S^*_{\text{SIC}} = \Big\{ i_1, &\dots,i_{u^* - 1} :\dots\\ &u^* = \min_{u \in \{1,\ldots,K\}} u \text{ subject to } r_{i_u} > R_{i_u}^{T_u} \Big\}\text{ ,}
    \end{split}
\end{align}
where $u^*$ is the smallest index for which the outage condition, \mbox{$r_{i_{u^*}} > R_{i_{u^*}}^{T_{u^*}}$}, holds. If none of the aircraft are in outage, i.e., \mbox{$r_{i_u}\leq R_{i_u}^{T_u}, \forall u \in \{1, \dots,K\}$}, then we assume that \mbox{$\min_{u \in \{1,\ldots,K\}} u =K+1$} in \eqref{eq:condition}, such that the cardinality of the set $S^*_{\text{SIC}}$ is $K$.
%in short, $k^*$ is the smallest index for which the SIC decidability conditions fails. Notice that if $k^* = 1$, then the SIC set is empty, since it means that it fails at step 1, no user can be successively decoded. In contrast, if all users satisfy the decodability condition, 
%we have to say that the arg min above results (artificially) $k* = u + 1$, such that the cardinality of the set is $u$. 
Thus, the size of the $S^*_{\text{SIC}}$ depends on the decoding order. In order to minimize the outage probability, the optimization problem is formulated as 
\begin{equation}\label{eq:probSSA}
    \max_{\Pi} |S^*_{\text{SIC}}|\text{ ,}
\end{equation}
where $\Pi$ represents the decoding order. 
A brute-force approach to find an optimal decoding order that maximizes the size of $S^*_{\text{SIC}}$ requires evaluating all $K!$ permutations, which becomes computationally very expensive as $K$ increases.

In the rest of the paper, for arbitrary values of $x$ and $y$, the notation $R_x^{y}$ means the achievable rate of $x$ under the interference of $y$, and it is calculated by the \ac{rhs} of \eqref{eq:sic} where $i_u$ is defined by $x$ and $T_u$ is defined by~$y$. %On the other hand, $R_x$ means the achievable rate of $x$ under no interference.

\begin{lemma}\label{lemma:ssa}
Let $T_o$ and $T_p$ be two distinct sets, where $o\not\in T_o$ and $p\not\in T_p$. If $o=p$ and $T_o\subset T_p$, then it follows that $R_{o}^{T_o}\geq R_{p}^{T_p}$ \cite{vblast}\text{ .}
\end{lemma}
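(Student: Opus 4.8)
\emph{Proof plan.} The plan is to rewrite each rate $R_x^{y}$ as a scalar signal-to-interference-plus-noise quantity and then compare the two scalars through the Löwner (positive semidefinite) ordering of the associated noise-plus-interference covariances. For an interfering set $T$, write $\textbf{A}_{T}=\textbf{I}_M+\frac{P}{N_0}\textbf{H}_{:T}\textbf{H}_{:T}^{\text{H}}$, which is Hermitian positive definite. Applying the matrix determinant lemma to the rank-one term $\frac{P}{N_0}\textbf{h}_{:o}\textbf{h}_{:o}^{\text{H}}\textbf{A}_{T_o}^{-1}$ in \eqref{eq:sic} (using that $\textbf{A}_{T_o}^{-1}$ is Hermitian) gives
\begin{equation}\label{eq:lemma_scalar}
    R_{o}^{T_o}=\log_{2}\!\left(1+\frac{P}{N_0}\,\textbf{h}_{:o}^{\text{H}}\textbf{A}_{T_o}^{-1}\textbf{h}_{:o}\right)\text{ ,}
\end{equation}
and, analogously, $R_{p}^{T_p}=\log_{2}\bigl(1+\tfrac{P}{N_0}\,\textbf{h}_{:p}^{\text{H}}\textbf{A}_{T_p}^{-1}\textbf{h}_{:p}\bigr)$. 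Since $\log_2$ is increasing and $o=p$ (hence $\textbf{h}_{:o}=\textbf{h}_{:p}$), the claim reduces to showing $\textbf{h}_{:o}^{\text{H}}\textbf{A}_{T_o}^{-1}\textbf{h}_{:o}\geq\textbf{h}_{:o}^{\text{H}}\textbf{A}_{T_p}^{-1}\textbf{h}_{:o}$.

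First I would establish $0\prec\textbf{A}_{T_o}\preceq\textbf{A}_{T_p}$. Because $T_o\subset T_p$, the columns of $\textbf{H}_{:T_o}$ form a subset of those of $\textbf{H}_{:T_p}$, so $\textbf{H}_{:T_p}\textbf{H}_{:T_p}^{\text{H}}-\textbf{H}_{:T_o}\textbf{H}_{:T_o}^{\text{H}}=\textbf{H}_{:(T_p\setminus T_o)}\textbf{H}_{:(T_p\setminus T_o)}^{\text{H}}\succeq\textbf{0}$, being a Gram matrix. Adding $\textbf{I}_M$ and scaling by $P/N_0>0$ preserves the ordering, and both matrices are positive definite. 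Then I would invoke the operator antitonicity of matrix inversion: $0\prec\textbf{A}_{T_o}\preceq\textbf{A}_{T_p}$ implies $\textbf{A}_{T_p}^{-1}\preceq\textbf{A}_{T_o}^{-1}$. Evaluating both sides as quadratic forms at $\textbf{h}_{:o}$ yields exactly the needed scalar inequality, and \eqref{eq:lemma_scalar} together with the monotonicity of $\log_2$ completes the proof.

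The only nonroutine ingredient is the operator antitonicity of the inverse, i.e.\ that $t\mapsto-1/t$ is operator monotone on $(0,\infty)$; this is standard, and everything else is algebraic bookkeeping. It is worth noting the information-theoretic reading of the same argument, which could replace the matrix manipulations: $R_x^{y}$ is the mutual information of aircraft $x$'s single-antenna-to-array link when the signals of the aircraft in $y$ are treated as additional independent Gaussian interference; enlarging the interfering set from $T_o$ to a superset $T_p$ can only increase the aggregate interference covariance, and hence can only decrease this mutual information, so $R_o^{T_o}\geq R_p^{T_p}$.
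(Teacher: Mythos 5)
Your proof is correct, but it follows a genuinely different and more explicit route than the paper's. The paper's own argument is a two-sentence information-theoretic one — essentially the closing remark of your proposal: by Shannon's capacity formula, enlarging the set of interferers treated as noise can only decrease (or leave unchanged) the achievable rate, with the degree of loss depending on the linear dependence of the channel vectors. Your matrix-analytic derivation makes this precise for the specific expression \eqref{eq:sic}: the determinant-lemma reduction to $R_{o}^{T_o}=\log_{2}\bigl(1+\tfrac{P}{N_0}\,\textbf{h}_{:o}^{\text{H}}\textbf{A}_{T_o}^{-1}\textbf{h}_{:o}\bigr)$, the Löwner ordering $\textbf{A}_{T_o}\preceq\textbf{A}_{T_p}$ obtained from the Gram matrix of the columns in $T_p\setminus T_o$, and the operator antitonicity of matrix inversion together give the inequality without any appeal to intuition. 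What the paper's approach buys is brevity and a physical interpretation (extra interference can only hurt); what yours buys is a self-contained, verifiable chain of standard matrix facts that also makes visible when equality occurs, namely when the additional interferers contribute nothing along the direction of $\textbf{h}_{:o}$ after whitening — which is exactly the "linear dependence" caveat the paper states informally. Your version would serve as the rigorous backing that the paper's proof implicitly relies on.
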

\begin{proof}
%The aircraft in the constraint set are interfering with aircraft $o$. The extent of the interference caused by these aircraft depends on the linear dependence the channel matrix columns associated with them and aircraft $o$. If the channel matrix columns corresponding to the aircraft in the constraint set are orthogonal to the channel matrix column representing aircraft $o$, then the constraint set does not cause any interference.
The sets $T_o$ and $T_p$ represent the interfering aircraft with $o$ and $p$, respectively. According to Shannon's capacity equation, an increase in interference leads to lower achievable data rate. 
As a result, as more aircraft interfere with the transmitted signal, the achievable data rate will either remain the same or decrease, depending on the linear dependence of the channel vectors associated with the aircraft.  
\end{proof}

\subsection{Outage Probability for ISU decoders}
%Unlike \ac{sic} decoders, \ac{isu} decoders do not remove the interference caused by decoded aircraft signals from the received signal. Instead, 
\ac{isu} decoders decode the signals from each aircraft separately. For example, well-known \ac{zf} and \ac{mmse} decoders are examples of \ac{isu} decoders \cite{mumimoBook}. When decoding aircraft $k$, an \ac{isu} decoder treats the interference from the other aircraft as noise. 
Accordingly, the non-outage condition for aircraft $k$ is
\begin{align}\label{eq:isu}
\begin{split}
     r_k\leq \log_{2}&\det\Biggl(\textbf{I}_M+\frac{P}{N_0}\textbf{h}_{:k}\textbf{h}_{:k}^{\text{H}} \left(\textbf{I}_M+\frac{P}{N_0}(\textbf{H})_k(\textbf{H})_k^{\text{H}}\right)^{-1} \Biggr)\text{ ,}
\end{split}
\end{align}
where the notation $(\textbf{H})_k$ represents the removal of the $k$-th column from $\textbf{H}$. 
The largest set of non-outage users using \ac{isu} decoders is defined as $S^*_{\text{ISU}}$. Accordingly, the upper-bound on the minimal outage probability for \ac{isu} decoders is \mbox{$P_{\text{out}}^{\text{ISU}}=1-\frac{\mathbb{E}[|S^*_{\text{ISU}}|]}{K}$}, where \mbox{$P_{\text{out}}^{\text{ISU}}\geq P_{\text{out}}^{\text{SIC}}\geq P_{\text{out}}$}.

%
%Algorithms for Calculating Outage Probability
\begin{comment}
I think that it is important to point out that 
we are not only providing methods for computing outage probability, but also methods that explicitly give a decoding order and therefore can be implemented as decoding strategies. 

e.g. something along these lines: 

… not only a computationally efficient way to compute an upper bound to the minimal outage probability, but also practical achievability strategies, i.e., explicit decoding orders for SIC and SIC with (sub-)group detection, which can be implemented in a practical receiver.  
\end{comment}

\section{Proposed Algorithms for Calculating Outage Probability}\label{sec:algs}

%\ag{Alternative Section Titel: Algorithms for Calculating Outage Probability}
We have shown in the previous section that the optimization problems given in \eqref{eq:probGSA} and \eqref{eq:probSSA} are difficult to solve with a brute force approach due to the associated high computational complexity. In this context, we propose three algorithms designed to efficiently identify  solutions while significantly reducing the computational complexity.
%These algorithms provide a practical alternative to a brute force approach and can be used to evaluate the optimality of receiver strategies. 
These algorithms not only compute {an upper bound on the minimal outage probability}, but also provide practical achievability strategies. As such, they provide explicit decoding orders for \ac{sic} and \ac{sic} with joint group detection in subgroups that can be implemented in a practical receiver.  
{The first algorithm is \ac{ssa}, this aims to solve the problem given in \eqref{eq:probSSA}. \ac{ssa} identifies the largest subset of decodable aircraft by excluding the joint group decoding options. %, where one symbol is extracted at a time from the received signal. 
The second algorithm is \ac{gsa}, which extends \ac{ssa} by using its outputs as inputs. \ac{gsa} solves the problem in \eqref{eq:probGSA}, where joint group decoding options are included.} The third algorithm is \ac{lgsa}, which is a modified version of \ac{gsa}. With \ac{lgsa} we find a suboptimal solution to the problem in \eqref{eq:probGSA} by limiting the number of aircraft that can be decoded jointly.

The complexity of the problems given in \eqref{eq:probGSA} and \eqref{eq:probSSA} is reduced by starting the evaluation from the smallest subsets and using the insights gained from these evaluations. This allows us to eliminate certain subset options or decoding orders from further consideration, thereby narrowing the search space and improving computational efficiency. %Additionally, using \ac{sic} scheme with perfect channel knowledge allows for partitioning the $K$-dimensional capacity region into smaller-dimensional sub-capacity regions.
There are three sets in the algorithms: $S^*$, $\hat{S}$, and $L$. These three sets form a partition of \mbox{$\{1,2,\ldots,K\}$}.
%\mbox{$S^*\cup\hat{S}\cup L=\{1,2,\ldots,K\}$}, with the property that any intersection of two of these three sets is empty. 
The set $S^*$ contains aircraft that can be successfully decoded at their transmission rate, i.e., \mbox{$\forall k \in S^*$} can be decoded at the rate $r_k$. Set $\hat{S}$ contains the aircraft that are certainly in outage and $L$ contains the aircraft whose status has not yet been determined.   As the algorithms progress, the aircraft that can be decoded successfully or are known to be in outage are removed from $L$ and added to either $S^*$ or $\hat{S}$. Thus, the objective of the algorithms is to minimize the set $L$ by removing as many elements as possible. The aircraft that cannot be removed from $L$ at the end of the algorithms can be considered as being in outage, since no solution could be found to successfully decode them.

% S^* is such that all k \in S^* can be decoded at rate r_k.
%The following lemmas are used in the algorithms without proof:
%\begin{lemma}Let \mbox{$O\subset \{1,2,\ldots,K\}$} and $o\in O$. The constraint set of the $o$ is $T_o$. If perfect channel knowledge is available at the receiver and $R_{o}^{T}\geq R_G$, then aircraft $o$ can be decoded at rate $R_G$ and it's interference can be removed from the received signal.\end{lemma}

%\subsection{Finding the Largest Set of Decodable Aircraft Using SIC}
\subsection{Single Successive Algorithm (SSA)}
{\ac{ssa} is presented in Algorithm~\ref{al:ssa} and solves the optimization problem given in \eqref{eq:probSSA}.} Please note that $S^*_{\text{SIC}}$ in \eqref{eq:probSSA} is denoted by $S^*$ in this algorithm.
We start by setting \mbox{$L=\{1,2,...,K\}$}, \mbox{$S^*=\emptyset$}, and \mbox{$\hat{S}=\emptyset$}, and then move the aircraft from $L$ to either $S^*$ or $\hat{S}$ during two main phases.%: \textit{PruneAircraft} and \textit{GreedySIC}.

The first phase is referred to as  \textit{PruneAircraft}. It identifies aircraft that are definitely in outage.
Consider $l\in L$,  in line~\ref{ssa:line:outage}, $R_{l}^{\hat{S}}$ is defined as the achievable rate of aircraft $l$ under the interference of the outage set, $\hat{S}$.
If \mbox{$R_{l}^{\hat{S}}< r_l$}, then aircraft $l$ is in outage.
This is because the interference caused by the aircraft in $\hat{S}$ cannot be removed from the received signal by any \ac{sic} decoding strategy.
Therefore, the achievable rate for aircraft $l$ will always be less than or equal to $R_{l}^{\hat{S}}$. Lemma~\ref{lemma:ssa} supports this argument, since it indicates that in order to achieve higher rates, the interference on the target signal should be reduced. 
However, in this case, it is not possible to remove the interference caused by the aircraft already assigned to the outage set $\hat{S}$.
%Lemma~\ref{lemma:ssa}, \mbox{$\forall T_l$,  $r_l>R_{l}^{\hat{S}}\geq R_{l}^{T_l}$}, as $\hat{S}\subset T_l$, it follows that there will be no condition where $R_{l}^{T_l}$ is equal to or greater than $r_l$. 
As a result, aircraft $l$ is removed from $L$ and added to $\hat{S}$.
%for the subsequent steps, we remove $l$ from $L$ and only evaluate cases where $l \in \hat{S}$. 
The \textit{PruneAircraft} phase ends when the algorithm can no longer find a new aircraft to add to  $\hat{S}$. The search space for the next step becomes smaller after this phase, since the number of the iterations in the next step depends on the size of $L$.

%The first function, called \textit{PruneAircraft}, first identifies aircraft that cannot achieve the rate $R_G$ even in the absence of interference.  For example, if $R_l<R_G$ when $\hat{S}=\emptyset$, it means that aircraft $l$ will certainly be in outage and cannot be decoded. Consequently, in the subsequent steps, we only need to evaluate cases where $l \in \hat{S}$. Additionally, \textit{PruneAircraft} identifies aircraft that cannot achieve the rate $R_G$ due to interference from the aircraft in $\hat{S}$.These aircraft are certainly not decodable and they will be in outage as well. This phase ends when the algorithm can no longer find new planes to add to  $\hat{S}$. The search space for the next stage becomes smaller after this phase.

%Consider  l \in L and S^. Suppose that the SIC decoder decodes user l treating all not yet decoded users, i.e., the set of users T = L\{l} U S^ as interference. We define  R_l^T = RHS of (10) for T defined as above. In this case, decoding of user l is successful if R_G <= R_l^T 

\begin{algorithm}[t!]
\centering
\caption{Single Successive Algorithm (SSA)}\label{al:ssa}
\small
\begin{algorithmic}[1]

\Require{$\textbf{H}$, $\textbf{r}=(r_1,\ldots,r_K)$}
\Ensure $S^*$, $\hat{S}$, $L$

\Algphase{\textbf{Initialization}}
\State $L=\{1,2,...,K\}$, $S^*=\emptyset$, $\hat{S}=\emptyset$
%\State % \Comment{Array for aircraft that can be decoded at rate $R_G$}
%\State % \Comment{Array for aircraft that are in outage}
\Algphase{\textbf{Phase 1 - Prune Aircraft}}% Find aircraft that cannot be decoded under the interference of aircraft in $\hat{S}$.}
%\Require{$\textbf{H}$, $R_G$, U, $\hat{S}$}
%\Ensure{ U, $\hat{S}$}
\Function{PruneAircraft}{$\textbf{H}$, $\textbf{r}$, $L$, $\hat{S}$}
\While{$|L|>0$}
    \State tmp=$\hat{S}$
        \ForEach{$l \in L$}
            %\State $\textbf{H}_{:\hat{S}}$ = submatrix($\textbf{H}, \hat{S}$)
             %\Comment{ $\textbf{H}_{:\hat{S}}$ is the matrix with $\{\textbf{h}_{:k}:k\in \hat{S}\}$}
             \State $ \begin{aligned}[t] R_l^{\hat{S}}=\log_{2}\det\Biggl(&\textbf{I}_M+\ldots\\ &\frac{P}{N_0}\textbf{h}_{:l}\textbf{h}_{:l}^{\text{H}} \left(\textbf{I}_M+\frac{P}{N_0}\textbf{H}_{:\hat{S}}\textbf{H}_{:\hat{S}}^{\text{H}}\right)^{-1} \Biggr)\end{aligned}$ \label{ssa:line:outage}
            %\If{$R_{G} > \log_{2}\det\Biggl(\textbf{I}_M+ \frac{P}{N_0}\textbf{h}_{:l}\textbf{h}_{:l}^{\text{H}} \left(\textbf{I}_M+\frac{P}{N_0}\textbf{H}_{:\hat{S}}\textbf{H}_{:\hat{S}}^{\text{H}}\right)^{-1} \Biggr)$}
            \If{$r_{l} >R_l^{\hat{S}}$}
                \State Remove $l$ from $L$
                \State $\hat{S} \gets \text{add } l$
            \EndIf
        \EndForEach
    \State \textbf{if} isequal(tmp,$\hat{S}$) \textbf{then break} 
    %\If{tmp==$\hat{S}$ \textbf{or} isempty(U)}
    %    \State \textbf{break}
    %\EndIf
\EndWhile
\State \Return  $L$, $\hat{S}$
\EndFunction
\Algphase{\textbf{Phase 2 - Greedy SIC}}% Find aircraft that can be decoded under the interference of aircraft in U and $\hat{S}$.}
%\Require{$\textbf{H}$, $R_G$, U,  $S^*$, $\hat{S}$}
%\Ensure{ U, $S^*$}
\Function{GreedySIC}{$\textbf{H}$, $\textbf{r}$, $L$,  $S^*$, $\hat{S}$}
\While{$|L|>0$}
    \State tmp=$S^*$
        \ForEach{$l \in L$}
            \State $T_l=(L\cup\hat{S})\symbol{92}\{l\}$ \label{ssa:line:Tl}%\Comment{$T$ is the union of U and $\hat{S}$, excluding $l$}
            %\State $\textbf{H}_{:T}$ = submatrix($\textbf{H}, T$)
            %\Comment{ $\textbf{H}_{:T}$ is the matrix with $\{\textbf{h}_{:k}:k\in T\}$}
            \State $ \begin{aligned}[t] R_l^{T_l}=\log_{2}\det&\Biggl(\textbf{I}_M+\ldots\\ &\frac{P}{N_0}\textbf{h}_{:l}\textbf{h}_{:l}^{\text{H}} \left(\textbf{I}_M+\frac{P}{N_0}\textbf{H}_{:T_l}\textbf{H}_{:T_l}^{\text{H}}\right)^{-1} \Biggr)\end{aligned}$ \label{ssa:line:constraint}
            \If{$r_{l} \leq R_l^{T_l}$}
                \State Remove $l$ from $L$
                \State $S^*\gets \text{add } l$
            \EndIf
        \EndForEach
    \State \textbf{if} isequal(tmp,$S^*$) \textbf{then break} 
    %\If{tmp==$S^*$ \textbf{or} isempty(U)}
    %    \State \textbf{break}
    %\EndIf
\EndWhile
\State \Return  $L$, $S^*$
\EndFunction
\end{algorithmic}
\end{algorithm}

The second phase is referred to as \textit{GreedySIC}. This step finds the aircraft that can be successfully decoded using \ac{sic} decoders.
In line~\ref{ssa:line:Tl} of Algorithm~\ref{al:ssa}, \mbox{$T_l=L\cup\hat{S} \symbol{92}\{l\}$}  is defined as the constraint set of aircraft $l$ and it contains all aircraft in $L$ and $\hat{S}$ except $l$. Then, in line~\ref{ssa:line:constraint} of Algorithm~\ref{al:ssa}, $R_{l}^{T_l}$ is defined as the achievable rate of aircraft $l$ under the interference of $T_l$.
Suppose that the \ac{sic} decoder decodes aircraft $l$ treating the aircraft not yet decoded, i.e., \mbox{$T_l$}, as interference.  If \mbox{$r_l\leq R_{l}^{T_l}$}, then the decoding of aircraft $l$ is successful and the interference caused by aircraft $l$ can be removed from the received signal. Hence, in the subsequent steps, 
we remove $l$  from $L$ and add it to $S^*$.
This shrinks the constraint set of the remaining aircraft in $L$. As a result, by Lemma~\ref{lemma:ssa}, the probability of a successful decoding of the remaining aircraft in $L$ increases. Therefore, each time a new aircraft is added to $S^*$, we retry to decode the remaining aircraft in $L$ under the interference of the shrunk constraint set.
This phase ends when all aircraft in the set $L$ hold the condition \mbox{$R_{l}^{T_l}<r_l$}. This means that there is no \ac{sic} decoding order that can successfully decode any of the aircraft in $L$.
%This phase ends when the algorithm can no longer find a new aircraft to add to $S^*$.
 %that can achieve the rate $R_G$ under the interference of aircraft that are not successfully decoded and canceled from the received signal. 
 
At the end of Algorithm~\ref{al:ssa}, we know that the aircraft in $\hat{S}$ are definitely in outage.
The aircraft in $L$  cannot be decoded using only \ac{sic} decoders, but they may be decoded when joint group decoders are applied. As a result, the aircraft in $S^*$ form the largest possible $S^*_{\text{SIC}}$. {In addition to identifying the elements of optimized $S^*_{\text{SIC}}$, we also find an optimal decoding order for the problem in \eqref{eq:probSSA}.} %The proof is given in Appendix \ref{ap:ssa}.

%The proof that this algorithm finds the largest set of aircraft that can be decoded at $R_G$ for a given channel matrix, excluding the group decoding options, is provided in the appendix.

%\subsection{Finding the Largest Set of Decodable Aircraft}

\subsection{Group Successive Algorithm (GSA)}
\ac{gsa} is presented in Algorithm~\ref{al:step2} and solves the optimization problem given in \eqref{eq:probGSA} by considering the joint group decoding options in addition to the \ac{sic} scheme.  Algorithm~\ref{al:step2} (\ac{gsa}) is an enhancement of Algorithm~\ref{al:ssa} (\ac{ssa}). In particular, Algorithm~\ref{al:step2} uses the outputs of Algorithm~\ref{al:ssa}  which are: $L$, $S^*$, and $\hat{S}$. % as described before. 
%The first phase of this algorithm is the \ac{ssa}, and in the subsequent phases, the outputs of the \ac{ssa} are used.
%The \ac{ssa} generates three sets: $L$, $S^*$, and $\hat{S}$. 
The set $S^*$ contains the aircraft that were successfully decoded in the \ac{ssa}; we assume that their influence on the received signal has been removed and therefore we do not consider these aircraft any further. $\hat{S}$  is the set of aircraft that are certainly in outage. Consequently, we always have to consider the interference caused by these aircraft, but we do not try to decode them. This further narrows the search space. 
The third set, $L$, contains the aircraft for which the following condition holds: 
\mbox{$\forall l\in L$, $R_l^{T_l}<r_l\leq R_l^{\hat{S}}$}. 
This means they can achieve the rate $r_l$ under the interference of outage set, $\hat{S}$, but cannot achieve this rate under the interference of their individual constraint sets, $T_l$. 
We try to determine if these aircraft can be decoded jointly as a group, or if they are definitely in outage. 
In the rest of Algorithm~\ref{al:step2}, we define \mbox{$\binom{L}{v}$} as the set of all distinct subsets of cardinality $v$ drawn from $L$. We also introduce a new set  $C$ to represent any combination of $v$ aircraft from $L$, such that \mbox{$C \in \binom{L}{v}$} and hence  \mbox{$|C|=v$}.

% where the set of all distinct subsets of cardinality v of the set L

\begin{algorithm}[t!]
\small
\centering
\caption{Group Successive Algorithm (GSA)}\label{al:step2}
\begin{algorithmic}[1]
\Require{$\textbf{H}$, $\textbf{r}=(r_1,\ldots,r_K)$}
\Ensure $S^*$, $\hat{S}$,$L$
%\Algphase{\textbf{Initalization}}
%\State U$=\{1,2,...,K\}$
%\State $S^*=\emptyset$ %\Comment{Array for aircraft that can be decoded at rate $R_G$}
%\State $\hat{S}=\emptyset$ %\Comment{Array for aircraft that are in outage}
\Algphase{\textbf{Phase 1 - Single Successive Algorithm}}
\State $L=\{1,2,...,K\}$, $S^*=\emptyset$, $\hat{S}=\emptyset$
\State $L$, $\hat{S}$ $\gets$ \Call{PruneAircraft}{$\textbf{H}$, $\textbf{r}$, $L$, $\hat{S}$}
\State $L$, $S^*$ $\gets$ \Call{GreedySIC}{$\textbf{H}$, $\textbf{r}$, $L$,  $S^*$, $\hat{S}$}
\Algphase{\textbf{Phase 2 - Prune Subsets}}
\State \rewtwo{$q=2$}\label{gsa:line:q2}
\While{$|L|\geq q$ \rewtwo{\textbf{and} $q_{\text{max}}\geq q$} }
    \State tmp=$\hat{S}$
    %\State Combinations=$\binom{L}{\rewtwo{q}}$ %\Comment{Combinations of U aircraft with size 2}
        \ForEach{$C \in\binom{L}{\rewtwo{q}}$}
            %\State $\textbf{H}_{:C}$ = submatrix($\textbf{H}$, $C$)
            %\Comment{ $\textbf{H}_{:C}$ is the matrix with $\{\textbf{h}_{:k}:k\in C\}$}
            %\State $\textbf{H}_{:\hat{S}}$ = submatrix($\textbf{H}, \hat{S}$)
            %\Comment{ $\textbf{H}_{:\hat{S}}$ is the matrix with $\{\textbf{h}_{:k}:k\in \hat{S}\}$}
             %\State $R_l=\log_{2}\det\Biggl(\textbf{I}_M+ \frac{P}{N_0}\textbf{h}_{:l}\textbf{h}_{:l}^{\text{H}} \left(\textbf{I}_M+\frac{P}{N_0}\textbf{H}_{:\hat{S}}\textbf{H}_{:\hat{S}}^{\text{H}}\right)^{-1} \Biggr)$
            \State $ \begin{aligned}[t] R_C^{\hat{S}}=\log_{2}\det\Biggl(&\textbf{I}_M+\ldots\\ &\frac{P}{N_0}\textbf{h}_{:C}\textbf{h}_{:C}^{\text{H}} \left(\textbf{I}_M+\frac{P}{N_0}\textbf{H}_{:\hat{S}}\textbf{H}_{:\hat{S}}^{\text{H}}\right)^{-1} \Biggr)\end{aligned}$\label{gsa:line:outage}
            \If{$\sum_{k\in C} r_{k} > R_C^{\hat{S}}$}
                \State Remove $C$ from $L$ \label{gsa:line:ph1out}
                \State $\hat{S} \gets \text{add } C$
                \State \textbf{break} \label{gsa:line:pBreak}
            \EndIf
        \EndForEach
    %\State \textbf{if} isequal(tmp,$\hat{S}$) \textbf{then break} 
    %\State $L$, $\hat{S}$ $\gets$ \Call{PruneAircraft}{$\textbf{H}$, $\textbf{r}$,  $L$, $\hat{S}$}\label{gsa:line:pRepeat}
    \rewtwo{\If{\textbf{not} isequal(tmp,$\hat{S}$)}
    %\State $L$, $S^*$ $\gets$ \Call{GreedySIC}{$\textbf{H}$,  $L$,  $S^*$, $\hat{S}$}
    \State $q=1$\label{gsa:line:pRepeat}
    \Else
    \State $q=q+1$ \label{gsa:line:qincrement}
    \EndIf}
\EndWhile
\Algphase{\textbf{Phase 3 - Greedy Group}}% Find group of aircraft can be decoded under the interference of aircraft in U and $\hat{S}$.}
\State $v=2$ \label{gsa:line:v2}
\While{$|L|\geq v$}\label{gsa:line:while}
    \State tmp=$L$
    %\State Combinations=$\binom{L}{v}$ %\Comment{Combinations of U aircraft with size $l$}
   
        \ForEach{$C \in \binom{L}{v}$}
            \State $T_C=(L\cup\hat{S})\symbol{92}\{C\}$\label{gsa:line:TC}
            
            \IfT $\begin{aligned}[t] \forall  S\subseteq C, \sum_{k\in S} &r_{k} \leq  R_{C,S}^{T_C}=\log_{2}\det\Biggl(\textbf{I}_M+...\\
               &\frac{P}{N_0}\textbf{H}_{:S}\textbf{H}_{:S}^{\text{H}} \left(\textbf{I}_M+\frac{P}{N_0}\textbf{H}_{:T_C}\textbf{H}_{:T_C}^{\text{H}}\right)^{-1} \Biggr)
           \end{aligned}$\label{gsa:line:constraint}
                    \State Remove $C$ from $L$
                    \State $S^* \gets \text{add } C$
                    \State \textbf{break} \label{gsa:line:break}
            \EndIfT
        \EndForEach
    \If{\textbf{not} isequal(tmp,$L$)}
    %\State $L$, $S^*$ $\gets$ \Call{GreedySIC}{$\textbf{H}$,  $L$,  $S^*$, $\hat{S}$} 
    \State $v=1$ \label{gsa:line:v1}
    \Else
    \State $v=v+1$ \label{gsa:line:vPlus}
    \EndIf
\EndWhile
\end{algorithmic}
\end{algorithm}

We refer to the next phase of Algorithm~\ref{al:step2} as \textit{PruneSubsets}. It  identifies the aircraft that are definitely in outage, thereby minimizing the size of $L$.  
%In this phase, the algorithm evaluates all aircraft combinations in $L$ of size 2, i.e., $v=2$.
\rewtwo{In this phase, the algorithm examines all subsets of $L$ whose cardinalities range from $1$ up to $q_{\text{max}}$.
For given \mbox{$C\subseteq L$}, where $|C|=q$, $R_C^{\hat{S}}$ is defined in line~\ref{gsa:line:outage} of Algorithm~\ref{al:step2}.} Hence, $R_C^{\hat{S}}$ is the achievable sum rate of aircraft in $C$ under the interference of the outage set, $\hat{S}$.
%Please note that, using Lemma~\ref{lemma:ssa}, we can also state that $R_{o}\geq R_{o}^{T_o}$ and $R_{p}\geq R_{o}^{T_p}$, since the outage set is a subset of the constraint set.
\begin{comment}
\begin{lemma}\label{lemma:gsa}
Let $C=\{o,p\}$, where both aircraft $o$ and $p$ transmit at rate $r_o$ and $r_p$, respectively, and $\hat{S}$ is the outage set, i.e., users in $\hat{S}$ are certainly in outage. If  \mbox{$R_o^{\hat{S}}\geq r_o$}, \mbox{$R_p^{\hat{S}}\geq r_p$}, and  $R_C^{\hat{S}}< r_o+r_p$, then both $o$ and $p$ are in outage.
\end{lemma}
\begin{proof}
We define \mbox{$\hat{S}_p=\{o\}\cup \hat{S}$} and \mbox{$\hat{S}_o=\{p\}\cup \hat{S}$}.
From \eqref{eq:K2}, we have the following relationship
\begin{equation}
    R_C^{\hat{S}}=R_{p}^{\hat{S}_p}+R_o^{\hat{S}}=R_{o}^{\hat{S}_o}+R_p^{\hat{S}}\text{ .}
\end{equation}
We are given the following inequalities
\begin{equation}
    R_o^{\hat{S}}\geq r_o\text{, } R_p^{\hat{S}}\geq r_p \text{, and } R_C^{\hat{S}}<r_o+r_p\text{ .}
\end{equation}
From $R_C< r_o+r_p$, we can conclude that the joint group decoding of aircraft $o$ and $p$ is not possible. Moreover, using the given inequalities, we can derive the following:
\begin{align}
    \begin{split}
        &R_C^{\hat{S}}-R_p^{\hat{S}}=R_{o}^{\hat{S}_o}<r_o\text{ ,}\\
        &R_C^{\hat{S}}-R_o^{\hat{S}}=R_{p}^{\hat{S}_p}<r_p\text{ .}
    \end{split}
\end{align}
These inequalities imply that %aircraft $o$ cannot be decoded under the interference of aircraft $p$, as $R_{C,o}<R_G$. Similarly, aircraft $p$ cannot be decoded under the interference of aircraft $o$, as $R_{C,p}<R_G$. As a result, 
neither aircraft $o$ nor aircraft $p$ can be decoded under each other's interference. Therefore, both aircraft are in outage.
\end{proof}
\end{comment}
\begin{lemma}\label{lemma:gsa}
\rewtwo{Let $C=\{o,p, \dots, w\}$ be a set of aircraft, where $2\leq |C|=q$, each transmitting at rates $\{r_o, r_p, \dots, r_{w}\}$, respectively. $\hat{S}$ is the outage set, i.e., users in $\hat{S}$ are certainly in outage. Suppose  \mbox{$\forall S \subset C$} with \mbox{$1\leq|S|< |C|$}, we have \mbox{$\sum_{k\in S}r_k \leq R_S^{\hat{S}}$}  and, at the same time,  \mbox{$\sum_{k\in C} r_k >R_C^{\hat{S}}$}. Then all aircraft in $C$ are in outage.}
\end{lemma}
\begin{proof}
\rewtwo{We define $S^{c}$ as the complementary set of $S$ such that $S$ and $S^{c}$ form a partition of $C$, i.e., \mbox{$S\cup S^c=C$}  and \mbox{$S\cap S^c=\emptyset$}. We also define $T_{S^{c}}=S\cup\hat{S}$.
Accordingly, from \eqref{eq:capacity}, we have 
\begin{equation}\label{eq:221}
    R_C^{\hat{S}}=R_{S^{c}}^{T_{S^{c}}}+R_S^{\hat{S}}\text{ ,}
\end{equation}
where $R_{S^{c}}^{T_{S^{c}}}$ denotes the achievable rate of the users in $S^{c}$ under the interference of $T_{S^{c}}$, and $R_S^{\hat{S}}$ denotes the achievable rate of the users in $S$ under the interference of $\hat{S}$ only.
It is given that $\sum_{k\in C} r_k > R_C^{\hat{S}}$.
Substituting into \eqref{eq:221}, we obtain
\begin{equation}\label{eq:222}
    \sum_{k\in C} r_k>R_{S^{c}}^{T_{S^{c}}}+R_S^{\hat{S}}\text{ .}
\end{equation}
Expressing \eqref{eq:222} as sums over the partitions, we have
\begin{equation}\label{eq:223}
    \sum_{k\in S^c} r_k + \sum_{k\in S} r_k  >R_{S^{c}}^{T_{S^{c}}}+R_S^{\hat{S}}\text{ .}
\end{equation}
Since, it is also given that for any subset \mbox{$S \subset C$}, where \mbox{$|S|< |C|$}, we have \mbox{$\sum_{k\in S} r_k \leq R_S^{\hat{S}}$}, we can replace \mbox{$\sum_{k\in S} r_k$} with $R_S^{\hat{S}}$ in the inequality to derive
\begin{equation}\label{eq:224}
    \sum_{k\in S^c} r_k + R_S^{\hat{S}} >R_{S^{c}}^{T_{S^{c}}}+R_S^{\hat{S}}\text{ ,}
\end{equation}
which simplifies to
\begin{equation}\label{eq:225}
    \sum_{k\in S^c} r_k >R_{S^{c}}^{T_{S^{c}}}\text{ .}
\end{equation}
%This means that $\forall S^c \in C$, the aircraft in $S^c$ cannot achieve their transmission rates under the interference of aircraft in $S$.}
This means $\forall S^c \subset C$, the aircraft in $S^c$ cannot achieve their transmission rates under the interference of aircraft in $S$.
Since $\sum_{k\in C} r_k > R_C^{\hat{S}}$, joint group decoding of the users in $C$ is not possible.
%Furthermore, from $\sum_{k\in S^c} r_k >R_{S^{c}}^{T_{S^{c}}}$, no subset $S^c$ of users can be decoded under interference from the rest of $C$, i.e., $S$. 
As a result, there is no subset of $C$ that can be successfully decoded. Therefore, all aircraft in $C$ are definitely in outage.}
\end{proof}
% , since all the aircraft in $L$ remained in after the \textit{PruneAircraft} phase.
% By Lemma~\ref{lemma:gsa}, if \mbox{$\sum_{k\in C} r_k> R_C^{\hat{S}}$} and \mbox{$\forall S\subset C, 1\leq |S|<|C|, \sum_{k\in S}r_k \leq R_S^{\hat{S}}$}, then the aircraft in $C$ are in outage.
%The challenging part in satisfying this lemma is ensuring that the condition \mbox{$\forall S\subset C, |S|<|C|, \sum_{k\in S}r_k \leq R_S^{\hat{S}}$} holds. 
\rewtwo{Using Lemma~\ref{lemma:gsa}, we identify more aircraft that are certainly in outage. We know that all aircraft in $L$ satisfy \mbox{$\forall l \in L, r_l\leq R_l^{\hat{S}}$}, since they were  not moved to $\hat{S}$ during the \textit{PruneAircraft} phase and the set $\hat{S}$ has not change since then. This corresponds to all size 1 combinations of $L$, i.e., $\binom{L}{q}$ with $q=1$. Therefore, we next evaluate the subsets of size 2, setting $q=2$ in line~\ref{gsa:line:q2} of Algorithm~\ref{al:step2}. During the \textit{PruneSubsets} phase, the value of $q$ is updated based on the following events:}
\begin{enumerate}
    \item \rewtwo{If $\forall C \in \binom{L}{q}, \sum_{k\in C} r_k\leq R_C^{\hat{S}}$, then we increment the  value of $q$ by one (see line~\ref{gsa:line:qincrement}).}
    \item \rewtwo{If any $C\in \binom{L}{q}$ fails to satisfy the condition \mbox{$\sum_{k\in C} r_k \leq R_C^{\hat{S}}$}, then  the  aircraft in $C$ are in outage by Lemma~\ref{lemma:gsa}. 
    %This conclusion is valid because we already know that all combinations of 
    This is because we know that aircraft in $L$ with cardinalities from 1 up to $q-1$ satisfy their respective conditions, i.e.,  \mbox{$\forall S\subset L, 1\leq |S| \leq q-1, \sum_{k\in S} r_k \leq R_S^{\hat{S}}$}. 
    Accordingly, the aircraft in $C$ are removed from $L$ and added to $\hat{S}$ (see line~\ref{gsa:line:ph1out}). As the number of aircraft in $\hat{S}$ increases, some aircraft in 
    $L$ may no longer satisfy their respective conditions. For example, for an aircraft $l\in L$, the condition  $r_l\leq R_l^{\hat{S}}$ might no longer hold. Therefore, subsets of $L$ starting from $q=1$ must be re-evaluated. Each time new aircraft are added to $\hat{S}$, the ongoing \textit{for-loop} is interrupted (see line~\ref{gsa:line:pBreak}) and $q$ is reset to 1 (see line~\ref{gsa:line:pRepeat}).}  
    %Please note that when $q$ is set to 1, the procedure becomes equivalent to calling the \textit{PruneAircraft} function introduced in Algorithm~\ref{al:ssa}. With this, we check whether the condition \mbox{$r_l\leq R_l^{\hat{S}}$} still holds for all remaining aircraft in $L$ for the updated $\hat{S}$.  If an aircraft no longer satisfies the condition, then the aircraft is in outage; consequently, it is removed from $L$, added to  $\hat{S}$, and $q$ is reset to 1. The iteration for $q=1$ continues until no further aircraft in $L$ can be identified as in outage.Once the iteration for $q=1$ is complete, the value of $q$ is incremented by one (see line \ref{gsa:line:qincrement}).
\end{enumerate}
\rewtwo{The \textit{PruneSubsets} phase ends when either \mbox{$q > q_{\text{max}}$} or no aircraft remain in $L$.} 
\rewtwo{By reducing the number of aircraft in $L$, this  phase significantly lowers the computational complexity of the subsequent phase.
The overall computational complexity of Algorithm~\ref{al:step2} depends on the value of $q_{\text{max}}$. The choice of $q_{\text{max}}$ that minimizes computational complexity is contingent upon $K$, $M$, and the transmission rates of the aircraft.
In this paper, we set $q_{\text{max}}=2$, with the rationale provided in Appendix~\ref{ap:q_max}.
}

The final phase of Algorithm~\ref{al:step2} is called \textit{GreedyGroup}. In this step, the remaining aircraft in $L$ are evaluated. %to find those that can be decoded at rate $R_G$. %The algorithm starts by evaluating all aircraft combinations in $L$ of size 2.
The set $C$ represents one combination of $v$ aircraft in $L$, i.e., \mbox{$C\in \binom{L}{v}$}, and hence \mbox{$|C|=v$}.  Initially, $v$ is set to $2$ in line~\ref{gsa:line:v2} of Algorithm~\ref{al:step2}, the value of $v$ changes during the course of the function.
In line~\ref{gsa:line:TC} of Algorithm~\ref{al:step2}, \mbox{$T_C=(L\cup\hat{S})\symbol{92}\{C\}$} is defined as the constraint set of $C$ and contains all aircraft in $L$ and $\hat{S}$ except $C$. $R_{C,S}^{T_C}$ is defined in line~\ref{gsa:line:constraint} of Algorithm~\ref{al:step2} as the achievable sum rate of aircraft in $S$, which is a subset of $C$, under the interference of $T_C$. 
For a given value of $v$ and $L$, there are two possibilities:
\begin{enumerate}
    \item
    If \mbox{$\forall S\subseteq C, \sum_{l\in S}r_l\leq R_{C,S}^{T_C} $}, then the aircraft in $C$ can be decoded successfully, and their interference can be cancelled. 
    To this end, we remove the aircraft in $C$ from $L$ and add them to $S^*$.
    It is important to note that this condition is equivalent to the one given in \eqref{eq:con_GSA}, where $C$ is defined by $S^*$ and $T_C$ is defined by $\hat{S}$.
    %This shrinks the constraint set of the remaining aircraft in $L$, since the following condition may hold: $r_l\leq R_l^{T_l}$ for some $l \in L$.
    Moreover, by Lemma~\ref{lemma:ssa}, removing aircraft in  $C$ from $L$ increases the probability of successful decoding for the remaining aircraft in $L$, as it reduces the number of aircraft in their individual constraint set.
    Therefore, each time some aircraft are removed from $L$, we break the \textit{for-loop} on line~\ref{gsa:line:break} of Algorithm~\ref{al:step2}  and change the value of $v$ to 1, as shown on line~\ref{gsa:line:v1}. Setting $v$ to 1 corresponds to considering a \ac{sic} decoder. This change in the value of $v$ is made because, with the shrunken constraint set, some aircraft in  $L$ may now be decoded using the \ac{sic} decoder. 
\item
    If none of the combinations of $v$ aircraft in $L$ satisfy the condition \mbox{$\forall S\subseteq C, \sum_{l\in S}r_l\leq R_{C,S}^{T_C} $}, i.e., there is no group $C$ with size $v$ in $L$ that can be successfully decoded, then we increment the value of $v$ by one, as shown on line~\ref{gsa:line:vPlus} of Algorithm~\ref{al:step2}. Increasing the value of $v$ means that we now consider decoding the aircraft in $L$ in a larger group.
\end{enumerate}
In summary, the value of $v$ changes in two scenarios during the \textit{GreedyGroup} phase: First, whenever some aircraft are removed from $L$ and added to $S^*$, then $v$ is set to 1. Second, if we cannot find any group of aircraft that can be successfully decoded at the current value of  $v$, then $v$ is incremented by 1, i.e., \mbox{$v=v+1$}.
The \textit{GreedyGroup} phase ends either when $|L|=0$, meaning all $K$ aircraft have been already assigned to either $S^*$ or $\hat{S}$, or when $|L|<v$, indicating that the aircraft in $L$ are certainly in outage, along with the aircraft in $\hat{S}$. With Algorithm~\ref{al:step2}, we find  the optimal set $S^*$ for the problem given in \eqref{eq:probGSA}. The proof is given in Appendix \ref{ap:gsa}.
%This is because the aircraft in $L$ are tried to be decoded for cannot be decoded for $|L|=v$ then it means the aircraft in $L$

\begin{comment}
This is very unclear … in fact, you cannot assume that somebody reads step by step the algorithms … 
you have to say that in the algorithm v = |C| denotes the size of C, which is searched over all the combinations 
{L \choose v}, when a decodable set 
set C of size v is found, then the residual set L is reconsidered by greedy SIC … to see is some other users can be decoded with the updated constraint set  (isn’t this equivalent to reset v = 1 by the say?? why you have to do consider a greedy SIC ? isn’t the case v = 1 inclusive of all singleton sets C i.e., the first pass of green group fro v = 1 is greedy SIC ???? 

anyway .. explain better.

\end{comment}

\subsection{Limited Group Successive Algorithm (LGSA)}
The implementation of joint group decoders comes at the cost of high complexity, and as a result, joint group decoding of large groups of aircraft may not be feasible in real-world applications for the time being. \ac{lgsa} addresses this by limiting the number of aircraft that can be decoded jointly, which is achieved by a  modification of Algorithm~\ref{al:step2} (\ac{gsa}). In \ac{lgsa}, the  \textit{while-loop} condition in line~\ref{gsa:line:while} of Algorithm~\ref{al:step2} is modified from $|L|\geq v$ to $v_{\text{max}}\geq v$. This change forces the algorithm to ensure that group sizes remain less than or equal to $v_{\text{max}}$.
While this reduces the complexity, it comes at the expense of a higher outage probability.
\rewfour{ Determining a feasible value for   $v_{\text{max}}$ requires consideration of the available hardware capabilities and the specific implementation of joint group decoding.  As this investigation is beyond the scope of this paper, we instead present representative examples for $v_{\text{max}}\in \{2,4\}$  in Section~\ref{sec:results}.}

\section{Literature Review of Existing Decoding Ordering Strategies}\label{sec:receiver}
The outage probability of \ac{sic} decoders depends on the decoding order, as explained in Section~\ref{sec:capacity}. 
To this end, in this section we present two decoding order strategies from the literature and compare their performance with \ac{ssa} in Section~\ref{sec:results}. The first is \mbox{V-BLAST}~\cite{vblast}, which is a prominent decoding order in the literature.  The second is the decoding order proposed in~\cite{order2}, which is based on the \ac{cgtr} of the aircraft. %Lastly, we consider decoding users in a random order,

The set of non-outage aircraft for a given decoding order,  \mbox{$i_1\rightarrow i_2\rightarrow \dots \rightarrow i_K$}, is defined as $S^*_{\text{DO}}$.
Algorithm \ref{alg:outage_condition} shows how to determine $S^*_{\text{DO}}$.
\begin{algorithm}
\small
\caption{Non-Outage Aircraft for a Given Decoding Order}\label{alg:outage_condition}
\begin{algorithmic}[1]
\State $\hat{S}=\emptyset$
\For{$u=1,2,...,K$}
\State $F_u=\hat{S}\cup \{i_{u+1},\dots,i_K\}$\label{sic:line:T}
\If{$r_{i_u} \leq R_{i_u}^{F_u} $}
\State $S^*_{\text{DO}}\gets \text{add } i_u$
\Else
\State $\hat{S}\gets \text{add } i_u$
\EndIf
\EndFor
\end{algorithmic}
\end{algorithm}
In line \ref{sic:line:T} of Algorithm \ref{alg:outage_condition}, the aircraft that interferes with the $i_u$-th aircraft is defined as $F_u$.
The achievable rate for the aircraft $i_u$, $R_{i_u}^{F_u}$, is obtained using \eqref{eq:sic}, where $F_u$ corresponds to $T_u$. By simulating the aircraft positions multiple times, we compute the outage probability, $P_{\text{out}}^{\text{DO}}=1-\frac{\mathbb{E}[|S^*_{\text{DO}}|]}{K}$.

\subsection{V-BLAST}
\mbox{V-BLAST} maximizes the minimum \ac{sinr} at the detector output, by decoding the signal with the highest \ac{sinr} at each iteration \cite{vblast}. 
\begin{lemma}\label{lemma:proof}
If $r_k=r_G, \forall k \in \{1,2,\dots,K\}$, then an optimal decoding order of the optimization problem given in \eqref{eq:probSSA} is decoding the aircraft with the highest \ac{sinr} at each step of the \ac{sic} process.%, among the aircraft that are not decoded yet. %This involves selecting the aircraft with the highest \ac{sinr} and decoding it first. Then, the aircraft with the highest \ac{sinr} among the remaining ones is selected and decoded. This process is repeated until no further aircraft can be successfully decoded.
\end{lemma}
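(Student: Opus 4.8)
The plan is to reduce \eqref{eq:probSSA} in the equal-rate case to maximizing the length of the initial ``success run'' of a \ac{sic} decoding order, and then to prove by an exchange argument that the greedy rule behind \ac{vblast} --- decode, at each step, the not-yet-decoded aircraft with the largest achievable rate --- is optimal.

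First I would unpack the equal-rate structure. For a decoding order $i_1\to i_2\to\cdots\to i_K$ we have $T_u=\{i_k:k>u\}=\{1,\dots,K\}\setminus\{i_1,\dots,i_u\}$, so the decodability condition in \eqref{eq:condition} for aircraft $i_u$ reads $R_{i_u}^{\{1,\dots,K\}\setminus\{i_1,\dots,i_u\}}\ge r_G$, and $|S^*_{\text{SIC}}|=u^*-1$ is precisely the number of consecutive successes before the first failure. I would then observe that \ac{vblast}'s ``highest output \ac{sinr}'' rule coincides with the ``largest $R_{i_u}^{T_u}$'' rule: by the identity $\det(\textbf{I}+\textbf{AB})=\det(\textbf{I}+\textbf{BA})$ the quantity defined in \eqref{eq:sic} equals $\log_2(1+\mathrm{SINR}^{\mathrm{MMSE}}_{i_u})$, a strictly increasing function of the MMSE output \ac{sinr}, and the common target $r_G$ corresponds to a common \ac{sinr} threshold $2^{r_G}-1$. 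Hence \ac{vblast} decodes, at each step, exactly the remaining aircraft of largest achievable rate given the set of aircraft removed so far.

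Now let $D$ be the set of aircraft \ac{vblast} decodes successfully and $m_V=|D|$. When \ac{vblast} halts it has just attempted the largest-rate aircraft in $\{1,\dots,K\}\setminus D$ and failed; since that aircraft attains the largest rate, for \emph{every} $k\notin D$ one has $R_k^{(\{1,\dots,K\}\setminus D)\setminus\{k\}}<r_G$. Suppose, toward a contradiction, that some order yields a success run of length $m>m_V$, i.e.\ distinct $w_1,\dots,w_m$ with $R_{w_u}^{\{1,\dots,K\}\setminus\{w_1,\dots,w_u\}}\ge r_G$ for all $u\le m$. Since $m>|D|$, some $w_j$ lies outside $D$; take the smallest such index $j$, so that $\{w_1,\dots,w_{j-1}\}\subseteq D$ and $w_j\notin D$. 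Then $\{w_1,\dots,w_j\}\subseteq D\cup\{w_j\}$, which gives the inclusion of interfering sets $\{1,\dots,K\}\setminus\{w_1,\dots,w_j\}\supseteq(\{1,\dots,K\}\setminus D)\setminus\{w_j\}$. By monotonicity of the achievable rate in the interfering set (Lemma~\ref{lemma:ssa}), $R_{w_j}^{\{1,\dots,K\}\setminus\{w_1,\dots,w_j\}}\le R_{w_j}^{(\{1,\dots,K\}\setminus D)\setminus\{w_j\}}<r_G=r_{w_j}$, contradicting success at step $j$ of the $w$-order. Therefore $m\le m_V$, so \ac{vblast} attains the maximum in \eqref{eq:probSSA}.

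The step I expect to be the main obstacle is the invocation of Lemma~\ref{lemma:ssa}: one must argue that, at the \emph{first} index $j$ where a competing order decodes an aircraft not already in \ac{vblast}'s decoded set $D$, the interference that aircraft faces is a superset of what it would face if only $D$ had been removed --- this is exactly where the minimality of $j$ (so that $w_1,\dots,w_{j-1}\in D$) and the rate-monotonicity are both indispensable. The rate--\ac{sinr} equivalence is standard but should be stated explicitly so that ``highest \ac{sinr}'' in the statement is unambiguously the greedy rate-maximizing rule; ties in \ac{sinr} may be broken arbitrarily without affecting the argument, and the degenerate cases $j=1$ or $m_V=0$ are already covered by the set inclusions above.
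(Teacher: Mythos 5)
Your proof is correct, and it takes a genuinely different route from the paper's. The paper proves the lemma by a local exchange argument: it takes the two highest-SINR not-yet-decoded aircraft $o$ and $p$ (with $R_o^{T_o}\geq R_p^{T_p}\geq R_k^{T_k}$ for all other $k$), compares the two orders $p\rightarrow o\rightarrow T_{o,p}$ and $o\rightarrow p\rightarrow T_{o,p}$, and runs a three-case analysis on where $r_G$ falls relative to $R_o^{T_o}$ and $R_p^{T_p}$, concluding that decoding the highest-SINR aircraft first is never worse and sometimes strictly better; global optimality is then obtained by applying this reasoning at each step. Your argument is instead global: you reduce $|S^*_{\text{SIC}}|$ to the length of the initial success run, use the halting property of the greedy rule (when \mbox{V-BLAST} stops, the \emph{maximal}-rate remaining aircraft fails, hence every aircraft outside the decoded set $D$ fails under the residual interference $(\{1,\dots,K\}\setminus D)\setminus\{k\}$), and then show by Lemma~\ref{lemma:ssa} that in any competing order the first decoded aircraft outside $D$ faces a superset of that interference and therefore also fails, contradicting a longer run. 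Both proofs hinge on the same monotonicity lemma, but yours buys a complete, self-contained proof of global optimality over \emph{arbitrary} competing orders (including edge cases $m_V=0$ and $m_V=K$) without needing to iterate a pairwise swap, and it makes explicit the rate--SINR equivalence that the paper only uses implicitly; the paper's version buys a more intuitive illustration of why deferring a high-SINR user can only hurt, at the price of leaving the induction over steps and the comparison against orders starting from an arbitrary (not top-two) aircraft largely implicit.
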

\begin{proof}
Let  $o$ and $p$ be any two aircraft and $T_{o,p}$ be the set of all the other aircraft that have not yet been successfully decoded. Accordingly, we define $T$, $T_{o}$, $T_{p}$ as follows
\begin{align}
    \begin{split}
        &T=\{o,p\}\cup T_{o,p} \text{,}\\
        &T_o=\{p\}\cup T_{o,p}\text{,}\\
        &T_p=\{o\}\cup T_{o,p}\text{.}
    \end{split}
\end{align}
%\[T=\{o,p\}\cup T_{o,p} \text{, }\\ T_o=\{p\}\cup T_{o,p}\text{,}\\ T_p=\{o\}\cup T_{o,p}.\]

By Lemma~\ref{lemma:ssa}, $R_p^{T_{o,p}}\geq R_p^{T_p}$ and $R_o^{T_{o,p}}\geq R_o^{T_o}$, since $T_{o,p}$ is a subset of $T_p$ and $T_o$.
% where $R_p^{T_{o,p}}$ and $R_o^{T_{o,p}}$ corresponds to achievable rates for aircraft $o$ and $p$, respectively, under the interference of aircraft in $T_{o,p}$.
We consider two decoding orders:
\begin{itemize}
     \item \textit{Decoding order 1:} $p\rightarrow o\rightarrow T_{o,p}$ \\
    %In this case, $p$ is decoded first, followed by $o$ and then the aircraft in $T_{o,p}$.
    The achievable rate for aircraft $p$, $R_p^{T_p}$ , and in the case of successful decoding of aircraft $p$, the achievable rate for aircraft $o$ is  $R_o^{T_{o,p}}$.
    \item \textit{Decoding order 2:} $o\rightarrow p\rightarrow T_{o,p}$ \\
    %In this case, $o$ is decoded first, followed by $p$ and then the aircraft in $T_{o,p}$.
    The achievable rate for aircraft $o$, $R_o^{T_o}$, and in the case of successful decoding of aircraft $o$, the achievable rate for aircraft $p$ is $R_p^{T_{o,p}}$.
\end{itemize}
Let us assume that $R_o^{T_o}\geq R_p^{T_p}\geq R_k^{T_k},\forall k\in T_{o,p}$ where \mbox{$T_k=T\symbol{92}\{k\}$}, which contains all elements of $T$ except $k$.
This means that aircraft $o$ has the highest \ac{sinr} followed by $p$.
%In order to evaluate whether aircraft $o$ and $p$ can be successfully decoded, we need to assess the relationship between the achievable rates for the cases mentioned above and $R_G$. 
There are three possibilities:
\begin{itemize}
    \item \textit{Possibility 1: }$R_o^{T_o}\geq R_p^{T_p}\geq r_G$\\
    Both decoding orders can successfully decode aircraft $o$ and $p$, since $R_p^{T_{o,p}}\geq R_p^{T_p}$ and $R_o^{T_{o,p}}\geq R_o^{T_o}$. As a result, $\{o,p\}\in S^*_{\text{MMSE-SIC}}$.
    \item \textit{Possibility 2: }$r_G \geq R_o^{T_o}\geq R_p^{T_p}$\\
    Neither of the decoding orders can decode the first aircraft that needs to be decoded. Moreover, given that $R_o^{T_o}\geq R_p^{T_p}\geq R_k^{T_k},\forall k\in T_{o,p}$  no other aircraft in $T$ can be successfully decoded with any other decoding order.
    \item \textit{Possibility 3: }$R_o^{T_o}\geq r_G\geq  R_p^{T_p}$\\
    \begin{itemize}
        \item \textit{Decoding order 1} cannot decode aircraft $p$, so the interference caused by $p$ cannot be removed. Aircraft $o$ can be decoded under the interference of $p$. All the aircraft in $T_{o,p}$ must be decoded under the interference of $p$, which reduces their chances of successful decoding.
        % to comply with the condition given in \eqref{eq:condition}, the decoding process does not continue with the decoding of the other aircraft in $T$. Hence, $T\not\in S^*_{\text{SIC}}$.
        \item \textit{Decoding order 2} can decode aircraft $o$. Since $R_p^{T_{o,p}}\geq R_p^{T_p}$, there is a chance that aircraft $p$ might also be successfully decoded. If $r_G\geq R_p^{T_{o,p}}\geq R_p^{T_p}$, then the  results will be the same as \mbox{\textit{decoding order 1}}. However, if $R_p^{T_{o,p}}\geq r_G\geq  R_p^{T_p}$, then aircraft $p$ can be successfully decoded, and the aircraft in $T_{o,p}$ will be decoded in the absence of interference from $o$ and $p$.  This increases the probability of their successful decoding.
    \end{itemize}
\end{itemize}
Based on the analysis of these three possibilities, the outcome of the decoding orders differs only in  \textit{possibility 3}. In this case, \textit{decoding order 2} clearly leads to better results. Therefore, for this given problem, the optimal decoding order is to decode the aircraft with the highest \ac{sinr} at each iteration.
\end{proof}
By Lemma~\ref{lemma:proof} we can claim that \mbox{V-BLAST}  finds an optimal solution for the problem given in \eqref{eq:probSSA}, for an equal-rate system, but it may not find the optimal solution for the variable-rate system.

\subsection{Channel Gain and Transmission Rate (CGTR)}
%In \cite{order2}, the decoding order problem is studied for a power-controlled system, where users could transmit at different power levels in order to optimize the sum rate of the system. %The authors of \cite{order2} assume that each user must meet a certain minimum rate. While the minimum rate requirement is similar to our scenario, we do not consider a power-controlled system. This is because having aircraft with different transmit powers is not a viable option in \ac{atm} systems due to regulatory constraints \cite{wrc-07}. 

In \cite{order2},  the decreasing order of $|\textbf{h}_{:k}|^{2}(1+\frac{1}{2^{r_k}+1})$ is proposed as the optimal decoding order to maximize the sum rate, i.e., $\max_{\Pi}\sum_{k\in S^*_{\text{SIC}}}r_k$. For an equal-rate system, this decoding order leads to decreasing order  of $|\textbf{h}_{:k}|^{2}$ since $(1+\frac{1}{2^{r_G}+1})$ becomes a constant. Moreover, in an equal-rate system, optimizing the sum rate minimizes the outage probability, which is the objective we pursue in this paper.

Despite the fact that the study in \cite{order2} does not explicitly specify the number of antenna elements at the receiver, their approach %only optimizes single-antenna receivers, not
applies only to single antenna receivers. %  not %and cannot be used for
%multiple antenna receivers. 
This is because the linear dependencies between the users' channel vectors are not considered in \cite{order2}.
In the multiple antenna \ac{noma} concept, the interference caused by one user to another depends not only on the power of the interfering signal but also on the linear dependence of the users' channel vectors. If the channel vectors are orthogonal, there will be no interference. However, if the channel vectors are linearly dependent, then there can be significant interference.

\section{Numerical Results and Discussion} \label{sec:results}
In this section, we evaluate the performance of the proposed algorithms, \ac{ssa}, \ac{gsa}, and  \ac{lgsa}. %, as introduced in Section~\ref{sec:algs}.
We also compare their performance with the decoding ordering strategies from \cite{vblast} and \cite{order2}. %, as presented in Section~\ref{sec:receiver}.
Evaluations are performed for two types of systems:  equal-rate and variable-rate. In an equal-rate system, all aircraft transmit at the guaranteed rate, $r_G$, of the system, i.e.,  \mbox{$r_k=r_G, \forall k \in [K]$}, where \mbox{$[K]=\{1,2,\dots,K\}$}. This approach allows us to understand the relationship between the transmission rate and the outage probability, $P_{\text{out}}$, and to analyze the range of rates that could be supported in the given geometry-based stochastic channel model. Based on these results, we determine the transmission rate range in variable-rate simulations. Accordingly, in the variable-rate simulations, we focus on the relationship between the number of aircraft sharing a non-orthogonal channel, $K$, and $P_{\text{out}}$, \rewone{as well as the impact of the number of antenna elements in the array on $P_{\text{out}}$.}
%We begin by comparing \ac{ssa} and \ac{gsa} in an equal-rate system. Next, we
%motivate for \ac{lgsa}, and then compare all three proposed algorithms with the decoding strategies proposed in  \cite{vblast} and \cite{order2} for both equal-rate and variable-rate systems. 
Finally, we evaluate the computational complexity of the proposed algorithms and compare them to a brute-force approach. % The results are obtained for a \ac{rec} with $M=64$. %The area of an L-band \mbox{$8\times 8$} antenna array, with half-wavelength spacing between the antenna elements, is approximately \qty{1.1}{\meter\squared}.% can be %computed by $((\sqrt{M}-1)\cdot \delta)^2$, where $\delta$ is half wavelength, accordingly for $M=64$ and  using L-band antennas it occupies an area of 
%\subsection{SSA vs. GSA in an equal-rate System}
\begin{figure}[t]
     \centering
     \resizebox{0.45\textwidth}{!}{% This file was created by matlab2tikz.
%
%The latest updates can be retrieved from
%  http://www.mathworks.com/matlabcentral/fileexchange/22022-matlab2tikz-matlab2tikz
%where you can also make suggestions and rate matlab2tikz.
%
\definecolor{mycolor1}{rgb}{0.85098,0.32549,0.09804}%kirmizi
\definecolor{mycolor2}{rgb}{0.00000,0.44706,0.74118}%mavi
\definecolor{mycolor3}{rgb}{1.0, 0.65, 0.0}
\definecolor{mycolor4}{rgb}{0.49412,0.18431,0.55686}%mor
\pgfplotsset{compat=1.10}
\begin{tikzpicture}

\begin{axis}[%
width=4.521in,
height=3.566in,
at={(0.758in,0.481in)},
scale only axis,
title  ={\Large Equal-Rate System: $r_k=r_G, k\in [K]$},
xmin=2,
xmax=10,
xtick={2, 2, 3, ..., 15},
xticklabel style = {font=\large},
xlabel style={font=\color{white!15!black}},
xlabel={\Large $r_G$ [bps/Hz]},
ymode=log,
ymin=3e-4,
ymax=1,
ylabel style={font=\color{white!15!black}},
yticklabel style = {font=\large},
ylabel={\Large $P_{\text{out}}$},
axis background/.style={fill=white},
extra y ticks={0.05},
extra y tick style={xticklabel=\pgfmathprintnumber{\tick}},
xmajorgrids,
ymajorgrids,
yminorgrids,
legend style={at={(0.99,0.25)}, legend cell align=left, align=left, draw=white!15!black}
]

\addplot [color=mycolor4, line width=2.0pt, mark options={solid, mycolor4}]
  table[row sep=crcr]{%
1	0\\
2	0.000429687500000053\\
3	0.002698125\\
4	0.00563437499999997\\
5	0.00992656250000001\\
6	0.0161371875\\
7	0.0256046875\\
8	0.0422740625\\
9	0.1509109375\\
10	0.493485\\
11	0.7410790625\\
12	0.8687203125\\
13	0.953\\
14	0.977\\
15	0.987\\
};
\addlegendentry{\large $K:1$ \rewfour{(OMA)}}

\addplot [color=mycolor1, line width=2.0pt, mark=*, mark size=2.5000pt, mark options={solid, mycolor1}] table[row sep=crcr]{%
1	0\\
2	0.000433437500000022\\
3	0.00271687499999995\\
4	0.00575375\\
5	0.0112978124999999\\
6	0.0306465625\\
7	0.0680403125\\
8	0.1365353125\\
9	0.3193503125\\
10	0.6284265625\\
11	0.79375\\
12	0.9085\\
13	0.955125\\
14	0.9775\\
15	0.989375\\
16	0.995875\\
17	0.998125\\
18	0.999125\\
19	1\\
20	1\\
};
\addlegendentry{\large $K:8$}

\addplot [color=mycolor2, line width=2.0pt, mark=square*, mark size=2.5000pt, mark options={solid, mycolor2}]
  table[row sep=crcr]{%
1	0\\
2	0.000434687499999975\\
3	0.00272656250000003\\
4	0.00601968750000004\\
5	0.0160253125000001\\
6	0.0629790625\\
7	0.1555296875\\
8	0.300100625\\
9	0.5333690625\\
10	0.7675459375\\
11	0.8811875\\
12	0.95\\
13	0.9769375\\
14	0.9863125\\
15	0.9925\\
16	0.9964375\\
17	0.998625\\
18	1\\
19	1\\
20	1\\
};
\addlegendentry{\large $K:16$}

\addplot [color=mycolor3, line width=2.0pt, mark=triangle*, mark size=3.5000pt, mark options={solid, mycolor3}]
  table[row sep=crcr]{%
1	0\\
2	0.000441562500000048\\
3	0.00276343749999997\\
4	0.00879593749999996\\
5	0.0596840625\\
6	0.24021724455201\\
7	0.478999498495487\\
8	0.688063063063063\\
9	0.83259375\\
10	0.91021875\\
11	0.9524375\\
12	0.97453125\\
13	0.9855625\\
14	0.99115625\\
15	0.99459375\\
16	0.99721875\\
17	0.9986875\\
18	0.99953125\\
19	1\\
20	1\\
};
\addlegendentry{\large  $K:32$}

\begin{comment}
\addplot [color=black, line width=2.0pt, mark options={solid, black}]
  table[row sep=crcr]{%
1	0.016345625\\
2	0.0339790625\\
3	0.063695625\\
4	0.465564375\\
5	0.733689375\\
6	0.8670796875\\
7	0.9314446875\\
8	0.963734375\\
9	0.9828953125\\
10	0.9952259375\\
11	0.9982115625\\
12	0.9994471875\\
13	1\\
14	1\\
15	1\\
16	1\\
17	1\\
18	1\\
19	1\\
20	1\\
};
\addlegendentry{\large SISO}
\end{comment}

\addplot  [color=mycolor1, line width=2.0pt, dashed, mark=*, mark size=2.5000pt, mark options={solid, mycolor1}] table[row sep=crcr]{%
1	2.81250000000455e-06\\
2	0.004953125\\
3	0.0148871875\\
4	0.02929875\\
5	0.05038875\\
6	0.0808653125\\
7	0.125236875\\
8	0.1922521875\\
9	0.3510121875\\
10	0.6378171875\\
11	0.79725\\
12	0.90925\\
13	0.955125\\
14	0.9775\\
15	0.989375\\
16	0.995875\\
17	0.998125\\
18	0.999125\\
19	1\\
20	1\\
};
%\addlegendentry{SSA 8}

\addplot [color=mycolor2, line width=2.0pt, dashed, mark=square*,, mark size=2.5000pt, mark options={solid, mycolor2}]
  table[row sep=crcr]{%
1	8.12500000002547e-06\\
2	0.010431875\\
3	0.03220375\\
4	0.0674353125\\
5	0.1193709375\\
6	0.1904184375\\
7	0.2820734375\\
8	0.3986828125\\
9	0.5770959375\\
10	0.7780565625\\
11	0.885\\
12	0.9503125\\
13	0.9769375\\
14	0.9863125\\
15	0.9925\\
16	0.9964375\\
17	0.998625\\
18	1\\
19	1\\
20	1\\
};
%\addlegendentry{SSA 16}

\addplot [color=mycolor3, line width=2.0pt, dashed, mark=triangle*, mark size=3.5000pt, mark options={solid, mycolor3}]
  table[row sep=crcr]{%
1	5.43750000000509e-05\\
2	0.025281875\\
3	0.09543625\\
4	0.2167878125\\
5	0.3673934375\\
6	0.5206396875\\
7	0.642375\\
8	0.75746875\\
9	0.85409375\\
10	0.91671875\\
11	0.95365625\\
12	0.97475\\
13	0.9855625\\
14	0.99115625\\
15	0.99459375\\
16	0.99721875\\
17	0.9986875\\
18	0.99953125\\
19	1\\
20	1\\
};
%\addlegendentry{SSA 32}

\node(S_origin) at (axis cs:5.5, 0.075){};
\node(S_destination) at (axis cs:5.5,	0.006){};
\node(G_origin) at (axis cs:7.5, 0.11){};
\node(G_destination) at (axis cs:7.5,	0.006){};

 %Axis
\draw[-stealth,  line width=0.8mm, draw opacity=0.751] (S_origin)--(S_destination);
\draw[-stealth,  line width=0.8mm, draw opacity=0.751] (G_origin)--(G_destination);

% Node
\node[black] (SSA) at (axis cs:5.5, 0.005){\Large SSA};
\node[black](GSA) at (axis cs:7.5,	0.005){\Large GSA};
%\node[red] (r) at (axis cs: 0.32, 0.55){\LARGE $r_m$};

\end{axis}
\end{tikzpicture}%}
     %\caption{all antenna array conf, spacing, num elements: RG=4  impact of receiver}
     \caption{Outage probability, $P_{\text{out}}$, for an equal-rate system. The results are obtained for a \ac{rec} with $M=64$ and for  varying number of aircraft, $K$. The dashed line represents \ac{ssa} and the solid line corresponds to \ac{gsa}.}
     \label{fig:03_P_out}
\end{figure}
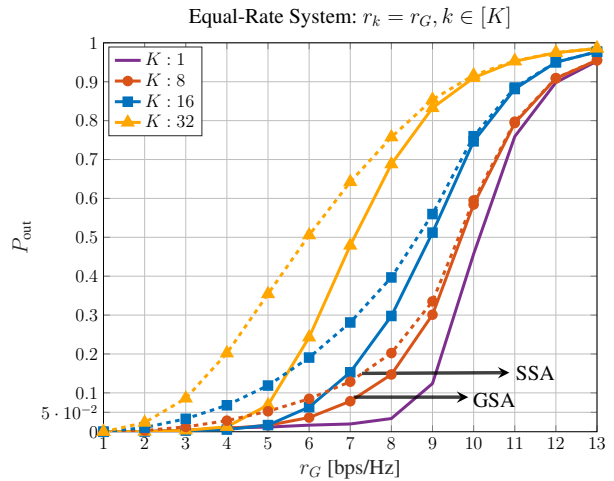

\subsection{Evaluation of Proposed Algorithms: SSA, GSA and LGSA}
%We start by comparing \ac{ssa} and \ac{gsa}  for an equal-rate system, where all the aircraft in the system transmit at the same guaranteed rate, $r_G$, i.e.,  $r_k=r_G, \forall k \in [K]$, where $[K]=\{1,2,\dots,K\}$.

Figure~\ref{fig:03_P_out} compares the achievable outage probabilities using \ac{ssa} and \ac{gsa} for an equal-rate system.
\rewfour{ In this setting, the case of $K=1$ corresponds to an \ac{oma} configuration, since only a single aircraft transmits over a given time-frequency resource block. In other words, it corresponds to a single-user multiple antenna system.}
As the \mbox{aircraft number, $K$,} increases, the difference in performance between \ac{ssa} and \ac{gsa} gets larger. This means that joint group decoding of the aircraft becomes more and more necessary to maintain a low $P_{\text{out}}$ at larger values of $K$. 

\rewthree{Focusing on the results of the \ac{gsa} in Fig.~\ref{fig:03_P_out}, for lower values of $K$, the system can achieve higher  $r_G$ values while maintaining a low $P_{\text{out}}$. In contrast, for larger values of $K$, $P_{\text{out}}$ increases rapidly as $r_G$ grows.
For example, at \mbox{$P_{\text{out}}=0.05$}, while $K=1$  can transmit at approximately $r_G=$\qty{8.2}{bps/Hz}, $K=32$ using the \ac{gsa} algorithm achieves only about $r_G=$\qty{4.8}{bps/Hz}. However, when we consider the system’s sum rate capacity, which is determined by multiplying $K$ and $r_G$  at  a fixed $P_{\text{out}}$, larger values of $K$ yield better results. In the example where \mbox{$P_{\text{out}}=0.05$}, the sum rate for $K=1$ is approximately \qty{8.2}{bps/Hz}, whereas for $K=32$, the sum rate is approximately \mbox{$32\cdot\qty{4.8}{bps/Hz}=\qty{153.6}{bps/Hz}$}. As a result, while the lower values of $K$ allow for a higher $r_G$ at a fixed $P_{\text{out}}$, the sum rate increases at larger values of $K$.}

\rewthree{The highlight of Fig.~\ref{fig:03_P_out} is that at \mbox{$r_G=2$} and \mbox{$r_G=3$}, both $K=1$ and $K=32$ achieve nearly identical  $P_{\text{out}}$ using \ac{gsa}. This means that, in theory, the same resources can be used efficiently by 32 aircraft instead of just 1, with almost no loss in performance. 
 %In other words, at these transmission rates, increasing $K$ allows the system's sum rate to increase proportionally, without a loss in performance or efficiency.
 In other words, at these transmission rates, the system's sum rate can scale effectively with $K$.
 %Moreover, these results also highlight the optimality of the \ac{gsa} (Algorithm~\ref{al:step2}), as it is theoretically impossible for $K=32$ to achieve a lower $P_{\text{out}}$ than $K=1$.
 %Since it is theoretically impossible for $K=32$ to achieve a lower $P_{\text{out}}$ than $K=1$, the fact that \ac{gsa} (Algorithm~\ref{al:step2}) achieves nearly identical $P_{\text{out}}$  for both $K=1$ and $K=32$ demonstrates that \ac{gsa} achieves the theoretical performance limit.
 }
 %the system's sum rate capacity can increase by a factor of 32.
 
 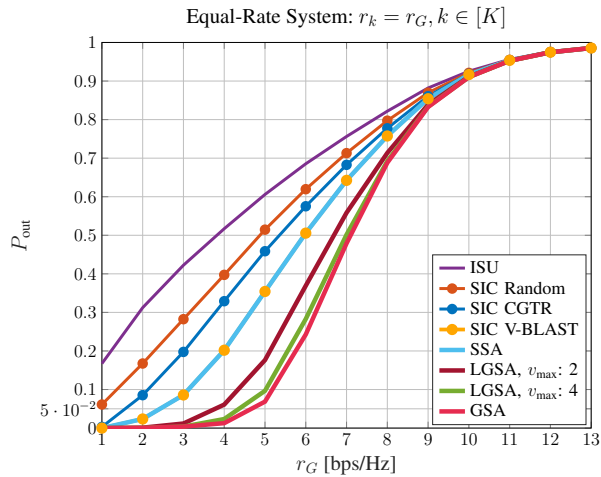
\begin{figure}[t]
    \centering
   \resizebox{0.45\textwidth}{!}{% This file was created by matlab2tikz.
%
%The latest updates can be retrieved from
%  http://www.mathworks.com/matlabcentral/fileexchange/22022-matlab2tikz-matlab2tikz
%where you can also make suggestions and rate matlab2tikz.
%

\definecolor{mycolor1}{rgb}{0.49412,0.18431,0.55686}% Mor
\definecolor{mycolor2}{rgb}{0.85098,0.32549,0.09804}% Turuncu
\definecolor{mycolor3}{rgb}{0.00000,0.44706,0.74118}% Mavi
\definecolor{mycolor4}{rgb}{1.0, 0.65, 0.0}% Sari
\definecolor{mycolor5}{rgb}{0.30196,0.74510,0.93333}% Cyan
\definecolor{mycolor6}{rgb}{0.63529,0.07843,0.18431}% Bordo
\definecolor{mycolor7}{rgb}{0.46667,0.67451,0.18824}% Yesil
\definecolor{mycolor8}{rgb}{0.9, 0.17, 0.31} %Pembe
\definecolor{mycolor9}{rgb}{0.0, 0.0, 1.0} %Mavi
\definecolor{mycolor10}{rgb}{0.5, 0.5, 0.0} % Zeytin Yeşili

%\definecolor{mycolor1}{rgb}{0.85098,0.32549,0.09804}%turuncu
%\definecolor{mycolor2}{rgb}{0.00000,0.44706,0.74118}%mavi
%\definecolor{mycolor3}{rgb}{1.0, 0.65, 0.0} %sari
%\definecolor{mycolor4}{rgb}{0.49412,0.18431,0.55686}%mor
\pgfplotsset{compat=1.10}
\begin{tikzpicture}

\begin{axis}[%
width=4.521in,
height=3.566in,
at={(0.758in,0.481in)},
scale only axis,
title  ={\Large Equal-Rate System: $r_k=r_G, k\in [K]$},
xmin=2,
xmax=10,
xtick={1, 2, 3, ..., 15},
xticklabel style = {font=\large},
xlabel style={font=\color{white!15!black}},
xlabel={\Large $r_G$ [bps/Hz]},
ymode=log,
ymin=3e-4,
ymax=1,
ylabel style={font=\color{white!15!black}},
yticklabel style = {font=\large},
ylabel={\Large $P_{\text{out}}$},
extra y ticks={0.05},
extra y tick style={xticklabel=\pgfmathprintnumber{\tick}},
axis background/.style={fill=white},
xmajorgrids,
ymajorgrids,
yminorgrids,
legend style={at={(0.99,0.45)}, legend cell align=left, align=left, draw=white!15!black}
]

\addplot [color=mycolor1, line width=3.0pt, mark options={solid, mycolor1}]
  table[row sep=crcr]{%
1	0.16102375\\
2	0.31829125\\
3	0.435076875\\
4	0.5315103125\\
5	0.6169525\\
6	0.6941459375\\
7	0.75615625\\
8	0.82178125\\
9	0.88190625\\
10	0.92534375\\
11	0.956125\\
12	0.9754375\\
13	0.98575\\
14	0.9911875\\
15	0.99459375\\
};
\addlegendentry{\large ISU}

\addplot [color=mycolor2, line width=2.0pt, mark=*, mark size=2.5000pt, mark options={solid, mycolor2}]
  table[row sep=crcr]{%
1	0.0578265625\\
2	0.17123625\\
3	0.2919803125\\
4	0.4120809375\\
5	0.526210625\\
6	0.630324375\\
7	0.71296875\\
8	0.7971875\\
9	0.86965625\\
10	0.92115625\\
11	0.95478125\\
12	0.975125\\
13	0.98565625\\
14	0.99115625\\
15	0.99459375\\
};
\addlegendentry{\large SIC Random}

\addplot [color=mycolor3, line width=2.0pt, mark=*, mark size=2.5000pt, mark options={solid, mycolor3}]
  table[row sep=crcr]{%
1	0.00472031250000005\\
2	0.09708125\\
3	0.2203134375\\
4	0.351218125\\
5	0.4786525\\
6	0.595155\\
7	0.68265625\\
8	0.7774375\\
9	0.86075\\
10	0.9183125\\
11	0.9541875\\
12	0.97496875\\
13	0.98559375\\
14	0.99115625\\
15	0.99459375\\
};
\addlegendentry{\large SIC CGTR}

%0.0004    0.0208    0.0560    0.1004    0.1574    0.2241    0.3122    0.4187    0.5709    0.7641    0.8854    0.9504    0.9769    0.9863    0.9925

\addplot [color=mycolor4, line width=2.5pt, mark=*, mark size=2.5000pt, mark options={solid, mycolor4}]
  table[row sep=crcr]{%
1	5.43750000000509e-05\\
2	0.025281875\\
3	0.09543625\\
4	0.2167878125\\
5	0.3673934375\\
6	0.5206396875\\
7	0.642375\\
8	0.75746875\\
9	0.85409375\\
10	0.91671875\\
11	0.95365625\\
12	0.97475\\
13	0.9855625\\
14	0.99115625\\
15	0.99459375\\
};
\addlegendentry{\large SIC V-BLAST}
%mark=triangle*, mark size=2.5000pt,
\addplot [color=mycolor5, line width=3.0pt,  mark options={solid, mycolor5}]
  table[row sep=crcr]{%
1	5.43750000000509e-05\\
2	0.025281875\\
3	0.09543625\\
4	0.2167878125\\
5	0.3673934375\\
6	0.5206396875\\
7	0.642375\\
8	0.75746875\\
9	0.85409375\\
10	0.91671875\\
11	0.95365625\\
12	0.97475\\
13	0.9855625\\
14	0.99115625\\
15	0.99459375\\
16	0.99721875\\
17	0.9986875\\
18	0.99953125\\
19	1\\
20	1\\
};
\addlegendentry{\large SSA}

\addplot [color=mycolor6, line width=3.0pt, mark options={solid, mycolor6}]
  table[row sep=crcr]{%
1	6.87499999996177e-06\\
2	0.00138156249999999\\
3	0.0130065625\\
4	0.062444375\\
5	0.1843134375\\
6	0.37744375\\
7	0.570156875\\
8	0.71346875\\
9	0.83884375\\
10	0.9113125\\
11	0.9525625\\
12	0.97453125\\
13	0.9855625\\
14	0.99115625\\
15	0.99459375\\
};
\addlegendentry{\large LGSA, $v_{\text{max}}\text{: 2}$}

\addplot [color=mycolor7, line width=3.0pt, mark options={solid, mycolor7}]
  table[row sep=crcr]{%
1	0\\
2	0.000445312500000017\\
3	0.00363093749999999\\
4	0.0202393749999999\\
5	0.09562875\\
6	0.2823778125\\
7	0.5015625\\
8	0.6923125\\
9	0.8336875\\
10	0.91021875\\
11	0.9524375\\
12	0.97453125\\
13	0.9855625\\
14	0.99115625\\
15	0.99459375\\
};
\addlegendentry{\large LGSA, $v_{\text{max}}\text{: 4}$}

%mark=triangle*, mark size=2.5000pt,
\addplot [color=mycolor8, line width=3.0pt,  mark options={solid, mycolor8}] table[row sep=crcr]{%
1	0\\
2	0.000441562500000048\\
3	0.00276343749999997\\
4	0.00879593749999996\\
5	0.0596840625\\
6	0.24021724455201\\
7	0.478999498495487\\
8	0.688063063063063\\
9	0.83259375\\
10	0.91021875\\
11	0.9524375\\
12	0.97453125\\
13	0.9855625\\
14	0.99115625\\
15	0.99459375\\
16	0.99721875\\
17	0.9986875\\
18	0.99953125\\
19	1\\
20	1\\
};
\addlegendentry{\large GSA}

\begin{comment}
\addplot [color=mycolor10, line width=3.0pt, mark options={solid, mycolor10}]
  table[row sep=crcr]{%
1	0\\
2	0.000429687500000053\\
3	0.002698125\\
4	0.00563437499999997\\
5	0.00992656250000001\\
6	0.0161371875\\
7	0.0256046875\\
8	0.0422740625\\
9	0.1509109375\\
10	0.493485\\
11	0.7410790625\\
12	0.8687203125\\
13	0.953\\
14	0.977\\
15	0.987\\
};
\addlegendentry{\large OMA ($K:1$)}

\addplot [color=mycolor4, line width=2.0pt, mark=diamond, mark options={solid, mycolor4}]
  table[row sep=crcr]{%
0.5	0.014021875\\
1	0.0246165625\\
1.5	0.0359703125\\
2	0.0491896875\\
2.5	0.0646600000000001\\
3	0.173233125\\
};
\addlegendentry{  $|$ Receiver: siso}

\addplot [color=mycolor4, line width=3.0pt, mark options={solid, mycolor4}]
  table[row sep=crcr]{%
1	0.016345625\\
2	0.0339790625\\
3	0.063695625\\
4	0.465564375\\
5	0.733689375\\
6	0.8670796875\\
7	0.9314446875\\
8	0.963734375\\
9	0.9828953125\\
10	0.9952259375\\
11	0.9982115625\\
12	0.9994471875\\
13	1\\
14	1\\
15	1\\
16	1\\
17	1\\
18	1\\
19	1\\
20	1\\
};
\addlegendentry{\large SISO}

\end{comment}

\end{axis}
\end{tikzpicture}%}
    \caption{Outage probability, $P_{\text{out}}$ vs. guaranteed rate $r_G$ for $K = 32$ in an equal-rate system. The results are obtained for a \ac{rec} with $M=64$.}
    \label{fig:ER_vMax}
\end{figure}
%On the other hand, for lower values of $K$, the system can achieve higher  $r_G$ values while maintaining a low $P_{\text{out}}$. In contrast, for larger values of $K$, $P_{\text{out}}$ increases rapidly as $r_G$ grows.
%For example, at \mbox{$P_{\text{out}}=0.05$}, while $K=1$  can transmit at approximately $r_G=$\qty{8.2}{bps/Hz}, $K=32$ using the \ac{gsa} algorithm achieves only about $r_G=$\qty{4.8}{bps/Hz}. However, when we consider the system’s sum rate capacity, which is determined by multiplying $K$ and $r_G$  at  a fixed $P_{\text{out}}$, larger values of $K$ yield better results. In the example where \mbox{$P_{\text{out}}=0.05$}, the sum rate for $K=1$ is approximately \qty{8.2}{bps/Hz}, whereas for $K=32$ using the \ac{gsa} algorithm, the sum rate is approximately \mbox{$32\cdot\qty{4.8}{bps/Hz}=\qty{153.6}{bps/Hz}$}. 
%For example, at \mbox{$P_{\text{out}}=0.05$} in Fig.~\ref{fig:03_P_out}, the sum rate for $K=1$ is approximately \qty{8.2}{bps/Hz}, whereas for $K=32$ using the \ac{gsa} algorithm, the sum rate is approximately \mbox{$32\cdot\qty{4.5}{bps/Hz}=\qty{144}{bps/Hz}$}.
%As a result, while the lower values of $K$ allow for a higher $r_G$ at a fixed $P_{\text{out}}$, the sum rate increases at larger values of $K$. 

The evaluation of \ac{lgsa} is only meaningful if there is a significant difference between \ac{ssa} and \ac{gsa}, as this indicates that \ac{gsa} achieves low $P_{\text{out}}$ values by joint group decoding of aircraft. At this point, \ac{lgsa} becomes relevant, as it ensures that the number of aircraft decoded together remains below the limit, $v_{\text{max}}$. This allows \ac{lgsa} to offer a joint decoding approach that is more practical for real-world applications. %Fig.~\ref{fig:LGSA} compares $P_{\text{out}}$ for \ac{ssa}, \ac{gsa} and \ac{lgsa} for both equal-rate and variable-rate systems. 
In Fig.~\ref{fig:ER_vMax}, $P_{\text{out}}$ is computed for $K = 32$ in an equal-rate system. \ac{lgsa} is evaluated for two group size limits, \mbox{$v_{\text{max}} = 2$} and \mbox{$v_{\text{max}} = 4$}. 
\rewthree{ For both \mbox{$v_{\text{max}}$} values, \ac{lgsa} significantly outperforms \ac{ssa}.
Specifically, for \mbox{$v_{\text{max}} = 4$}, \ac{lgsa} and \ac{gsa} exhibit nearly identical outage probabilities for \mbox{$r_G \leq 2$~[bps/Hz]}.
}

%For \mbox{$v_{\text{max}} = 2$}, \ac{lgsa} and \ac{gsa} exhibit nearly identical outage probabilities for \mbox{$r_G \leq 3$~[bps/Hz]}. For \mbox{$v_{\text{max}} = 4$}, \ac{lgsa} achieves results nearly identical to \ac{gsa} for almost all values of $r_G$.

We define single-user single antenna systems as those where both the \ac{gs} and the aircraft are equipped with a single antenna, and the aircraft are separated in time or frequency, i.e., \ac{oma}. 
\rewfour{In this context, both current \ac{atm} communications and the \ac{sesar} system \cite{LDACSSpec19} can be regarded as single-user single antenna systems.}
For a single-user single antenna system, 
the $P_{\text{out}}$ is 0.03 at $r_G=$\qty{2}{bps/Hz} and  0.06 at  $r_G=$ \qty{3}{bps/Hz} with a steep increase in $P_{\text{out}}$  observed for larger values of $r_G$. 
\rewfour{ When these results are compared with the proposed \ac{noma} scheme with multiple antennas, it can be observed that for \mbox{$r_G\leq$ \qty{3}{bps/Hz}}  the outage probability of individual aircraft is reduced for \ac{gsa} and \ac{lgsa} across all evaluated values of $K$ and $v_{\text{max}}$ in Fig.~\ref{fig:03_P_out} and Fig.~\ref{fig:ER_vMax}, and for \ac{ssa} in all cases except $K=32$ when $r_G=\qty{3}{bps/Hz}$. This indicates that, while the proposed system achieves higher spectral efficiency through increased sum rate capacity, it also improves the reliability of \ac{atm} communications.}

\begin{figure}[t]
    \centering
    \resizebox{0.45\textwidth}{!}{% This file was created by matlab2tikz.
%
%The latest updates can be retrieved from
%  http://www.mathworks.com/matlabcentral/fileexchange/22022-matlab2tikz-matlab2tikz
%where you can also make suggestions and rate matlab2tikz.
%

\definecolor{mycolor1}{rgb}{0.49412,0.18431,0.55686}% Mor
\definecolor{mycolor2}{rgb}{0.85098,0.32549,0.09804}% Turuncu
\definecolor{mycolor3}{rgb}{0.00000,0.44706,0.74118}% Mavi
\definecolor{mycolor4}{rgb}{1.0, 0.65, 0.0}% Sari
\definecolor{mycolor5}{rgb}{0.30196,0.74510,0.93333}% Cyan
\definecolor{mycolor6}{rgb}{0.63529,0.07843,0.18431}% Bordo
\definecolor{mycolor7}{rgb}{0.46667,0.67451,0.18824}% Yesil
\definecolor{mycolor8}{rgb}{0.9, 0.17, 0.31} %Pembe
\definecolor{mycolor9}{rgb}{0.0, 0.0, 1.0} %Mavi

%\definecolor{mycolor1}{rgb}{0.85098,0.32549,0.09804}%turuncu
%\definecolor{mycolor2}{rgb}{0.00000,0.44706,0.74118}%mavi
%\definecolor{mycolor3}{rgb}{1.0, 0.65, 0.0} %sari
%\definecolor{mycolor4}{rgb}{0.49412,0.18431,0.55686}%mor
\pgfplotsset{compat=1.10}
\begin{tikzpicture}

\begin{axis}[%
width=4.521in,
height=3.566in,
at={(0.758in,0.481in)},
scale only axis,
title  ={\Large Variable-Rate System: $r_k\sim\mathcal{U}(2, 6)$  [bps/Hz]},
xmin=4,
xmax=52,
xtick={4,8,12,...,64},
xticklabel style = {font=\large},
xlabel style={font=\color{white!15!black}},
xlabel={\Large $K$},
ymode=log,
ymin=1e-3,
ymax=1,
ylabel style={font=\color{white!15!black}},
yticklabel style = {font=\large},
ylabel={\Large $P_{\text{out}}$},
extra y ticks={0.05},
extra y tick style={xticklabel=\pgfmathprintnumber{\tick}},
axis background/.style={fill=white},
xmajorgrids,
ymajorgrids,
yminorgrids,
legend style={at={(0.99,0.45)}, legend cell align=left, align=left, draw=white!15!black}
]

\addplot [color=mycolor1, line width=3.0pt,  mark options={solid, mycolor1}]
  table[row sep=crcr]{%
4	0.0317499999999999\\
8	0.067375\\
12	0.12675\\
16	0.19\\
20	0.269\\
24	0.349625\\
28	0.433678571428571\\
32	0.51278125\\
36	0.587694444444445\\
40	0.6519\\
44	0.706295454545455\\
48	0.749208333333333\\
52	0.783596153846154\\
56	0.817660714285714\\
60	0\\
64	0\\
};
\addlegendentry{\large ISU}

\addplot [color=mycolor2, line width=2.0pt, mark=*, mark size=2.5000pt, mark options={solid, mycolor2}]
  table[row sep=crcr]{%
4	0.0257500000000001\\
8	0.049\\
12	0.0905833333333333\\
16	0.1305\\
20	0.18355\\
24	0.247208333333333\\
28	0.316107142857143\\
32	0.394625\\
36	0.466277777777778\\
40	0.537975\\
44	0.602386363636364\\
48	0.6625625\\
52	0.712423076923077\\
56	0.756946428571429\\
60	0\\
64	0\\
};
\addlegendentry{\large SIC Random}

\addplot [color=mycolor3, line width=2.0pt, mark=*, mark size=2.5000pt, mark options={solid, mycolor3}]
  table[row sep=crcr]{%
4	0.02025\\
8	0.03625\\
12	0.0676666666666667\\
16	0.096625\\
20	0.1403\\
24	0.190541666666667\\
28	0.251321428571429\\
32	0.32271875\\
36	0.388333333333333\\
40	0.467875\\
44	0.534727272727273\\
48	0.6015625\\
52	0.657634615384615\\
56	0.714035714285714\\
60	0.755733333333333\\
64	0.7898125\\
};
\addlegendentry{\large SIC CGTR}

\addplot [color=mycolor4, line width=2.0pt, mark=*, mark size=2.5000pt, mark options={solid, mycolor4}]
  table[row sep=crcr]{%
4	0.0195\\
8	0.0327499999999999\\
12	0.0544166666666667\\
16	0.075375\\
20	0.1036\\
24	0.138416666666667\\
28	0.181892857142857\\
32	0.24103125\\
36	0.296361111111111\\
40	0.366875\\
44	0.437931818181818\\
48	0.51475\\
52	0.582826923076923\\
56	0.653089285714286\\
60	0\\
64	0\\
};
\addlegendentry{\large SIC V-BLAST}
%mark=triangle*, mark size=2.5000pt,
\addplot [color=mycolor5, line width=3.0pt,  mark size=2.5000pt, mark options={solid, mycolor5}]
  table[row sep=crcr]{%
4	0.01825\\
8	0.031125\\
12	0.0506666666666667\\
16	0.0668125000000001\\
20	0.0901999999999999\\
24	0.116833333333333\\
28	0.152857142857143\\
32	0.200875\\
36	0.246277777777778\\
40	0.3071\\
44	0.37\\
48	0.438979166666667\\
52	0.514134615384615\\
56	0.580928571428571\\
60	0\\
64	0\\
};
\addlegendentry{\large  SSA}

\addplot [color=mycolor6, line width=3.0pt, mark options={solid, mycolor6}]
  table[row sep=crcr]{%
4	0.00814545454545457\\
8	0.00963000000000003\\
12	0.0120611111111111\\
16	0.0171\\
20	0.0237624999999999\\
24	0.0353916666666666\\
28	0.0484607142857143\\
32	0.07040625\\
36	0.0958666666666667\\
40	0.127285\\
44	0.172379545454545\\
48	0.222375\\
52	0.287711538461538\\
56	0.3500\\
60	0.4265\\
64	0.5204\\
};
\addlegendentry{\large LGSA, $v_{\text{max}}\text{: 2}$}

\addplot [color=mycolor7, line width=3.0pt, mark options={solid, mycolor7}]
  table[row sep=crcr]{%
4	0.00797499999999995\\
8	0.00848749999999998\\
12	0.0089111111111112\\
16	0.010309375\\
20	0.0122199999999999\\
24	0.0156083333333333\\
28	0.0211178571428572\\
32	0.030121875\\
36	0.0419416666666668\\
40	0.0586675\\
44	0.0826454545454546\\
48	0.115875\\
52	0.156942307692308\\
56	0.2058\\
60	0.2650\\
64	0.3353\\
};
\addlegendentry{\large LGSA, $v_{\text{max}}\text{: 4}$}

%mark=triangle*, mark size=2.5000pt,
\addplot [color=mycolor8, line width=3.0pt,  mark options={solid, mycolor8}] table[row sep=crcr]{%
4	0.00797499999999995\\
8	0.00846000000000002\\
12	0.00884999999999991\\
16	0.00969687500000005\\
20	0.0108824999999999\\
24	0.0124708333333333\\
28	0.0149285714285714\\
32	0.0187625\\
36	0.0232805555555555\\
40	0.0295925\\
44	0.0390659825514224\\
48	0.0544004676018703\\
52	0.0747175359851416\\
56	0\\
60	0\\
64	0\\
};
\addlegendentry{\large GSA}

\end{axis}
\end{tikzpicture}%}
    \caption{Outage probability, $P_{\text{out}}$, for a variable-rate system where $r_k$ is uniformly distributed between \qty{2}{} and \qty{6}{bps/Hz}. The results are obtained for a \ac{rec} with $M=64$.}
    \label{fig:VR_vMax}
\end{figure}
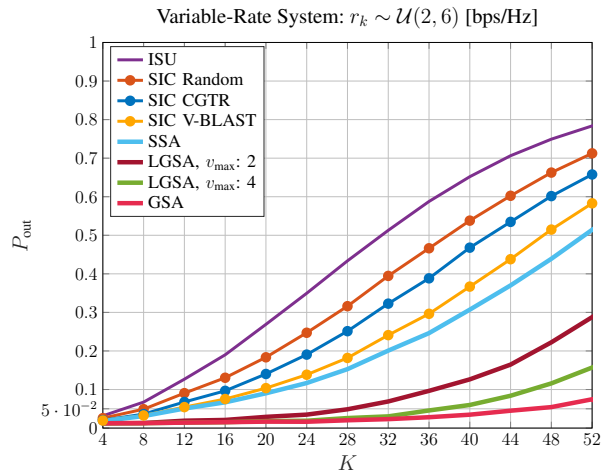

\rewfour{From a practical perspective,} it is not realistic to expect that all current single-user single antenna  \ac{gs}s worldwide will be replaced by the new multiple antenna \ac{gs}s simultaneously.
%If we consider a setup where multiple antenna systems are installed at some \ac{gs}s but not at others
\rewfour{In such a transitional deployment scenario, }   $r_G$  will be defined based on the single-user single antenna \ac{gs}s to ensure that flight critical messages can also be supported in these stations. %Consequently, the $P_{\text{out}}$ of a single-user single antenna system becomes a relevant factor in deciding the guaranteed rate, $r_G$. %$K=1$  and a single antenna on the \ac{gs}, % in the same system and channel model, 
\rewfour{At the same time,} if all aircraft would transmit at the guaranteed rate of a single-user single antenna system, i.e., approximately \qty{2}{bps/Hz}, the full potential of the installed multiple antenna systems would not be utilized, thereby wasting valuable resources.
As shown in Fig.~\ref{fig:03_P_out} and Fig.~\ref{fig:ER_vMax}, the proposed algorithms, \ac{ssa}, \ac{gsa}, and \ac{lgsa}, are capable of achieving higher transmission rates while maintaining a low outage probability.  This motivates us to consider a variable-rate system where aircraft transmit at least at the guaranteed rate, but could transmit at higher rates to \rewfour{fully leverage the benefits of multiple antenna systems.}

%The equal-rate system analysis allowed us to understand the relationship between transmission rate and outage probability, and to analyze the range of rates that could be supported in the given geometry-based stochastic channel model.
In the variable-rate system, we set $r_G$ to  \qty{2}{bps/Hz}, which is based on the single-user single antenna system where $P_{\text{out}}$ is $0.03$. Accordingly, the transmission rate of aircraft $k$ is randomly selected from a uniform distribution between  \qty{2}{} and  \qty{6}{bps/Hz}, i.e., \mbox{$r_k\sim \mathcal{U}(2,6), \forall k \in [K]$}.
In Fig.~\ref{fig:VR_vMax}, $P_{\text{out}}$ is computed for increasing values of $K$ in a variable-rate system. 
%The transmission rate of the aircraft is uniformly distributed between \qty{2}{} and \qty{6}{bps/Hz}, i.e. \mbox{$r_k\sim \mathcal{U}(2,6), \forall k \in [K]$}. 
It can be observed that \ac{lgsa} with \mbox{$v_{\text{max}} = 4$} has the same performance as \ac{gsa}  up to $K = 12$. Moreover, for both $v_{\text{max}}=2$ and  $v_{\text{max}}=4$, \ac{lgsa} outperforms \ac{ssa} by a clear margin.
%with \mbox{$v_{\text{max}} = 2$} has almost the same performance as \ac{gsa} up to $K = 20$, and \ac{lgsa} with \mbox{$v_{\text{max}} = 4$} has the same performance up to $K = 12$. Moreover, \ac{lgsa} outperforms \ac{ssa} by a clear margin.  
\rewone{Moreover, in Fig.~\ref{fig:VR_K32}, $P_{\text{out}}$ is computed for increasing values of  $M$ in a variable-rate system with $K=32$. For $M<36$, the $P_{\text{out}}$ is very high for all three algorithms: \ac{ssa}, \ac{lgsa}, and \ac{gsa}. As $M$ increases,  $P_{\text{out}}$ decreases and the performance gap among the algorithms narrows. Specifically, at $M=144$, \ac{lgsa} with $v_{\text{max}}=4$ and  \ac{gsa} achieve nearly identical  $P_{\text{out}}$.}

The results in Fig.~\ref{fig:VR_vMax} and  Fig.~\ref{fig:VR_K32} show that even when joint group decoding is used for groups of only 2, the outage probability can still be significantly reduced compared to \ac{ssa}, especially for large values of $K$. 
In this paper, aircraft are randomly assigned transmission rates during the variable-rate simulations. However, the performance of the system can be improved by determining transmission rates based on factors such as the power of the received signal or the distance between the aircraft and the \ac{gs}.

\subsection{Performance Comparison: Proposed Algorithms vs. Existing Decoding Strategies}
The outage probability of the \ac{isu} decoders and of the \ac{sic} decoding order strategies proposed in \cite{vblast} and \cite{order2} are also plotted in Fig.~\ref{fig:ER_vMax}, Fig.~\ref{fig:VR_vMax}, and Fig.~\ref{fig:VR_K32}.
In these figures, the minimal outage probability for \ac{isu} decoders is notably very high. To achieve a low $P_{\text{out}}$ with \ac{isu} decoders, the number of simultaneously transmitting aircraft, $K$, must be very small compared to $M$. Comparing the \ac{isu} decoders with the \ac{sic} decoders, we observe that the \ac{sic} scheme allows more aircraft to transmit over a non-orthogonal channel with lower outage probabilities. As a result, these findings demonstrate how the \ac{sic} scheme reduces the outage probability while also increasing the spectral efficiency when compared to the \ac{isu} decoders.

\begin{figure}[t]
    \centering
    \resizebox{0.45\textwidth}{!}{% This file was created by matlab2tikz.
%
%The latest updates can be retrieved from
%  http://www.mathworks.com/matlabcentral/fileexchange/22022-matlab2tikz-matlab2tikz
%where you can also make suggestions and rate matlab2tikz.
%

\definecolor{mycolor1}{rgb}{0.49412,0.18431,0.55686}% Mor
\definecolor{mycolor2}{rgb}{0.85098,0.32549,0.09804}% Turuncu
\definecolor{mycolor3}{rgb}{0.00000,0.44706,0.74118}% Mavi
\definecolor{mycolor4}{rgb}{1.0, 0.65, 0.0}% Sari
\definecolor{mycolor5}{rgb}{0.30196,0.74510,0.93333}% Cyan
\definecolor{mycolor6}{rgb}{0.63529,0.07843,0.18431}% Bordo
\definecolor{mycolor7}{rgb}{0.46667,0.67451,0.18824}% Yesil
\definecolor{mycolor8}{rgb}{0.9, 0.17, 0.31} %Pembe
\definecolor{mycolor9}{rgb}{0.0, 0.0, 1.0} %Mavi
\definecolor{mycolor10}{rgb}{0.5, 0.5, 0.0} % Zeytin Yeşili

%\definecolor{mycolor1}{rgb}{0.85098,0.32549,0.09804}%turuncu
%\definecolor{mycolor2}{rgb}{0.00000,0.44706,0.74118}%mavi
%\definecolor{mycolor3}{rgb}{1.0, 0.65, 0.0} %sari
%\definecolor{mycolor4}{rgb}{0.49412,0.18431,0.55686}%mor
\pgfplotsset{compat=1.10}
\begin{tikzpicture}

\begin{axis}[%
width=4.521in,
height=3.566in,
at={(0.758in,0.481in)},
scale only axis,
title  ={\Large Variable-Rate System: $r_k\sim\mathcal{U}(2, 6)$  [bps/Hz]},
xmin=4,
xmax=144,
xtick={4,9,16,25, 36, 49, 64, 81, 100, 121,144},
xticklabel style = {font=\large},
xlabel style={font=\color{white!15!black}},
xlabel={\Large $M$},
ymode=log,
ymin=1e-3,
ymax=1,
ylabel style={font=\color{white!15!black}},
yticklabel style = {font=\large},
ylabel={\Large $P_{\text{out}}$},
extra y ticks={0.05},
extra y tick style={xticklabel=\pgfmathprintnumber{\tick}},
axis background/.style={fill=white},
xmajorgrids,
ymajorgrids,
yminorgrids,
legend style={at={(0.32,0.45)}, legend cell align=left, align=left, draw=white!15!black}
]

\addplot [color=mycolor1, line width=3.0pt,  mark options={solid, mycolor1}]
  table[row sep=crcr]{%
4	0.9951875\\
9	0.977625\\
16	0.9384375\\
25	0.86096875\\
36	0.7521875\\
49	0.6285625\\
64	0.5115625\\
81	0.41434375\\
100	0.3305\\
121	0.2681875\\
144	0.21634375\\
};
\addlegendentry{\large ISU}

\addplot [color=mycolor2, line width=2.0pt, mark=*, mark size=2.5000pt, mark options={solid, mycolor2}]
  table[row sep=crcr]{%
4	0.9950625\\
9	0.9766875\\
16	0.93053125\\
25	0.8220625\\
36	0.6728125\\
49	0.51765625\\
64	0.3885625\\
81	0.29240625\\
100	0.22659375\\
121	0.1766875\\
144	0.140625\\
};
\addlegendentry{\large SIC Random}

\addplot [color=mycolor3, line width=2.0pt, mark=*, mark size=2.5000pt, mark options={solid, mycolor3}]
  table[row sep=crcr]{%
4	0.9949375\\
9	0.9758125\\
16	0.9239375\\
25	0.79003125\\
36	0.61278125\\
49	0.44221875\\
64	0.31790625\\
81	0.231375\\
100	0.17034375\\
121	0.1276875\\
144	0.096875\\
};
\addlegendentry{\large SIC CGTR}

\addplot [color=mycolor4, line width=2.0pt, mark=*, mark size=2.5000pt, mark options={solid, mycolor4}]
  table[row sep=crcr]{%
4	0.995\\
9	0.97578125\\
16	0.920625\\
25	0.76334375\\
36	0.5390625\\
49	0.35396875\\
64	0.2325625\\
81	0.15928125\\
100	0.113\\
121	0.08140625\\
144	0.06340625\\
};
\addlegendentry{\large SIC V-BLAST}
%mark=triangle*, mark size=2.5000pt,
\addplot [color=mycolor5, line width=3.0pt,  mark size=2.5000pt, mark options={solid, mycolor5}]
  table[row sep=crcr]{%
4	0.994875\\
9	0.97553125\\
16	0.916\\
25	0.72471875\\
36	0.4704375\\
49	0.3001875\\
64	0.1944375\\
81	0.13140625\\
100	0.0971562500000001\\
121	0.0701875\\
144	0.05403125\\
};
\addlegendentry{\large  SSA}

\addplot [color=mycolor6, line width=3.0pt, mark options={solid, mycolor6}]
  table[row sep=crcr]{%
4	0.994875\\
9	0.9741875\\
16	0.89171875\\
25	0.56678125\\
36	0.26824375\\
49	0.130415625\\
64	0.068865625\\
81	0.038946875\\
100	0.023775\\
121	0.015496875\\
144	0.0112125\\
};
\addlegendentry{\large LGSA, $v_{\text{max}}\text{: 2}$}

\addplot [color=mycolor7, line width=3.0pt, mark options={solid, mycolor7}]
  table[row sep=crcr]{%
4	0.994875\\
9	0.97371875\\
16	0.868875\\
25	0.42646875\\
36	0.15318125\\
49	0.064890625\\
64	0.0308875\\
81	0.017690625\\
100	0.01094375\\
121	0.00761875000000001\\
144	0.00595000000000001\\
};
\addlegendentry{\large LGSA, $v_{\text{max}}\text{: 4}$}

%mark=triangle*, mark size=2.5000pt,
\addplot [color=mycolor8, line width=3.0pt,  mark options={solid, mycolor8}] table[row sep=crcr]{%
4	0.995490625\\
9   0.980359788494942 \\
16	0.837730532786885\\
25	0.293223420260782\\
36	0.0996943444344435\\
49	0.040765625\\
64	0.02018125\\
81	0.01305625\\
100	0.00930312499999997\\
121	0.00687499999999996\\
144	0.00565312500000004\\
};
\addlegendentry{\large GSA}

\begin{comment}
\addplot [color=mycolor10, line width=3.0pt, mark options={solid, mycolor10}]
  table[row sep=crcr]{%
4	0.12520625\\
9	0.04309375\\
16	0.028715625\\
25	0.02041875\\
36	0.0155625\\
49	0.01218125\\
64	0.00988437499999995\\
81	0.00791249999999999\\
100	0.00642812500000001\\
121	0.00525937499999996\\
144	0.00449999999999995\\
};
\addlegendentry{\large OMA ($K:1$)}
\end{comment}

\end{axis}
\end{tikzpicture}%}
    \caption{\rewone{Outage probability, $P_{\text{out}}$, for a variable-rate system where $r_k$ is uniformly distributed between \qty{2}{} and \qty{6}{bps/Hz}. The results are obtained for a \ac{rec} with a varying number of antenna elements and  $K=32$.}}
    \label{fig:VR_K32}
\end{figure}

\ac{ssa} obtains the optimal decoding order for the \ac{sic} ordering problem given in \eqref{eq:probSSA}. In order to demonstrate the importance of the decoding order, we first present the results for a random decoding order in Fig.~\ref{fig:ER_vMax}, Fig.~\ref{fig:VR_vMax}, and Fig.~\ref{fig:VR_K32}, denoted as \ac{sic} random. Not surprisingly, \ac{sic} random shows the worst performance, with a clear margin between \ac{sic} random and \ac{ssa} in all three figures. 
The decoding order strategy proposed in \cite{order2}, \ac{sic} \ac{cgtr}, has the worst performance after \ac{sic} random across all figures. The authors of \cite{order2} argued that they propose an optimal decoding order for a power-controlled system. 
Although not explicitly stated in \cite{order2}, their approach is only applicable to single antenna receivers and not to multiple antenna receivers. This is because the linear dependencies between the users' (aircraft's) channel vectors are ignored when calculating the achievable rates.
%However, we noticed that \cite{order2} ignores the linear dependencies between the users' (aircraft's) channel vectors when calculating the achievable rates. Therefore, their approach only applies to single antenna receivers, not multiple antenna receivers. %While this approach is acceptable if the transmitters and the receiver has only one antenna element, it leads to an wrong calculation of the capacity region if multiple antenna elements are considered. 
%These types of errors are usually difficult to detect because a brute-force approach is computationally very heavy to be used as a cross-checking method. In these cases, \ac{ssa} can be easily adapted to different simulation setups and used as a cross-checking method.
%This is probably because \ac{sic} \ac{cgtr} does not take into account the linear dependencies between the channel vectors of the aircraft, but only the power of the received signal and the transmission rate. 
Lastly, we evaluate the well-known V-BLAST~\cite{vblast}.  As proved in Lemma~\ref{lemma:proof} and shown in Fig.~\ref{fig:ER_vMax}, when an equal-rate system is considered, V-BLAST obtains an optimal decoding order. % when the users transmit at the same rate.
For all channel realizations, the set of successfully decoded aircraft by V-BLAST is the same as that of \ac{ssa} in an equal-rate system. However, in the variable-rate system, as can be seen in Fig.~\ref{fig:VR_vMax} and Fig.~\ref{fig:VR_K32}, \ac{ssa} outperforms V-BLAST.  %In the near future, it is more realistic to assume that multiple antennas will be installed in some ground stations but not at all of them. In this case, using a variable-rate system is more reasonable as explained in Section~\ref{sec:algs}.

\begin{figure}[t]
    \centering
    \resizebox{0.45\textwidth}{!}{% This file was created by matlab2tikz.
%
%The latest updates can be retrieved from
%  http://www.mathworks.com/matlabcentral/fileexchange/22022-matlab2tikz-matlab2tikz
%where you can also make suggestions and rate matlab2tikz.
%
\definecolor{mycolor1}{rgb}{0.85098,0.32549,0.09804}%kirmizi
\definecolor{mycolor2}{rgb}{0.00000,0.44706,0.74118}%mavi
\definecolor{mycolor3}{rgb}{1.0, 0.65, 0.0}
\definecolor{mycolor4}{rgb}{0.49412,0.18431,0.55686}%mor
\pgfplotsset{compat=1.10}
\begin{tikzpicture}

\begin{axis}[%
width=4.521in,
height=3.566in,
at={(0.758in,0.481in)},
scale only axis,
title  ={\Large Equal-Rate System: $r_k=r_G, k\in [K]$},
xmin=1,
xmax=15,
xtick={1, 2, 3, ..., 15},
xticklabel style = {font=\large},
xlabel style={font=\color{white!15!black}},
xlabel={\Large $r_G$ [bps/Hz]},
ymode=log,
yminorticks=true,
ymin=1e5,
ymax=5e10,
ylabel style={font=\color{white!15!black}},
yticklabel style = {font=\large},
ylabel={\Large Average Complex Multiplications},
axis background/.style={fill=white},
xmajorgrids,
yminorgrids,
ymajorgrids,
legend style={at={(0.2,0.99)}, legend cell align=left, align=left, draw=white!15!black}
]

\addplot [color=mycolor1, line width=2.0pt, mark=*, mark size=2.5000pt, mark options={solid, mycolor1}] table[row sep=crcr]{%
1	2051253.9136\\
2	2088049.32608\\
3	2180188.47744\\
4	2308397.24032\\
5	2499954.5344\\
6	2777764.49536\\
7	3177881.42592\\
8	3761248.6144\\
9	5037896.98048\\
10	4190679.83872\\
11	3139739.648\\
12	2487910.4\\
13	2236588.032\\
14	2121064.448\\
15	2062118.912\\
16	2029953.024\\
17	2019635.2\\
18	2014945.28\\
19	2011136\\
20	2011136\\
};
\addlegendentry{\large $K:8$}

\addplot [color=mycolor2, line width=2.0pt, mark=square*, mark size=2.5000pt, mark options={solid, mycolor2}]
  table[row sep=crcr]{%
1	4675536.54784\\
2	4912184.05376\\
3	5475256.2176\\
4	6390218.71104\\
5	8058897.3056\\
6	11065205.80096\\
7	15642087.89504\\
8	20047772.89728\\
9	19061802.20928\\
10	9897373.4912\\
11	6715797.504\\
12	5302919.168\\
13	4848414.72\\
14	4705730.56\\
15	4632113.152\\
16	4587171.84\\
17	4564615.168\\
18	4550656\\
19	4550656\\
20	4550656\\
};
\addlegendentry{\large $K:16$}

\addplot [color=mycolor3, line width=2.0pt, mark=triangle*, mark size=3.5000pt, mark options={solid, mycolor3}]
  table[row sep=crcr]{%
1	11236258.24256\\
2	13455306.17856\\
3	19809896.40704\\
4	46891116.70784\\
5	515675974.8608\\
6	2158466322.432\\
7	8622375409.10732\\
8	1086336552.48849\\
9	63149047.808\\
10	20023447.552\\
11	13697368.064\\
12	11852648.448\\
13	11087196.16\\
14	10745864.192\\
15	10586992.64\\
16	10487173.12\\
17	10448363.52\\
18	10426667.008\\
19	10416128\\
20	10416128\\
};
\addlegendentry{\large  $K:32$}

\addplot [color=mycolor3, line width=2.0pt, mark=triangle*, mark size=3.5000pt, mark options={solid, mycolor3}]
  table[row sep=crcr]{%
1	11271446.528\\
2	13270691.84\\
3	18593533.952\\
4	31245185.024\\
5	60656742.4\\
6	101071745.024\\
7	128441389.056\\
8	110924967.936\\
9	49416298.496\\
10	19918192.64\\
11	13694971.904\\
12	11852242.944\\
13	11087196.16\\
14	10745864.192\\
15	10586992.64\\
};
%\addlegendentry{\large $K:32$, LGSA: $v_{\text{max}}\text{: 2}$}

\addplot [color=mycolor3, line width=2.0pt, mark=triangle*, mark size=3.5000pt, mark options={solid, mycolor3}]
  table[row sep=crcr]{%
1	11271446.528\\
2	13270777.856\\
3	18695049.216\\
4	34661826.56\\
5	96432996.352\\
6	235181064.192\\
7	329938276.352\\
8	169393455.104\\
9	53307695.104\\
10	20007362.56\\
11	13697368.064\\
12	11852648.448\\
13	11087196.16\\
14	10745864.192\\
15	10586992.64\\
};
%\addlegendentry{\large $K:32$, LGSA: $v_{\text{max}}\text{: 4}$}

\addplot  [color=mycolor1, line width=2.0pt, dashed, mark=*, mark size=2.5000pt, mark options={solid, mycolor1}] table[row sep=crcr]{%
1	2051245.89568\\
2	2087084.15488\\
3	2174467.9936\\
4	2288286.76096\\
5	2446897.42848\\
6	2664899.51232\\
7	2970140.94848\\
8	3423407.872\\
9	4631535.19616\\
10	4066140.39552\\
11	3118653.44\\
12	2483097.6\\
13	2235985.92\\
14	2121064.448\\
15	2062118.912\\
16	2029953.024\\
17	2019635.2\\
18	2014945.28\\
19	2011136\\
20	2011136\\
};
%\addlegendentry{SSA 8}

\addplot [color=mycolor2, line width=2.0pt, dashed, mark=square*, mark size=2.5000pt, mark options={solid, mycolor2}]
  table[row sep=crcr]{%
1	4675492.92544\\
2	4900820.02944\\
3	5380669.31712\\
4	5999042.9696\\
5	6814914.94912\\
6	7865022.19776\\
7	9203625.30816\\
8	10978615.58272\\
9	13786867.85536\\
10	9311222.1696\\
11	6802042.88\\
12	5306892.288\\
13	4848414.72\\
14	4705730.56\\
15	4632113.152\\
16	4587171.84\\
17	4564615.168\\
18	4550656\\
19	4550656\\
20	4550656\\
};
%\addlegendentry{SSA 16}

\addplot [color=mycolor3, line width=2.0pt, dashed, mark=triangle*, mark size=3.5000pt, mark options={solid, mycolor3}]
  table[row sep=crcr]{%
1	11235613.4912\\
2	13219881.12384\\
3	16819175.99744\\
4	21318857.40032\\
5	26169306.9312\\
6	30711849.28768\\
7	37698764.8\\
8	40886484.992\\
9	33274740.736\\
10	18607386.624\\
11	13584486.4\\
12	11834064.896\\
13	11084025.856\\
14	10745864.192\\
15	10586992.64\\
16	10487173.12\\
17	10448363.52\\
18	10426667.008\\
19	10416128\\
20	10416128\\
};
%\addlegendentry{SSA 32}

\node(S_origin) at (axis cs:7.9, 4e7){};
\node(S_destination) at (axis cs:12,	4e7){};
\node(G_origin) at (axis cs:7.9, 1e9){};
\node(G_destination) at (axis cs:11,	1e9){};
\node(2_origin) at (axis cs:6.9, 1.2e8){};
\node(2_destination) at (axis cs:10.2,	1.2e8){};
\node(3_origin) at (axis cs:6.9, 3e8){};
\node(3_destination) at (axis cs:10.2,	3e8){};

 %Axis
\draw[-stealth,  line width=0.8mm, draw opacity=0.751] (S_origin)--(S_destination);
\draw[-stealth,  line width=0.8mm, draw opacity=0.751] (G_origin)--(G_destination);
 %Axis
\draw[-stealth,  line width=0.8mm, draw opacity=0.751] (2_origin)--(2_destination);
\draw[-stealth,  line width=0.8mm, draw opacity=0.751] (3_origin)--(3_destination);

% Node
\node[black] (SSA) at (axis cs:12.5, 4e7){\Large SSA};
\node[black](GSA) at (axis cs:11.5,	1e9){\Large GSA};
\node[black] (2L) at (axis cs:12, 1.2e8){\Large LGSA, $v_{\text{max}}: 2$};
\node[black](3L) at (axis cs:12,	3e8){\Large LGSA, $v_{\text{max}}: 4$};
%\node[red] (r) at (axis cs: 0.32, 0.55){\LARGE $r_m$};
\end{axis}
\end{tikzpicture}%}
    \caption{The average complex multiplications required to compute outage probability, $P_{\text{out}}$, for an equal-rate system. The dashed line represents \ac{ssa} and the solid line is used for \ac{lgsa} and \ac{gsa}. The results are obtained for a \ac{rec} with $M=64$ and $q_{\text{max}}=2$.}
    \label{fig:ER_compCount}
\end{figure}
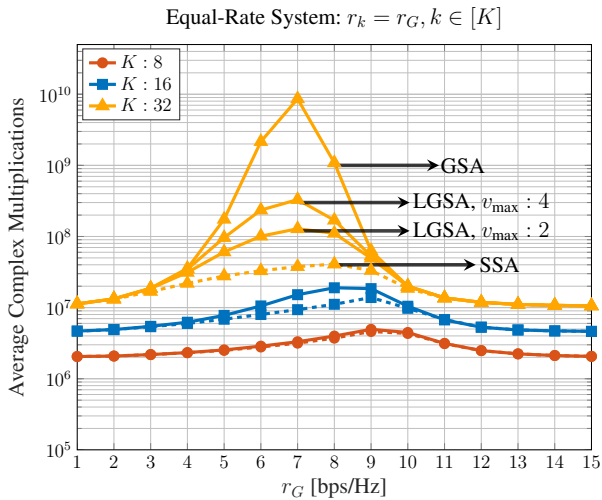

\subsection{Computational Complexity of the Proposed Algorithms}
Next, we evaluate the computational complexity of the proposed algorithms based on the average number of complex multiplications required for a given channel matrix and rate point. This is computed based on \cite{matrix_book}, where the multiplication of a $M\times K$ matrix with a $K\times M$ matrix costs $M^2K$ multiplications and the inverse of an $M\times M$ matrix costs $M^3$ multiplications.

Figure~\ref{fig:ER_compCount} shows the results for an equal-rate system and Fig.~\ref{fig:VR_compCount} presents the results for a variable-rate system.
%the average number of complex multiplications required to compute $P_{\text{out}}$ for a given channel matrix and $r_G$ , where $r_k=r_G, \forall k \in [K]$. 
In both cases the computational complexity increases with $K$. %This is due to the fact that for larger values of $K$, more aircraft had to be decoded together as a group. 
%In Fig.~\ref{fig:ER_compCount}, there is a notable peak at $r_G=\qty{7}{bps/Hz}$ for \ac{gsa} when $K=32$. 
\rewfour{In Fig.~\ref{fig:ER_compCount}, it can be observed that the computational complexity of the \ac{gsa} varies significantly with $r_G$.
In particular,} there is a notable peak at $r_G=\qty{7}{bps/Hz}$ for \ac{gsa} when $K=32$.
In this specific case, many users can be decoded under the interference of the outage set. %As shown in Fig.~\ref{fig:03_P_out}, for $K=1$, the $P_{\text{out}}$ is under $0.05$ at $r_G=\qty{7}{bps/Hz}$. 
However, only a few aircraft can achieve \qty{7}{bps/Hz} under the interference of their constraint set. This results in many aircraft remaining in $L$ during the \textit{GreedyGroup} phase of Algorithm~\ref{al:step2} (\ac{gsa}), which significantly increases the computational load. Nevertheless, for $r_G=\qty{7}{bps/Hz}$ with \ac{gsa} and $K=32$, the equal-rate system ends up with a $P_{\text{out}}$ of about $0.5$  (see Fig.~\ref{fig:ER_vMax}). 
The computational complexity for  $r_G$ values that achieve a  $P_{\text{out}}$ of less than $0.01$ is particularly relevant to our analysis. In these cases, the average number of complex multiplications per channel matrix remains below $1\cdot10^{8}$ in Fig.~\ref{fig:ER_compCount}, \rewfour{ and the computational complexity of the \ac{gsa} approaches that of the \ac{ssa}.}
\rewfour{A similar trend is observed in Fig.~\ref{fig:VR_compCount}, where the computational complexity of the \ac{gsa} closely matches that of the \ac{ssa} for $P_{\text{out}}\leq 0.01$, corresponding to  $K\leq 32$.
 Therefore, it can be argued that  the computational complexity of the \ac{gsa} does not differ significantly from that of the \ac{ssa} within the operating ranges where the \ac{gsa} achieves low outage probabilities. Nevertheless, the performance advantage of the \ac{gsa} over the \ac{ssa} in these regions remains substantial.}
Moreover, in Fig.~\ref{fig:VR_compCount}, the average number of complex multiplications remains below $1\cdot10^{8}$ for $K\leq 36$.

If we had to compute the number of aircraft in outage using a brute-force approach for $K=32$, according to \eqref{eq:comp_out} we would have to evaluate \eqref{eq:con_GSA} $1.8\cdot10^{15}$ times for a given channel matrix and a rate point. For $M=64$, this would cost about $1.2\cdot10^{21}$ complex multiplications. For example, suppose  we use a very powerful processor with a clock rate of \qty{9}{GHz} and each complex multiplication takes one clock cycle.  Then it would take us about 3000 years to compute the number of aircraft in outage  for a given channel matrix and a rate point with a brute-force approach. 
By comparison, the calculation of the $1\cdot10^{8}$ complex multiplications on the same processor would take about \qty{0.01}{s}.
%The calculation of the $1\cdot10^{10}$ complex multiplication on the same processor would be about \qty{1}{s} and for $1\cdot10^{8}$ multiplications it would be \qty{0.01}{s}.

\begin{comment}
\rewthree{\subsection{Discussion on Real-World Applications and Challenges}}
\rewthree{In this paper, we consider perfect channel knowledge at the receiver, a single antenna element at the aircraft and a single \ac{noma} cluster. In real-world applications, these assumptions might seem very simple. Hence, in this section we discuss how real-world application of such a system would work }
\end{comment}

\begin{comment}
Required complex multiplications for K=32, M=64:  1.1334e+21, 32!:2.6313e+35

Required complex multiplications for K=16, M=64: 2.4379e+13, 

Required complex multiplications for K=8, M=64: 3.3111e+09

The highest base clock rate on a production processor is the i9-14900KS, clocked at 6.2 GHz, which was released in Q1 2024.
\end{comment}

\section{Conclusion and Outlook}\label{sec:discussion}
\acresetall 

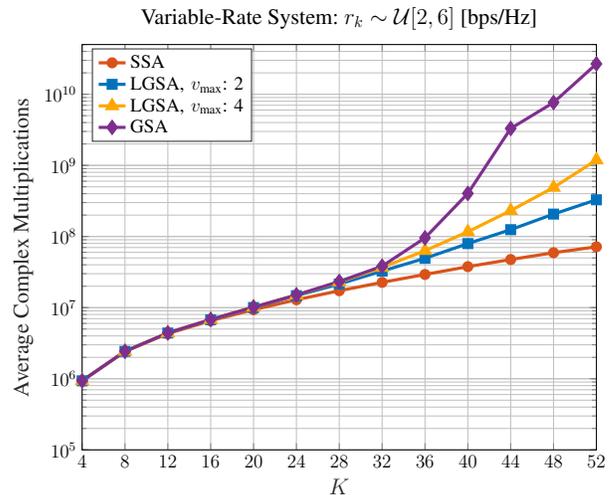
\begin{figure}[t]
    \centering
    \resizebox{0.45\textwidth}{!}{% This file was created by matlab2tikz.
%
%The latest updates can be retrieved from
%  http://www.mathworks.com/matlabcentral/fileexchange/22022-matlab2tikz-matlab2tikz
%where you can also make suggestions and rate matlab2tikz.
%
\definecolor{mycolor1}{rgb}{0.85098,0.32549,0.09804}%kirmizi
\definecolor{mycolor2}{rgb}{0.00000,0.44706,0.74118}%mavi
\definecolor{mycolor3}{rgb}{1.0, 0.65, 0.0}
\definecolor{mycolor4}{rgb}{0.49412,0.18431,0.55686}%mor
\definecolor{mycolor5}{rgb}{0.46667,0.67451,0.18824}% Yesil
\pgfplotsset{compat=1.10}
\begin{tikzpicture}

\begin{axis}[%
width=4.521in,
height=3.566in,
at={(0.758in,0.481in)},
scale only axis,
title  ={\Large Variable-Rate System: $r_k\sim\mathcal{U}[2, 6]$ [bps/Hz]},
xmin=4,
xmax=52,
xtick={4,8,12,...,64},
xticklabel style = {font=\large},
xlabel style={font=\color{white!15!black}},
xlabel={\Large $K$},
ymode=log,
yminorticks=true,
ymin=1e5,
ymax=5e10,
ylabel style={font=\color{white!15!black}},
yticklabel style = {font=\large},
ylabel={\Large Average Complex Multiplications},
axis background/.style={fill=white},
xmajorgrids,
yminorgrids,
ymajorgrids,
legend style={at={(0.33,0.99)}, legend cell align=left, align=left, draw=white!15!black}
]

\addplot [color=mycolor1, line width=2.0pt, mark=*, mark size=2.5000pt, mark options={solid, mycolor1}]
  table[row sep=crcr]{%
4	934604.8\\
8	2421186.56\\
12	4311830.528\\
16	6531940.352\\
20	9359454.208\\
24	12886327.296\\
28	17275047.936\\
32	22657073.152\\
36	29262897.152\\
40	37672534.016\\
44	47479009.28\\
48	59166306.304\\
52	71654203.392\\
56	85551640.576\\
60	0\\
64	0\\
};
\addlegendentry{\large SSA}

\addplot [color=mycolor2, line width=2.0pt, mark=square*, mark size=2.5000pt, mark options={solid, mycolor2}]
  table[row sep=crcr]{%
4	935911.424\\
8	2429177.856\\
12	4395048.96\\
16	6755450.88\\
20	10033008.64\\
24	14729748.48\\
28	21463896.064\\
32	32623562.752\\
36	49406644.224\\
40	79484854.272\\
44	124818087.936\\
48	206588088.32\\
52	328287420.416\\
56	 0.5156e9\\
60	0.7891e9\\
64	1.0860e9\\
};
\addlegendentry{\large LGSA, $v_{\text{max}}\text{: 2}$}
\begin{comment}
\addplot [color=mycolor3, line width=2.0pt,  mark=triangle*, mark size=3.5000pt, mark options={solid, mycolor3}] table[row sep=crcr]{%
4	935911.424\\
8	2429190.144\\
12	4406882.304\\
16	6800543.744\\
20	10129076.224\\
24	14961115.136\\
28	22296723.456\\
32	35039318.016\\
36	56518406.144\\
40	97975980.032\\
44	166311198.72\\
48	317673246.72\\
52	612806430.72\\
56	1.1261e9\\
60	2.1423e9\\
64	4.1168e9\\
};
\addlegendentry{\large $v_{\text{max}}\text{: 3}$}
\end{comment}

\addplot [color=mycolor3, line width=2.0pt, mark=triangle*, mark size=3.5000pt, mark options={solid, mycolor3}]
  table[row sep=crcr]{%
4	935911.424\\
8	2429190.144\\
12	4408401.92\\
16	6805446.656\\
20	10153152.512\\
24	15043354.624\\
28	22901776.384\\
32	36673716.224\\
36	63242645.504\\
40	116417998.848\\
44	231278948.352\\
48	489203363.84\\
52	1189409714.176\\
56	0.2867e10\\
60	0.6450e10\\
64	1.6427e10\\
};
\addlegendentry{\large LGSA, $v_{\text{max}}\text{: 4}$}

\addplot [color=mycolor4, line width=2.0pt, mark=diamond*, mark size=3.5000pt, mark options={solid, mycolor4}]
  table[row sep=crcr]{%
4	935911.424\\
8	2429190.144\\
12	4408401.92\\
16	6805467.136\\
20	10158829.568\\
24	15072776.192\\
28	23298514.944\\
32	38279356.416\\
36	95697424.384\\
40	403006050.304\\
44	3301087170.56\\
48	7654100545.536\\
52	26732584894.464\\
56	0\\
60	0\\
64	0\\
};
\addlegendentry{\large GSA}

\end{axis}
\end{tikzpicture}%}
    \caption{The average complex multiplications required to compute outage probability, $P_{\text{out}}$, for a variable-rate system, where $r_k$ is uniformly distributed between \qty{2}{} and \qty{6}{bps/Hz}. The results are obtained for a \ac{rec} with $M=64$ and $q_{\text{max}}=2$.}
    \label{fig:VR_compCount}
\end{figure}

In this paper, %we investigate the multiuser multiple antenna systems using the \ac{noma} concept while employing a realistic \ac{ag} channel model. 
the information outage probability of the quasi-static \ac{noma}  channel under partial decoding is studied, where some aircraft can be decoded successfully while others experience outage. Our aim is minimizing the outage probability of the individual users (aircraft).
To this end, we propose three algorithms %to minimize the the outage probability of the individual users (aircraft) 
for \ac{sic} decoders and joint group decoders. %These algorithms not only compute the upper bound on the minimum achievable outage probability for a given decoding approach, but also derive practical strategies for the implementation of these decoding strategies.

The first algorithm is the \ac{ssa}, which identifies the optimal decoding order that minimizes the outage probability for \ac{sic} decoders. We demonstrate that \ac{ssa} outperforms the \ac{sic} decoding order proposed in \cite{order2}, and, when considering a variable-rate system, \ac{ssa} also surpasses the solution proposed in \cite{vblast}.
%. Furthermore, we prove that the \ac{sic} decoding order proposed in V-BLAST \cite{vblast} achieves an optimal solution when all aircraft transmit at the same rate. However,
The second algorithm is \ac{gsa}, which uses \ac{sic} decoders together with joint group decoders to find the minimum achievable outage probability for indidividual aircraft. While \ac{gsa} significantly improves spectral efficiency compared to \ac{ssa}, a key concern is its feasibility in real-world applications, as joint decoding of a large group of aircraft might not be viable at the time being. To address this concern, we propose a suboptimal solution: the \ac{lgsa}, which restricts the number of aircraft to be decoded jointly. 
Even when the largest group of aircraft is limited to two, \ac{lgsa} significantly reduces the outage probability compared to the  \ac{sic}-only decoders such as \ac{ssa}.

As a result, \ac{lgsa} shows great promise.  In our future work, we will combine the receiver strategies for multiuser joint decoding proposed in \cite{joint1, joint2} with the decoding order obtained through \ac{lgsa}.

\section*{ACKNOWLEDGMENT}
The authors acknowledge Alexander Steingass for his important contributions to the conceptualization and initial development of this work. Although he passed away prior to submission, his scientific insight continues to influence this study. We remember him with respect and appreciation.

%The preferred spelling of the word ``acknowledgment'' in American English is  without an ``e'' after the ``g.'' Use the singular heading even if you have  many acknowledgments. Avoid expressions such as ``One of us (S.B.A.) would  like to thank \textellipsis .'' Instead, write ``F. A. Author thanks \textellipsis .'' In most cases, sponsor and financial support acknowledgments are placed in the unnumbered footnote on the first page, not here.

\appendices
 \section{\rewtwo{Impact of the \textit{PruneSubsets} Phase on the Computational Complexity of Algorithm \ref{al:step2}}}\label{ap:q_max}
 \rewtwo{In this section, we analyze how the \textit{PruneSubsets} phase impacts the computational complexity of Algorithm~\ref{al:step2} (\ac{gsa}) and justify our choice of setting $q_{\text{max}}=2$. To this end, we plot the average number of complex multiplications for different values of $q_{\text{max}}$. We also include the case in which the \textit{PruneSubsets} phase is omitted; that is, \ac{gsa} proceeds directly to the \textit{GreedyGroup} phase after \ac{ssa}. In the plots, this configuration is labeled as “No \textit{PruneSubset}”.}
 
 \rewtwo{It is important to emphasize that, for all values of $q_{\text{max}}$, including the case where the \textit{PruneSubsets} phase is omitted, the resulting outage probability remains identical. These configurations only influence the computational complexity of Algorithm~\ref{al:step2}; they do not affect the optimality of the final solution, since optimality is guaranteed in the \textit{GreedyGroup} phase (see Appendix~\ref{ap:gsa}). }
 
 \rewtwo{ In Fig.~\ref{fig:R2_C2_K16}, we plot the computational complexity for an equal-rate system with $K=16$ and $M=64$. The results show that the computational complexity of the algorithm remains relatively stable across all values of $q_{\text{max}}$. In contrast, when the \textit{PruneSubsets} phase is omitted, the computational complexity increases noticeably for $r_G > 6$~bps/Hz. Therefore, we conclude that the \textit{PruneSubsets} phase effectively reduces computational complexity of Algorithm~\ref{al:step2} in this case.}
 
% \rewtwo{  In Fig.~\ref{fig:R2_C2_K16}, we plot the computational complexity for an equal-rate system with $K=16$ and $M=64$. Meanwhile, Fig.~\ref{fig:R2_C2_K32} presents the computational complexity  for an equal-rate system with $K=32$ and $M=64$. As shown in Fig.~\ref{fig:R2_C2_K16}, the computational complexity of the algorithm remains relatively stable across all values of $q_{\text{max}}$. In contrast, when the \textit{PruneSubsets} phase is omitted, the computational complexity increases noticeably for $r_G > 6$ bps/Hz. This result shows that the \textit{PruneSubsets} phase effectively reduces computational complexity.}

\begin{figure}
     \centering
     \resizebox{0.4\textwidth}{!}{% This file was created by matlab2tikz.
%
%The latest updates can be retrieved from
%  http://www.mathworks.com/matlabcentral/fileexchange/22022-matlab2tikz-matlab2tikz
%where you can also make suggestions and rate matlab2tikz.
%
\definecolor{mycolor4}{rgb}{0.49412,0.18431,0.55686}% Mor
\definecolor{mycolor2}{rgb}{0.85098,0.32549,0.09804}% Turuncu
\definecolor{mycolor3}{rgb}{0.00000,0.44706,0.74118}% Mavi
\definecolor{mycolor1}{rgb}{1.0, 0.65, 0.0}% Sari
\definecolor{mycolor9}{rgb}{0.30196,0.74510,0.93333}% Cyan
\definecolor{mycolor6}{rgb}{0.63529,0.07843,0.18431}% Bordo
\definecolor{mycolor7}{rgb}{0.46667,0.67451,0.18824}% Yesil
\definecolor{mycolor8}{rgb}{0.9, 0.17, 0.31} %Pembe
\definecolor{mycolor5}{rgb}{0.0, 0.0, 1.0} %Mavi

\pgfplotsset{compat=1.10}
\begin{tikzpicture}

\begin{axis}[%
width=4.521in,
height=3.566in,
at={(0.758in,0.481in)},
scale only axis,
title  ={\LARGE Equal-Rate System: $r_k=r_G, k\in [K]$},
xmin=1,
xmax=10,
xtick={1, 2, 3, ..., 15},
xticklabel style = {font=\Large},
xlabel style={font=\color{white!15!black}},
xlabel={\LARGE $r_G$ [bps/Hz]},
ymode=log,
yminorticks=true,
ymin=4e6,
ymax=9e7,
ylabel style={font=\color{white!15!black}},
yticklabel style = {font=\Large},
ylabel={\LARGE Average Complex Multiplications},
axis background/.style={fill=white},
xmajorgrids,
yminorgrids,
ymajorgrids,
legend style={at={(0.39,0.99)}, legend cell align=left, align=left, draw=white!15!black}
]

\addplot [color=mycolor1, line width=2.5pt,  mark options={solid, mycolor1}] table[row sep=crcr]{%
1	4673118.208\\
2	4913758.208\\
3	5470883.84\\
4	6296952.832\\
5	7900258.304\\
6	11830386.688\\
7	21210382.336\\
8	53404442.624\\
9	59176394.752\\
10	11584528.384\\
};
\addlegendentry{\Large No \textit{PruneSubset}}

\addplot [color=mycolor2, line width=2.5pt, mark options={solid, mycolor2}]
  table[row sep=crcr]{%
1	4673118.208\\
2	4914769.92\\
3	5481271.296\\
4	6324240.384\\
5	8007323.648\\
6	10993844.224\\
7	15745318.912\\
8	19478777.856\\
9	18704314.368\\
10	10521300.992\\
};
\addlegendentry{\Large $q_{max}: 2$}

\addplot [color=mycolor3, line width=2.5pt,  mark options={solid, mycolor3}]
  table[row sep=crcr]{%
1	4673118.208\\
2	4914905.088\\
3	5486632.96\\
4	6342217.728\\
5	8103034.88\\
6	11369914.368\\
7	15527231.488\\
8	18278215.68\\
9	18644672.512\\
10	10458886.144\\
};
\addlegendentry{\Large  $q_{max}: 3$}

\addplot [color=mycolor4, line width=2.5pt,  mark options={solid, mycolor4}]
  table[row sep=crcr]{%
1	4673118.208\\
2	4914987.008\\
3	5488771.072\\
4	6353604.608\\
5	8170340.352\\
6	11561562.112\\
7	15496282.112\\
8	18454781.952\\
9	18823168\\
10	10477547.52\\
};
\addlegendentry{\Large  $q_{max}: 4$}

\addplot [color=mycolor5, line width=2.5pt,  mark options={solid, mycolor5}]
  table[row sep=crcr]{%
1	4673118.208\\
2	4915007.488\\
3	5489508.352\\
4	6359441.408\\
5	8180502.528\\
6	11894370.304\\
7	15942062.08\\
8	18534305.792\\
9	18954928.128\\
10	10483228.672\\
};
\addlegendentry{\Large  $q_{max}: 5$}

\addplot  [color=mycolor6, line width=2.5pt, mark options={solid, mycolor6}] table[row sep=crcr]{%
1	4673118.208\\
2	4915007.488\\
3	5489729.536\\
4	6361751.552\\
5	8204529.664\\
6	11992948.736\\
7	16273891.328\\
8	18629742.592\\
9	19035967.488\\
10	10483908.608\\
};
\addlegendentry{\Large  $q_{max}: 6$}

\addplot [color=mycolor8, line width=2.5pt,  mark options={solid, mycolor8}]
  table[row sep=crcr]{%
1	4673118.208\\
2	4915007.488\\
3	5489758.208\\
4	6362390.528\\
5	8180772.864\\
6	12235767.808\\
7	16436158.464\\
8	18676248.576\\
9	19073187.84\\
10	10483908.608\\
};
\addlegendentry{\Large  $q_{max}: 8$}

\addplot [color=mycolor9, line width=2.5pt, mark options={solid, mycolor9}]
  table[row sep=crcr]{%
1	4673118.208\\
2	4915007.488\\
3	5489758.208\\
4	6362390.528\\
5	8181592.064\\
6	12270309.376\\
7	16465489.92\\
8	18677014.528\\
9	19073896.448\\
10	10483908.608\\
};
\addlegendentry{\Large  $q_{max}: 10$}

\end{axis}
\end{tikzpicture}%}
     %\caption{all antenna array conf, spacing, num elements: RG=4  impact of receiver}
     \caption{\rewtwo{Impact of the \textit{PruneSubsets} Phase on the Computational Complexity of Algorithm \ref{al:step2} for an equal-rate system with $K=16$, $M=64$}}
     \label{fig:R2_C2_K16}
\end{figure}

\begin{figure}
     \centering
     \resizebox{0.4\textwidth}{!}{% This file was created by matlab2tikz.
%
%The latest updates can be retrieved from
%  http://www.mathworks.com/matlabcentral/fileexchange/22022-matlab2tikz-matlab2tikz
%where you can also make suggestions and rate matlab2tikz.
%
\definecolor{mycolor4}{rgb}{0.49412,0.18431,0.55686}% Mor
\definecolor{mycolor2}{rgb}{0.85098,0.32549,0.09804}% Turuncu
\definecolor{mycolor3}{rgb}{0.00000,0.44706,0.74118}% Mavi
\definecolor{mycolor1}{rgb}{1.0, 0.65, 0.0}% Sari
\definecolor{mycolor9}{rgb}{0.30196,0.74510,0.93333}% Cyan
\definecolor{mycolor6}{rgb}{0.63529,0.07843,0.18431}% Bordo
\definecolor{mycolor7}{rgb}{0.46667,0.67451,0.18824}% Yesil
\definecolor{mycolor8}{rgb}{0.9, 0.17, 0.31} %Pembe
\definecolor{mycolor5}{rgb}{0.0, 0.0, 1.0} %Mavi

\pgfplotsset{compat=1.10}
\begin{tikzpicture}

\begin{axis}[%
width=4.521in,
height=3.566in,
at={(0.758in,0.481in)},
scale only axis,
title  ={\LARGE Equal-Rate System: $r_k=r_G, k\in [K]$},
xmin=1,
xmax=10,
xtick={1, 2, 3, ..., 15},
xticklabel style = {font=\Large},
xlabel style={font=\color{white!15!black}},
xlabel={\LARGE $r_G$ [bps/Hz]},
ymode=log,
yminorticks=true,
ymin=1e7,
ymax=1e12,
ylabel style={font=\color{white!15!black}},
yticklabel style = {font=\Large},
ylabel={\LARGE Average Complex Multiplications},
axis background/.style={fill=white},
xmajorgrids,
yminorgrids,
ymajorgrids,
legend style={at={(0.38,0.99)}, legend cell align=left, align=left, draw=white!15!black}
]

\addplot [color=mycolor1, line width=2.5pt,  mark options={solid, mycolor1}] table[row sep=crcr]{%
1	11271446.528\\
2	13298610.176\\
3	18839572.48\\
4	35556139.008\\
5	181687128.064\\
6	9715730583.552\\
7	311666400501.76\\
8	828770244280.32\\
9	323935248015.36\\
10	26253516.8\\
};
\addlegendentry{\Large No \textit{PruneSubset}}

\addplot [color=mycolor2, line width=2.5pt, mark options={solid, mycolor2}]
  table[row sep=crcr]{%
1	11271446.528\\
2	13314154.496\\
3	19044020.224\\
4	37145612.288\\
5	181242707.968\\
6	2179319341.056\\
7	12442417180.672\\
8	2238738579.456\\
9	64222412.8\\
10	20100501.504\\
};
\addlegendentry{\Large $q_{max}: 2$}

\addplot [color=mycolor3, line width=2.5pt,  mark options={solid, mycolor3}]
  table[row sep=crcr]{%
1	11271446.528\\
2	13336813.568\\
3	19494227.968\\
4	41993867.264\\
5	181033660.416\\
6	839365189.632\\
7	4597903859.712\\
8	440652599.296\\
9	55879692.288\\
10	20069044.224\\
};
\addlegendentry{\Large  $q_{max}: 3$}

\addplot [color=mycolor4, line width=2.5pt,  mark options={solid, mycolor4}]
  table[row sep=crcr]{%
1	11271446.528\\
2	13361438.72\\
3	20348329.984\\
4	53434654.72\\
5	266158202.88\\
6	770791383.04\\
7	1948346413.056\\
8	345952866.304\\
9	53553197.056\\
10	20092891.136\\
};
\addlegendentry{\Large  $q_{max}: 4$}

\addplot [color=mycolor5, line width=2.5pt,  mark options={solid, mycolor5}]
  table[row sep=crcr]{%
1	11271446.528\\
2	13381386.24\\
3	21735161.856\\
4	76506312.704\\
5	493485670.4\\
6	1197445828.608\\
7	808443248.64\\
8	342074769.408\\
9	53656780.8\\
10	20103163.904\\
};
\addlegendentry{\Large  $q_{max}: 5$}

\addplot  [color=mycolor6, line width=2.5pt, mark options={solid, mycolor6}] table[row sep=crcr]{%
1	11271446.528\\
2	13393965.056\\
3	23638249.472\\
4	115925925.888\\
5	928183209.984\\
6	2212717670.4\\
7	1034972516.352\\
8	291355799.552\\
9	54064795.648\\
10	20105973.76\\
};
\addlegendentry{\Large  $q_{max}: 6$}

\addplot [color=mycolor8, line width=2.5pt,  mark options={solid, mycolor8}]
  table[row sep=crcr]{%
1	11271446.528\\
2	13402148.864\\
3	27898966.016\\
4	248516001.792\\
5	2828811227.136\\
6	7602858037.248\\
7	2399208079.36\\
8	629899448.32\\
9	54660427.776\\
10	20058865.664\\
};
\addlegendentry{\Large  $q_{max}: 8$}

\addplot [color=mycolor9, line width=2.5pt, mark options={solid, mycolor9}]
  table[row sep=crcr]{%
1	11271446.528\\
2	13402632.192\\
3	30567608.32\\
4	400583143.424\\
5	6047633747.968\\
6	19181230354.432\\
7	4724808986.624\\
8	1455823474.688\\
9	54781546.496\\
10	20058865.664\\
};
\addlegendentry{\Large  $q_{max}: 10$}

\end{axis}
\end{tikzpicture}%}
     %\caption{all antenna array conf, spacing, num elements: RG=4  impact of receiver}
     \caption{\rewtwo{Impact of the \textit{PruneSubsets} Phase on the Computational Complexity of Algorithm \ref{al:step2} for an equal-rate system with $K=32$, $M=64$}}
     \label{fig:R2_C2_K32}
\end{figure}

 \rewtwo{Figure~\ref{fig:R2_C2_K32} presents the computational complexity  for an equal-rate system with $K=32$ and $M=64$. In this case, the value of $q_{\text{max}}$ that yields the lowest computational complexity depends on $r_G$.
For $r_G \leq 5$, we observe that lower values of $q_{\text{max}}$ lead to lower computational complexity. As a result, the best option is to omit the \textit{PruneSubset} phase entirely. The second-best option is $q_{max}=2$, which yields a very similar computational complexity, with only a slight increase compared to omitting the \textit{PruneSubset} phase.
On the other hand, for $r_G \geq 6$, the algorithm without the \textit{PruneSubset} phase becomes computationally very demanding, while with $q_{max}=2$, the computational complexity of the system is already significantly reduced. As $q_{\text{max}}$ increases from 2 to 5, the computational complexity continues to decrease. However, as $q_{\text{max}}$ increases further from 5 to 10, the computational complexity begins to rise again. Consequently, for $r_G \leq 5$, the optimal approach is to omit the \textit{PruneSubset} phase, while for $r_G \geq 6$, the optimal choice is $q_{max}=5$.}

\rewtwo{Figures~\ref{fig:R2_C2_K16} and~\ref{fig:R2_C2_K32} demonstrate that the value of $q_{\text{max}}$ that minimizes the computational complexity in Algorithm~\ref{al:step2} depends on the parameters $K$, $M$, and the transmission rates of the aircraft, $r_k$.
For an equal-rate system with $K=32$ and $M=64$, we highlight that for $r_G \leq 5$~bps/Hz, the outage probability remains below 0.06. For $r_G\geq 6$~bps/Hz, the outage probability increases significantly. 
Since practical communication systems are typically designed to operate with low outage probabilities (e.g., below 0.05), it is reasonable to choose the value of $q_{\text{max}}$ that minimizes the computational complexity within this operating range.
At the same time, it is important to select a $q_{\text{max}}$ that still permits  evaluation of outage probabilities at higher transmission rates. Setting $q_{\text{max}}=2$ provides an effective compromise: it significantly reduces computational complexity at transmission rates that result in low outage probabilities, while still enabling performance analysis across a broader range of transmission rates.}
 
 %%%%%%%%%%%%%%%%%%%%%%%%%%%%%%%%%%%%%%%%%%%%%%%%%%%%%%%%%%%%%%%%%%%%%%%%%%%%%%

 \section{Proof of the Optimality of Algorithm \ref{al:step2}}\label{ap:gsa}
Algorithm \ref{al:step2}, \ac{gsa}, solves the problem given in  \eqref{eq:probGSA} by finding the maximal set $S^*$ that satisfies the condition given in~\eqref{eq:con_GSA}.
\rewone{The algorithm produces three sets: $S^*$, $\hat{S}$, and $L$, where $S^*$ contains the aircraft that can be successfully decoded, and $\hat{S}$ includes the aircraft that are definitely in outage, as justified in the description of the algorithm in Section~\ref{sec:algs}.
%It concludes with three sets: $S^*$, $\hat{S}$, and $L$. 
%In Section \ref{sec:algs}, in the explanation of the algorithm,  it is justified that the aircraft in $\hat{S}$ are definitively in outage. Additionally, it is verified that the aircraft in $S^*$  can be successfully decoded.
% In this case, to show that \ac{gsa} identifies the largest set of aircraft that can be successfully decoded, it must be proved that all aircraft remaining in $L$, if any, are definitely in outage.
To demonstrate the optimality of Algorithm~\ref{al:step2}, it is sufficient to prove that all aircraft in $L$ are in outage, regardless of the decoding order or the joint group decoding strategy.}

\rewone{The algorithm terminates in two cases: 1) When \mbox{$|L|=0$}, where set $L$ is empty. %; in this case, it trivially follows that $S^*$ is the optimal set. 
2) When \mbox{$|L|<v$}, where $v$ is the group-size in joint group decoding. Since it is not possible to form a group of users from $L$ with a size that exceeds $|L|$, once $v$ becomes larger than $|L|$, the algorithm terminates.
In the case when  \mbox{$|L|=0$}, it follows trivially that $S^*$ is the optimal set. Therefore, we focus on $1\leq|L|$.}
To this end, before termination, the algorithm ensures that none of the combinations of aircraft in $L$ satisfy the condition given in line~\ref{gsa:line:constraint} of Algorithm~\ref{al:step2}. This means 
\rewone{\begin{equation}\label{eq:appB1}
    \not\exists\, C \subseteq L \quad \text{such that} \quad 
\forall\, S \subseteq C,\ 
\sum_{k\in S} r_{k} \le R_{C,S}^{T_C} \text{ .}
\end{equation}}
\rewone{In what follows, we prove that when the condition in \eqref{eq:appB1} holds, then all the aircraft in $L$ are definitely in outage.}
\begin{lemma}
\rewone{
Let sets $S^*$, $\hat{S}$, and $L$ form a partition of $\{1, 2,\dots,K\}$. The set $S^*$ contains the aircraft that are successfully decoded and the set $\hat{S}$ denotes the outage set, i.e., the aircraft in $\hat{S}$ are certainly in outage. % and $L$ is the set of aircraft whose decoding status is yet to be determined. 
Let $C\subseteq L$ and let \mbox{$T_C=(L\cup\hat{S})\symbol{92}\{C\}$} denote the set containing all aircraft in $L$ and $\hat{S}$ except $C$. Moreover,  define $R_{C,S}^{T_C}$ as the achievable sum rate of aircraft in $S$, where $S\subseteq C$, under the interference of the aircraft in $T_C$. Given these definitions, if \eqref{eq:appB1} holds, then all aircraft in $L$ are in outage. }
\end{lemma}
\begin{proof}
\rewone{The statement can be proved by contradiction. Suppose that there exists a subset of aircraft $C \subseteq L$ that can be decoded while treating the aircraft in $T_C$ as interference. Recall that $R^{T_C}_{C,S}$ is the mutual information between the transmit signals and the received signal of the set $S$, conditioned on the signals in \mbox{$T_C=(L\cup\hat{S})\symbol{92}\{C\}$}. Decodability of the set $C$ under these conditions implies that the rate point $\{r_k : k \in C\}$ is inside the MAC capacity region of the channel where all signals coming from aircraft in $S^*$ have been removed and all signals coming from aircraft in $T_C$ are treated as interference. Hence, the set $C$
must satisfy the following condition
\begin{equation}\label{eq:appB2}
\forall\, S \subseteq C,\ 
\sum_{k\in S} r_{k} \le R_{C,S}^{T_C} \text{ .}
\end{equation}
However, this is not possible since by \eqref{eq:appB1} such a subset \mbox{$C \subseteq L$} does not exist.
This means that there exists no subsets $C$ of $L$ such that all users in $C$ can be decoded by any possible decoding scheme.
This proves the statement.}
\end{proof}

\bibliographystyle{IEEEtran}
\bibliography{IEEEabrv,ref}
\newpage
%\section{Biography Section}
\vspace{-85pt}
\begin{IEEEbiography}[{\includegraphics[width=1in,height=1.25in,clip,keepaspectratio]{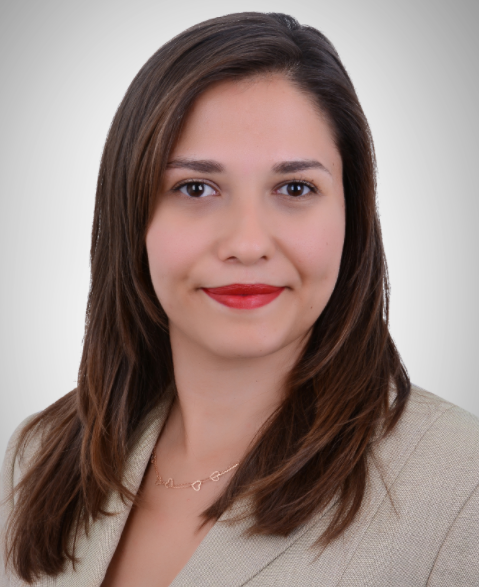}}]{Ayten Gürbüz}
was born in 1995. She received the B.Sc. in electrical and electronics engineering from Bilkent University in Ankara (Turkey) in 2018. She subsequesntly completed the M.Sc. in communications engineering program at the Technical University of Munich in January 2021.  She is currently a researcher at the German Aerospace Center (DLR) and a Ph.D. candidate at the Technical University of Berlin. Her research interests include the application of multiple antenna systems to aeronautical communications, with a particular interest in multiuser scenarios and \mbox{non-orthogonal multiple access (NOMA).}
\end{IEEEbiography}
\vspace{-90pt}
\begin{IEEEbiography}[{\includegraphics[width=1in,height=1.25in,clip,keepaspectratio]{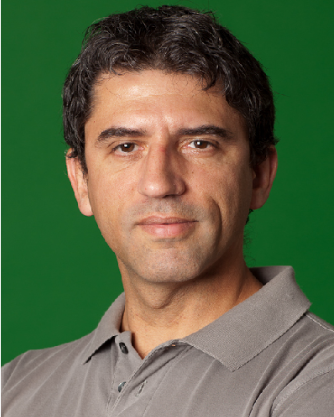}}]{Giuseppe Caire}
(Fellow, IEEE) was born in Turin in 1965. He received the B.Sc. degree in electrical engineering from Politecnico di Torino in 1990, the M.Sc. degree in electrical engineering from Princeton University in 1992, and the Ph.D. degree from Politecnico di Torino in 1994. 

He was a Post-Doctoral Research Fellow with European Space Agency (ESTEC), Noordwijk, The Netherlands, from 1994 to 1995; an Assistant Professor of telecommunications with Politecnico di Torino; an Associate Professor with the University of Parma, Italy; a Professor with the Department of Mobile Communications, Eurecom Institute, Sophia-Antipolis, France; and a Professor of electrical engineering with the Viterbi School of Engineering, University of Southern California, Los Angeles. He is currently an Alexander von Humboldt Professor with the Faculty of Electrical Engineering and Computer Science, Technical University of Berlin, Germany. His research interests include communications theory, information theory, and channel and source coding, with a particular focus on wireless communications. He received the Jack Neubauer Best System Paper Award from the IEEE Vehicular Technology Society in 2003, the IEEE Communications Society and Information Theory Society Joint Paper Award in 2004 and 2011, the Okawa Research Award in 2006, the Alexander von Humboldt Professorship in 2014, the Vodafone Innovation Prize in 2015, the ERC Advanced Grant in 2018, the Leonard G. Abraham Prize for the Best IEEE JOURNAL ON SELECTED AREAS IN COMMUNICATIONS Paper in 2019, the IEEE Communications Society Edwin Howard Armstrong Achievement Award in 2020, the 2021 Leibniz Prize of German National Science Foundation (DFG), and the CTTC Technical Achievement Award of the IEEE Communications Society in 2023. He has served on the Board of Governors of the IEEE Information Theory Society from 2004 to 2007, where he was an Officer from 2008 to 2013. He was the President of the IEEE Information Theory Society in 2011.
\end{IEEEbiography}
\begin{comment}
\vspace{-60pt}
\begin{IEEEbiography}[{\includegraphics[width=1in,height=1.25in,clip,keepaspectratio]{AuthorPhotos/author3.jpg}}]{Alexander Steingass}
received his engineer diploma in electrical engineering and communications at the University Ulm, Germany, in 1996. In 1997 he joined German Aerospace Center (DLR). He was involved in the early GALILEO signal design and performance analysis. He received his Phd. 2002 at the University of Essen/Germany. Since then he was involved in multipath measurements and modelling as well as in interference measurements and modelling. 
\end{IEEEbiography}
\end{comment}

\end{document}